\DeclareMathAlphabet{\mathbbold}{U}{bbold}{m}{n}
\renewcommand{\backref}[1]{}
\renewcommand{\backrefalt}[4]{%
\ifcase #1 %
\or
[p.\ #2]%
\else
[pp.\ #2]%
\fi}
\newtheorem{theorem}{Theorem}
\newtheorem{lemma}[theorem]{Lemma}
\newtheorem{definition}[theorem]{Definition}
\newcommand{\eq}[1]{\hyperref[eq:#1]{(\ref*{eq:#1})}}
\renewcommand{\sec}[1]{\hyperref[sec:#1]{Section~\ref*{sec:#1}}}
\newcommand{\thm}[1]{\hyperref[thm:#1]{Theorem~\ref*{thm:#1}}}
\newcommand{\lem}[1]{\hyperref[lem:#1]{Lemma~\ref*{lem:#1}}}
\newcommand{\prop}[1]{\hyperref[prop:#1]{Proposition~\ref*{prop:#1}}}
\newcommand{\cor}[1]{\hyperref[cor:#1]{Corollary~\ref*{cor:#1}}}
\newcommand{\fig}[1]{\hyperref[fig:#1]{Figure~\ref*{fig:#1}}}
\newcommand{\tab}[1]{\hyperref[tab:#1]{Table~\ref*{tab:#1}}}
\newcommand{\alg}[1]{\hyperref[alg:#1]{Algorithm~\ref*{alg:#1}}}
\newcommand{\app}[1]{\hyperref[app:#1]{Appendix~\ref*{app:#1}}}
\newcommand{\B}{\{0,1\}}
\newcommand{\BKK}{\textsc{BKK}}
\newcommand{\RBKK}{\textsc{RecBKK}}
\newcommand{\ksum}{k\textsc{-sum}}
\newcommand{\ANDOR}{\textsc{And-Or}}
\newcommand{\AND}{\textsc{And}}
\newcommand{\NAND}{\textsc{Nand}}
\newcommand{\OR}{\textsc{Or}}
\newcommand{\PARITY}{\textsc{Parity}}
\renewcommand{\th}[1]{$#1^\mathrm{th}$}
\DeclareMathOperator{\adeg}{\widetilde{\deg}}
\DeclareMathOperator{\bs}{bs}
\DeclareMathOperator{\RC}{RC}
\DeclareMathOperator{\Adv}{Adv^{\pm}}
\DeclareMathOperator{\crit}{crit}
\DeclareMathOperator{\poly}{poly}
\DeclareMathOperator{\Dom}{Dom}
\newcommand{\tO}{\widetilde{O}}
\newcommand{\tOmega}{\widetilde{\Omega}}
\newcommand{\tTheta}{\widetilde{\Theta}}
\newcommand{\CS}{\mathrm{CS}}
\newcommand{\CSp}{\mathrm{CS(\phi)}}
\definecolor{conj}{HTML}{C2C0BF}
\definecolor{open}{HTML}{A31F34}
\newcommand{\ct}[2]{%
\begin{tabular}[t]{@{}l@{}}#1\\#2\end{tabular}}
\newcommand{\co}[2]{%
\cellcolor{open!70}\begin{tabular}[t]{@{}l@{}}#1\\#2\end{tabular}}
\newcommand{\cc}[2]{%
\cellcolor{conj!70}\begin{tabular}[t]{@{}l@{}}#1\\#2\end{tabular}}
\newcommand{\smcite}[1]{{\small\cite{#1}}}
\begin{document}

\title{\vspace{-1em} \bfseries Separations in query complexity using cheat sheets}

\author{
Scott Aaronson\\
\small MIT\\
\small \texttt{aaronson@csail.mit.edu}
\and
Shalev Ben-David\\
\small MIT\\
\small \texttt{shalev@mit.edu}
\and
Robin Kothari \\
\small MIT\\
\small \texttt{rkothari@mit.edu}
}

\date{}
\maketitle

\begin{abstract}
We show a power $2.5$ separation between bounded-error randomized and quantum query complexity for a total Boolean function, refuting the widely believed conjecture that the best such separation could only be quadratic (from Grover's algorithm). \
We also present a total function with a power $4$ separation between quantum query complexity and approximate polynomial degree, showing severe limitations on the power of the polynomial method. \
Finally, we exhibit a total function with a quadratic gap between quantum query complexity and certificate complexity, which is optimal (up to log factors). \
These separations are shown using a new, general technique that we call the cheat sheet technique. \ The technique is based on a generic transformation that converts any (possibly partial) function into a new total function with desirable properties for showing separations. \ The framework also allows many known separations, including some recent breakthrough results of Ambainis et al., to be shown in a unified manner.
\end{abstract}

\setlength{\cftbeforesecskip}{0.2ex}
\setlength{\cftaftertoctitleskip}{2ex}
\setlength{\cftbeforetoctitleskip}{1.5ex}
\tableofcontents

\clearpage
\section{Introduction}
\label{sec:intro}

Query complexity (or decision tree complexity) is a model of computation that allows us to examine the strengths and weaknesses of resources such as access to randomness, nondeterminism, quantum computation and more. \
As opposed to the Turing machine model where we can only conjecture that certain resources exponentially speed up computation, in this model such beliefs can be proved. \
In particular, the query model is a natural setting to describe several well-known quantum algorithms, such as Grover's algorithm \cite{Gro96} and quantum walks~\cite{Amb07}; even Shor's algorithm for integer factorization \cite{Sho97} is based on a quantum query algorithm.

In the query model we measure a problem's complexity, called its query complexity, by the minimum number of input bits that have to be read to solve the problem on any input. \
For a function $f$, we use $D(f)$, $R(f)$, and $Q(f)$ to denote the query complexity of computing $f$ with a deterministic, bounded-error randomized, and bounded-error quantum algorithm respectively.

For a total Boolean function $f:\B^n \to \B$, we know that these measures are polynomially related:
in one direction, we have $Q(f) \leq R(f) \leq D(f)$
as each model is more powerful than the next,
and in the other direction, we have
$D(f) = O(R(f)^3)$ \cite{Nis91} and $D(f) = O(Q(f))^6$ \cite{BBC+01}.

Until recently, the best known separation between $D(f)$ and $R(f)$ was the \NAND-tree function \cite{SW86}, which satisfies $D(f) = \Omega(R(f)^{1.3267})$ and even $D(f) = \Omega(R_0(f)^{1.3267})$, where $R_0(f)$ denotes zero-error randomized query complexity. \
The best known separation between $D(f)$ or $R(f)$ versus $Q(f)$ was only quadratic, achieved by the $\OR$ function: $D(\OR) \geq R(\OR)= \Omega(n)$ and $Q(\OR) = \Theta(\sqrt{n})$ \cite{Gro96,BBBV97}. \ No separation was known between $R_0(f)$ and $R(f)$. \ Furthermore, it was believed that all of these separations were optimal.

These beliefs were shattered by a recent breakthrough by Ambainis, Balodis, Belovs, Lee, Santha, and Smotrovs~\cite{ABB+15}, who built upon the  techniques of G{\"o}{\"o}s, Pitassi, and Watson~\cite{GPW15}, and
 showed the near-optimal separations $D(f) = \tOmega(R_0(f)^2)$ and $R_0(f) = \tOmega(R(f)^2)$. \
Additionally, they presented new separations between quantum and classical models, exhibiting functions with $D(f) = \tOmega(Q(f)^4)$ and $R_0(f) = \tOmega(Q(f)^3)$. \ However, no super-quadratic separation was shown between a quantum and classical model when they are both allowed to err in the same way (e.g., bounded error, zero error, etc.) \ For example, no super-quadratic separation was shown between $Q(f)$ and its classical analogue, $R(f)$.

\subsection{Results}

\subsubsection*{Randomized versus quantum query complexity}

We present the first super-quadratic separation between quantum and classical query complexity when both models are allowed bounded error. \
This is a counterexample to a conjecture that had been widely believed about the nature of quantum speed-ups: that if the function is total, so that the inputs are not handpicked to orchestrate a quantum speed-up, then the best quantum advantage possible is the quadratic speed-up achieved by Grover's algorithm.

\begin{restatable}{theorem}{RvsQ}
\label{thm:RvsQ}
There exists a total function $f$ such that
$R(f) = \tOmega(Q(f)^{2.5})$.
\end{restatable}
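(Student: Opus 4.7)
The approach is to use the cheat sheet transformation---the general technique this paper introduces---applied to a suitably chosen base function. Given a (possibly partial) function $g \colon \B^n \to \B$ and a parameter $m$, the transformation produces a total function $g_{CS}$ whose input consists of $m$ independent inputs $x_1, \ldots, x_m$ for $g$ together with a cheat sheet $Y$ indexed by the $2^m$ possible answer patterns. The function $g_{CS}$ evaluates to $1$ iff the entry of $Y$ at index $(g(x_1), \ldots, g(x_m))$ is a valid short certificate for that answer pattern; the function is total since ``invalid'' is a legal entry.

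First I would establish the two complexity bounds for $g_{CS}$ in terms of $g$. On the quantum side, run the $Q(g)$-query quantum algorithm on each of the $m$ copies (with enough success amplification that the pointer is computed correctly with high probability), fetch the certificate at the resulting pointer, and verify it, giving $Q(g_{CS}) = \tO(m \cdot Q(g) + c)$ where $c$ is the certificate verification cost. On the randomized side, a direct-sum-style argument shows that any bounded-error randomized algorithm must essentially determine the full pointer $(g(x_1), \ldots, g(x_m))$ before it can distinguish a valid cheat sheet entry from an invalid one, forcing $R(g_{CS}) = \tOmega(m \cdot R(g))$.

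Next, to actually reach an $R(f) = \tOmega(Q(f)^{2.5})$ separation, I would instantiate the construction with a carefully chosen base function $g$ that realizes the desired $5/2$-power ratio between $R(g)$ and $Q(g)$, while keeping certificate complexity small so that verification is cheap. Candidates include partial functions built from $k$-SUM-like problems or purpose-built constructions with the right quantum/randomized structure; the parameter $m$ is then tuned so that the $\tO(m Q(g))$ versus $\tOmega(m R(g))$ trade-off produces the clean $Q(f)^{2.5}$ total-function statement.

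The main obstacle I expect is twofold. First, identifying or constructing a base function $g$ with a $5/2$-power gap while keeping its cheat sheet cheaply verifiable is not automatic---natural total candidates such as $k$-SUM give only a $3/2$-power gap, so a more delicate partial-function construction (possibly combined with composition) is required. Second, the randomized lower bound must be proved robustly against adversaries that interleave queries to the $x_i$'s and to $Y$; a naive direct-sum argument on the copies of $g$ alone is insufficient, since an algorithm could in principle use partial information read from $Y$ to focus its subsequent queries on particular copies of $g$, and one must rule this out.
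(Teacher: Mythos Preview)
Your outline captures the cheat-sheet framework correctly, but there is a genuine gap: you never identify a base function that actually achieves the needed $5/2$-power partial-function separation with cheaply verifiable certificates, and your suggested direction ($k$-\textsc{sum}-like problems) is the wrong one for this theorem. The $k$-\textsc{sum} constructions in this paper are used for the $Q$ versus $C$ and $Q$ versus $\adeg$ separations, not for $R$ versus $Q$.

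The paper's base function is $f = g \circ h$ where $g$ is \textsc{Forrelation} on $n$ bits (so $R(g)=\tOmega(\sqrt{n})$, $Q(g)=O(1)$) and $h=\ANDOR$ on $n^2$ bits (so $R(h)=\Omega(n^2)$, $Q(h)=O(n)$, $C(h)=n$). A composition theorem of the form $R(g\circ\OR_m)=\Omega(m\,R(g))$ (proved in the paper) gives $R(f)=\tOmega(n^{2.5})$, while quantum composition gives $Q(f)=O(n)$. Crucially, because $h$ is total with $C(h)=n$, membership of an input in the domain of $f$ is witnessed by $n$ AND-OR certificates of total size $\tO(n^2)$, and this certificate can be Grover-verified in $\tO(n)$ quantum queries. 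The cheat sheet then makes $f$ total; the number of copies is only $c=10\log n$ and plays no role in amplifying the exponent.

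Two further points. First, your claimed lower bound $R(g_{\CS})=\tOmega(m\cdot R(g))$ is stronger than what the paper proves; the actual lemma gives only $R(f_{\CS})=\Omega(R(f)/c)=\tOmega(R(f))$, via a hybrid argument over hard distributions for $f$. Since $c$ is logarithmic this suffices, but it means the separation exponent must already be present in the partial function $f$, not manufactured by taking many copies. Second, the $R$ lower bound for the composed partial function $f$ is itself nontrivial: there is no general composition theorem for $R$, and the paper proves a bespoke one when the inner function is $\OR$ (hence $\AND$, hence $\ANDOR$). Without this ingredient the $R(f)=\tOmega(n^{2.5})$ claim does not follow.
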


\thm{RvsQ} can be boosted  to a cubic separation, i.e., $R(f) = \tOmega(Q(f)^{3})$, in a completely black-box manner if there exists a partial function (a function defined only on a subset of $\B^n$) with $\tOmega(n)$ randomized query complexity but only $\poly(\log(n))$ quantum query complexity (the best possible separation between these measures up to log factors). \ It is conjectured that a recently studied partial function called $k$-fold Forrelation achieves this separation \cite{AA15}.

\subsubsection*{Quantum query complexity versus polynomial degree}

Nisan and Szegedy \cite{NS95} introduced two new measures of a Boolean function $f$ called the degree and approximate degree, denoted $\deg(f)$ and $\adeg(f)$ respectively. \
The (approximate) degree of a function $f$ is the minimum degree of a polynomial over the input variables that (approximately) equals $f(x)$ at every input $x\in\B^n$. \
They introduced these measures to lower bound classical query complexity, showing that $\deg(f) \leq D(f)$ and $\adeg(f) \leq R(f)$. \
It turns out that these measures also lower bound the corresponding quantum measures, since $\deg(f) \leq 2\, Q_E(f)$ and $\adeg(f) \leq 2\, Q(f)$ \cite{BBC+01}, where $Q_E(f)$ denotes the exact quantum query complexity of $f$.

Approximate degree has proved to be a fruitful lower bound technique for quantum query complexity, especially for problems like the collision and element distinctness problems \cite{AS04}, where it is known that the original quantum adversary method of Ambainis, another commonly used lower bound technique, cannot show nontrivial lower bounds.

For any lower bound technique, it is natural to ask whether there exist functions where the technique fails to prove a tight lower bound. \
Answering this question, Ambainis showed that the approximate degree of a function can be asymptotically smaller than its quantum query complexity by exhibiting a function with $Q(f) = \Omega(\adeg(f)^{1.3219})$ \cite{Amb03}. \
We dramatically strengthen this separation to obtain nearly a \th{4} power gap.

\begin{restatable}{theorem}{Qvsadeg}
\label{thm:Qvsadeg}
There exists a total function $f$ such that $Q(f) \geq \adeg(f)^{4-o(1)}$.
\end{restatable}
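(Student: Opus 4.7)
The plan is to obtain $f$ via the cheat sheet construction applied to a carefully chosen base function $g$. Recall that $g_{\CS}$ takes $\ell = \Theta(\log N)$ independent copies of a $g$-instance together with a table of $2^\ell$ candidate certificates; it outputs $1$ iff the cell indexed by the computed values $(g(x_1),\ldots,g(x_\ell))$ contains a certificate correctly attesting to those values. Quantum algorithms must essentially pay $Q(g)$ to identify the correct pointer, giving $Q(g_{\CS}) = \tOmega(Q(g))$ via a routine reduction (plant the unknown $g$-instance in one sub-block and supply a cheat sheet that is consistent with every possible value). At the same time, the cell structure of $g_{\CS}$ makes room for approximating polynomials whose degree can be substantially smaller than $Q(g)$, which is what drives the separation.

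The first step is to choose $g$ so as to maximize the ratio $Q(g)/\adeg(g)$. Natural candidates include an Ambainis-style iterated function tuned to exhibit $Q(g) \geq \adeg(g)^{2-o(1)}$, or a partial function tailored to the polynomial method (for which quantum algorithms remain expensive while low-degree polynomial approximations exist). Even if no total $g$ with a quadratic gap is directly available, the cheat sheet itself should amplify the separation, so one can afford to start with a weaker base and let the construction do the heavy lifting.

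The heart of the argument, and the main obstacle, is the upper bound on $\adeg(g_{\CS})$. The strategy is to construct an approximating polynomial of the form $\sum_{c \in \B^\ell} p_c(x_1,\ldots,x_\ell)\cdot v_c(Y)$, where $p_c$ is a polynomial that approximately indicates ``the pointer equals $c$'' and $v_c$ verifies the $c$-th cell of $Y$ against the claim that the pointer is $c$. The crucial savings are that $v_c$ only inspects a short certificate block, so its degree is governed by the certificate complexity of $g$ rather than by $Q(g)$; and that the $p_c$'s can all be built from a single Sherstov-style robustification of an approximating polynomial for $g$, giving degree $\tO(\adeg(g)\cdot\polylog(N))$ overall. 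The hardest technical point is controlling error accumulation across $2^\ell = \poly(N)$ cells and ensuring the polynomial remains a valid bounded-error approximation on adversarial inputs where the pointer does not index any valid certificate --- this will likely require a careful robust approximation argument (or, dually, an LP-duality / dual-polynomial analysis). Once this is in place, combining $Q(g_{\CS}) = \tOmega(Q(g))$ with $\adeg(g_{\CS}) = \tO(\adeg(g)\cdot\polylog(N))$ and using the fact that the cheat sheet squares the original $Q$-versus-$\adeg$ gap of $g$ should yield the desired $Q(f) \geq \adeg(f)^{4-o(1)}$.
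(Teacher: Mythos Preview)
Your proposal has a genuine gap: the mechanism you describe does not produce a fourth-power separation, and the missing idea is precisely what drives the paper's proof.

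First, your accounting does not close. You assert $Q(g_{\CS})=\tOmega(Q(g))$ and $\adeg(g_{\CS})=\tO(\adeg(g)\cdot\polylog N)$, and then say the cheat sheet ``squares'' the $Q$-versus-$\adeg$ gap. But those two bounds together \emph{preserve} the gap, they do not square it: you would need a base function $g$ already exhibiting a fourth-power $Q$-versus-$\adeg$ separation, which is the theorem you are trying to prove. No amplification is happening in your outline.

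Second, your polynomial construction $\sum_c p_c(x_1,\dots,x_\ell)\cdot v_c(Y)$ is both unnecessary and harmful. The factor $p_c$, built from an approximator for $g$, injects a $\adeg(g)$ term into the degree bound that the paper's construction completely avoids. The correct construction is simply $\sum_\ell p_\ell$, where $p_\ell$ approximates the predicate ``cell $\ell$ contains a valid certificate for the inputs $(x_1,\dots,x_c)$''. The point is that a valid certificate already \emph{proves} what the $g(x_i)$ are, so you never need a polynomial that computes the pointer; you only need to verify each candidate cell. On a $1$-input exactly one cell passes, on a $0$-input none does, so the sum (after error reduction) is a correct approximator. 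Consequently $\adeg(g_{\CS})$ is governed not by $\adeg(g)$ but by the \emph{quantum cost of verifying a certificate for $g$}, since a $T$-query quantum verifier yields a degree-$O(T)$ approximating polynomial.

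This is where the real work lies, and it is absent from your plan. To get the fourth power you need a base function $g$ with $Q(g)\approx N$ whose certificates can be verified by a quantum algorithm in $N^{1/4+o(1)}$ queries. An arbitrary function with small $C(g)$ does not suffice: for instance, a $0$-certificate for $\ksum$ is just a $0$-input, and there is no obvious way to check it quantumly faster than reading it. The paper manufactures such a function by taking $\BKK$ (which has $Q\approx n^2$, $C\approx n$) and composing it with itself $d\approx\log n/\log\log n$ times to get $\RBKK$ on $N=n^{2d}$ bits. The adversary bound composes perfectly, giving $Q(\RBKK)=N^{1-o(1)}$; and the recursive tree structure of the certificate allows Grover search, level by level, over $\approx C(\BKK)^d\approx N^{1/2+o(1)}$ constant-size sub-certificates, giving a quantum verification cost of $N^{1/4+o(1)}$. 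That recursive construction, and the resulting ``quantumly checkable'' certificate, is the idea your proposal is missing.
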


\thm{Qvsadeg} is optimal assuming the conjecture that $D(f) = O(\bs(f)^{2})$, where $\bs(f)$ is block sensitivity (to be defined in \sec{MEASURES}).

\subsubsection*{The cheat sheet technique}

These separations are shown using a new technique for proving
separations between query measures, which we call the
cheat sheet technique. \
Our technique is based on a generic transformation that takes any (partial or total) function and transforms it into a ``cheat sheet version'' of the function that has desirable properties for proving separations.

While the strategy is inspired by the recent breakthrough results \cite{GPW15,ABB+15} (and bears some similarity to older works \cite{LeG06,AdW14}), it represents a more general approach:
it provides a framework for proving separations and allows many separations to be shown in a unified manner. \
Thus the task of proving separations is reduced to the simpler task of finding the right function to plug into this construction. \
For example, it can be used to convert a partial function separation between two models into a weaker total function separation.

In this paper we demonstrate the power of the cheat sheet technique by using it to exhibit several new total function separations in query complexity.

\subsubsection*{Other separations}

On the path to proving \thm{Qvsadeg}, we show several new separations. \ First, we quadratically separate  quantum query complexity from  certificate complexity, which is essentially optimal since $Q(f) \leq  C(f)^2$.

\begin{restatable}{theorem}{QvsC}
\label{thm:QvsC}
There exists a total function $f$ such that $Q(f) = \tOmega(C(f)^{2})$.
\end{restatable}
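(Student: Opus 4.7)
My plan is to apply the cheat sheet transformation to a partial base function $g$ with a favorable quantum-to-certificate complexity gap. Take $g$ to be a Grover-style search: $g : \B^n \to [n]$ defined under the promise that the input has at least one ``$1$'' bit, returning the position of some such bit. Then $Q(g) = \Theta(\sqrt n)$ while each output has certificate complexity $O(1)$ (reveal a single $1$ bit of $x$). Form $f$ as the cheat sheet version of $k$ independent copies of $g$: the input is $(x_1, \ldots, x_k, Y)$ with each $x_i \in \B^n$ and $Y$ storing cells indexed by tuples $(j_1, \ldots, j_k) \in [n]^k$. The function outputs $1$ iff there is a ``marked, valid'' cell in $Y$, where validity at index $(j_1, \ldots, j_k)$ means $x_{i, j_i} = 1$ for all $i$.

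The first main step is the quantum lower bound $Q(f) = \tOmega(k Q(g)) = \tOmega(k \sqrt n)$. On hard inputs (each $x_i$ has a unique $1$ and $Y$ marks only the cell at $(g(x_1), \ldots, g(x_k))$), deciding $f$ forces the algorithm to find the unique $1$ in every $x_i$. The direct-sum lower bound for parallel Grover search, via the general adversary method, then yields the claim up to logarithmic factors.

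The second main step is the certificate upper bound $C(f) = O(k)$. A YES-certificate reveals the witness cell's mark bit together with the $k$ bits $x_{i, j_i} = 1$, for a total of $k+1$ queries. A NO-certificate is more delicate, since naively one must show every marked cell of $Y$ is invalid. The cheat sheet construction must therefore be engineered so that every NO input has a compact global witness of invalidity; in particular, by structuring $Y$ so that at most one cell is effectively marked, the NO-certificate can falsify that cell with $O(k)$ extra queries to the $x_i$'s. This is precisely what the cheat sheet construction introduced in the paper is designed to achieve.

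With these two bounds, choosing $k = \Theta(\sqrt n)$ gives $Q(f) = \tOmega(n)$ and $C(f) = O(\sqrt n)$, yielding $Q(f) = \tOmega(C(f)^2)$. The main obstacle I expect is the NO-side certificate bound: naive designs of $Y$ force $C_0(f)$ to scale linearly in $|Y| \approx n^k$ in the worst case, so the cheat sheet must be constructed so that NO inputs always admit a compact witness against all potentially marked cells simultaneously. A secondary obstacle is the direct-sum quantum lower bound, which must be shown to survive the algorithm's superposition queries to the cheat sheet $Y$; this is typically handled via the adversary method applied to the parallel Grover problem.
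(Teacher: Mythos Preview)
Your proposal has a genuine gap, and it stems from a misreading of what the cheat sheet construction accomplishes.

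First, the paper's proof of this theorem does \emph{not} use cheat sheets. It builds the total function $\BKK$ (Block $\ksum$ composed with $\ksum$) and shows directly that $C(\BKK)=\tO(n)$ while $Q(\BKK)=\tOmega(n^2)$. The key combinatorial idea is that Block $\ksum$ is engineered so that \emph{both} its $0$- and $1$-certificates consist almost entirely of $1$s; composing with $\ksum$---whose $1$-certificates are tiny but whose $0$-certificates are linear---then yields small certificates on both sides, while the quantum composition theorem keeps $Q$ multiplicative. Cheat sheets enter only afterward, to convert this $Q$ vs.\ $C$ gap into a $Q$ vs.\ $\deg$ gap.

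Second, your claimed bound $C_0(f)=O(k)$ is false for the function you wrote down. You define $f=1$ iff some marked cell of $Y$ is valid, and $f$ is total, so the adversary chooses $Y$ freely. Take each $x_i=1^n$ and $Y=0^{n^k}$. Then $f=0$, but every cell is valid (since every $x_{i,j_i}=1$), so a $0$-certificate cannot invalidate any cell via the $x_i$'s; it must reveal $Y[j]=0$ for \emph{every} $j$, giving $C_0(f)\ge n^k$. Your proposed fix---``structuring $Y$ so that at most one cell is effectively marked''---does not escape this: either it restricts the inputs (making $f$ partial), or it makes the unique marked cell recoverable from $\tO(k\log n)$ bits of the input, in which case a quantum algorithm reads those bits and checks the $k$ positions in the $x_i$, collapsing $Q(f)$ to $\tO(k\log n)$ as well.

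More broadly, the cheat sheet transformation is designed to lower $\deg$ and $\adeg$, not $C$. In the paper's $f_\CSp$ the relevant cell $\ell$ is \emph{determined} by $(f(x^1),\ldots,f(x^c))$, so a certificate for $f_\CSp$ must in particular certify those $c$ values, costing $\tOmega(C(f))$. Hence the transformation at best preserves an existing $Q$-vs-$C$ gap; it cannot manufacture one from a primitive like Grover search, where $Q(\OR_n)=\Theta(\sqrt{n})$ but $C(\OR_n)=n$. The separation you are after requires a genuinely new base function, which is what $\BKK$ supplies.
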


Besides being a stepping stone to proving \thm{Qvsadeg}, the question of $Q(f)$ versus $C(f)$ has been studied because it is known that the original adversary method of Ambainis (also known as the positive-weights adversary method) cannot prove a lower bound greater than $C(f)$ \cite{SS06,HLS07}.

Plugging the function of \thm{QvsC} into the cheat sheet framework directly yields a function whose quantum query complexity is quadratically larger than its (exact) degree, improving the recent result of \cite{GPW15}, who exhibited a function with $D(f) = \tOmega(\deg(f)^2)$.

\begin{restatable}{theorem}{Qvsdeg}
\label{thm:Qvsdeg}
There exists a total function $f$ such that $Q(f) = \tOmega(\deg(f)^{2})$.
\end{restatable}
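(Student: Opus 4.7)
The plan is to apply the cheat sheet transformation directly to the function from \thm{QvsC}. Let $g$ be the total function guaranteed by \thm{QvsC}, so that $Q(g) = \tOmega(C(g)^2)$. Write $n$ for the input length of $g$ and set $N := C(g)^2$, which is (up to polylog factors) a lower bound on $Q(g)$. I would then form the cheat sheet function $g_{CS}$ obtained by taking roughly $\log N$ copies of $g$ and appending a cheat sheet block indexed by the claimed output string, as in the cheat sheet construction introduced earlier in the paper.

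The proof splits into a lower bound on $Q(g_{CS})$ and an upper bound on $\deg(g_{CS})$. For the lower bound, the general cheat sheet meta-theorem of the paper should give $Q(g_{CS}) = \tOmega(Q(g))$, since recovering the correct index into the cheat sheet requires solving many copies of $g$; plugging in \thm{QvsC} yields $Q(g_{CS}) = \tOmega(C(g)^2)$. For the upper bound on exact degree, the idea is to exhibit, for each input, a small certificate of the correct output: a valid input to $g_{CS}$ is determined by specifying the certificate of $g$ on each of the $\log N$ copies together with the contents of a single cheat sheet block plus its pointer, giving a total certificate complexity of $\tO(C(g))$. Since exact degree is upper bounded by certificate complexity (indeed, one can write $g_{CS}$ as a sum of indicator polynomials over certificates, each of degree $\tO(C(g))$), we get $\deg(g_{CS}) = \tO(C(g))$.

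Combining the two bounds gives $Q(g_{CS}) = \tOmega(C(g)^2) = \tOmega(\deg(g_{CS})^2)$, up to polylog factors absorbed into $\tOmega$. The main obstacle, and the step requiring the most care, is the degree upper bound: one must verify that the cheat sheet construction genuinely admits a small certificate on every total input (not just on inputs in the original domain of $g$), since \thm{Qvsdeg} concerns exact degree rather than approximate degree, so the polynomial must match $g_{CS}$ on every point of $\B^{n'}$. The standard argument is that certificate complexity upper bounds exact degree via $\deg(f) \le C(f) \cdot \log|\mathrm{range}| \le \tO(C(f))$ for Boolean outputs, so the bound will follow once the certificate structure of $g_{CS}$ is pinned down; the $Q$ lower bound, meanwhile, should be a direct appeal to the cheat sheet lower bound lemma proved earlier in the paper.
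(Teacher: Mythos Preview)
Your overall plan—apply the cheat sheet construction to the function from \thm{QvsC} and invoke the quantum lower bound lemma—matches the paper. The gap is in the degree upper bound.

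The inequality you invoke, ``exact degree is upper bounded by certificate complexity,'' is false. The $\ANDOR_{n^2}$ function already has $C(\ANDOR_{n^2})=n$ but $\deg(\ANDOR_{n^2})=n^2$, so there is no general bound $\deg(f)\le \tO(C(f))$. Consequently, showing $C(g_{\CS})=\tO(C(g))$ would not by itself give $\deg(g_{\CS})=\tO(C(g))$. Your alternative phrasing, ``write $g_{\CS}$ as a sum of indicator polynomials over certificates,'' also fails as stated: a single $1$-input is typically consistent with many certificates, so the sum of their indicator polynomials overcounts and does not equal $g_{\CS}$.

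What the paper actually does is sum not over certificates but over the $2^c$ possible cheat sheet \emph{locations} $\ell$. For each fixed $\ell$ there is a deterministic procedure $\mathcal{A}_\ell$ that reads the purported certificate stored in cell $\ell$ and checks it against the $c$ inputs; this uses $\tO(C(g))$ queries, hence yields a polynomial $p_\ell$ of degree $\tO(C(g))$ that equals $1$ exactly when $g_{\CS}(z)=1$ \emph{and} $\ell$ is the correct cell. The crucial structural fact about the cheat sheet construction is that for any $1$-input the correct cell $\ell$ is \emph{unique}, so $\sum_\ell p_\ell$ equals $g_{\CS}$ exactly. This uniqueness is the missing idea in your argument; once you have it, the rest of your outline goes through.
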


This theorem works in a very black-box way: any separation between a measure like $Q(f)$ or $R(f)$ and $C(f)$ can be plugged into the cheat sheet technique to obtain the same separation (up to log factors) between that measure and exact degree. \
For example, since the $\ANDOR$ function satisfies $R(f) = \Omega (C(f)^2)$, the cheat sheet version of $\ANDOR$ satisfies $R(f) = \tOmega(\deg(f)^2)$.

This result also shows limitations on using $\deg(f)$ to lower bound  $Q_E(f)$, since $\deg(f)$ can sometimes be quadratically smaller than $Q_E(f)$ and can even be quadratically smaller than $Q(f)$.

\subsubsection*{Summary of results}

We summarize our new results in the table below.

\setlength{\tabcolsep}{15pt}
\renewcommand{\arraystretch}{1.5}

 \begin{table}[ht]
 \centering
 \begin{tabular}{l l l}

 Separation achieved & Known relation & Result \\ \hline \hline
 $R(f) = \tOmega(Q(f)^{2.5})$ & $R(f) = O(Q(f)^6)$ & \thm{RvsQ} \\ \hline
 $Q(f) \geq \adeg(f)^{4-o(1)}$ & $Q(f) = O(\adeg(f)^{6})$ & \thm{Qvsadeg} \\ \hline
 $Q(f) = \tOmega(C(f)^{2})$ & $Q(f) = O(C(f)^2)$ & \thm{QvsC} \\ \hline
 $Q(f) = \tOmega(\deg(f)^{2})$ & $Q(f) = O(\deg(f)^{3})$ & \thm{Qvsdeg} \\ \hline
\end{tabular}
\caption{New separations shown in this paper}
\label{tab:newsep}
 \end{table}

We are also able to use the cheat sheet technique to reprove many of the query separations of \cite{GPW15,ABB+15,GJPW15}. \ 
Some of their results are subsumed by the results in \tab{newsep}. \ 
We also prove $R_0(f) = \tOmega(Q(f)^{3})$ (\thm{R0vsQ}) and $R(f) = \tOmega(Q_E(f)^{3/2})$ (\thm{RvsQE}) in \sec{oldsep}. \ 
For more details, see \tab{sep} in which we summarize our results and the best known separations and relations between query measures. \ 

It is also worth pointing out what we are unable to reproduce with our framework. The only separations from these papers that we do not reproduce are those
that make essential use of ``back pointers''. \
Reproducing these separations in the cheat sheet framework is an interesting direction for future research. \ 

\setlength{\tabcolsep}{2.8pt}
\renewcommand{\arraystretch}{1.25}

\begin{table}
\begin{minipage}{\linewidth}
\begin{center}
\caption{Best known separations between complexity measures}
\vspace{1em}
\begin{tabular}{l||l|l|l|l|l|l|l|l|l|l}

{} & $D$ & $R_0$ & $R$ & $C$ & $\RC$ & $\bs$ & $Q_E$ & $\deg$ & $Q$ & $\adeg$ \\ \hline\hline

\raisebox{-\height}{$D$} & \cellcolor{darkgray} & \ct{2, 2}{\smcite{ABB+15}} & \ct{2*, 3}{\smcite{ABB+15}} &
\ct{2, 2}{$\wedge\circ\vee$} & \ct{2*, 3}{$\wedge\circ\vee$} & \ct{2*, 3}{$\wedge\circ\vee$} &
\ct{2, 3}{\smcite{ABB+15}} & \cc{2, 3}{\smcite{GPW15}} & \ct{4*, 6}{\smcite{ABB+15}} &
\cc{4*, 6}{\smcite{ABB+15}} \\ \hline

\raisebox{-\height}{$R_0$} & \ct{1, 1}{$\oplus$} &\cellcolor{darkgray} & \ct{2, 2}{\smcite{ABB+15}} &
\ct{2, 2}{$\wedge\circ\vee$} & \ct{2*, 3}{$\wedge\circ\vee$} & \ct{2*, 3}{$\wedge\circ\vee$} &
\ct{2, 3}{\smcite{ABB+15}} &
\cc{2, 3}{\smcite{GJPW15}} &
\cc{3, 6}{\smcite{ABB+15}} & \cc{4*, 6}{\smcite{ABB+15}} \\ \hline

\raisebox{-\height}{$R$} & \ct{1, 1}{$\oplus$} & \ct{1, 1}{$\oplus$} & \cellcolor{darkgray} & \ct{2, 2}{$\wedge\circ\vee$} &
\ct{2*, 3}{$\wedge\circ\vee$} & \ct{2*, 3}{$\wedge\circ\vee$} & \cc{1.5, 3}{\smcite{ABB+15}} &
\cc{2, 3}{\smcite{GJPW15}} &
\co{2.5, 6}{\hyperref[thm:RvsQ]{Th.~\ref*{thm:RvsQ}}} & \cc{4*, 6}{\smcite{ABB+15}} \\ \hline

\raisebox{-\height}{$C$} & \ct{1, 1}{$\oplus$} & \ct{1, 1}{$\oplus$} & \ct{1, 2}{$\oplus$} & \cellcolor{darkgray} &
\ct{2, 2}{\smcite{GSS13}} & \ct{2, 2}{\smcite{GSS13}} & \ct{1.1527, 3}{\smcite{Amb13}}  & \ct{$\log_3 6$, 3}{\smcite{NW95}} & \ct{2, 4}{$\wedge$} & \ct{2, 4}{$\wedge$} \\ \hline

\raisebox{-\height}{$\RC$} & \ct{1, 1}{$\oplus$} & \ct{1, 1}{$\oplus$} & \ct{1, 1}{$\oplus$} & \ct{1, 1}{$\oplus$} & \cellcolor{darkgray} & \ct{1.5, 2}{\smcite{GSS13}} & \ct{1.1527, 2}{\smcite{Amb13}} & \ct{$\log_3 6$, 2}{\smcite{NW95}} & \ct{2, 2}{$\wedge$} & \ct{2, 2}{$\wedge$} \\  \hline

\raisebox{-\height}{$\bs$} & \ct{1, 1}{$\oplus$} & \ct{1, 1}{$\oplus$} & \ct{1, 1}{$\oplus$} & \ct{1, 1}{$\oplus$} & \ct{1, 1}{$\oplus$} & \cellcolor{darkgray} & \ct{1.1527, 2}{\smcite{Amb13}} & \ct{$\log_3 6$, 2}{\smcite{NW95}} & \ct{2, 2}{$\wedge$} & \ct{2, 2}{$\wedge$} \\  \hline

\raisebox{-\height}{$Q_E$} &  \ct{1, 1}{$\oplus$} & \ct{1.3267, 2}{$\bar{\wedge}$-tree} & \ct{1.3267, 3}{$\bar{\wedge}$-tree} & \ct{2, 2}{$\wedge\circ\vee$} & \ct{2*, 3}{$\wedge\circ\vee$} & \ct{2*, 3}{$\wedge\circ\vee$} & \cellcolor{darkgray} &
\co{2, 3}{\hyperref[thm:Qvsdeg]{Th.~\ref*{thm:Qvsdeg}}} &
\ct{2, 6}{$\wedge$} & \co{4*, 6}{\hyperref[thm:Qvsadeg]{Th.~\ref*{thm:Qvsadeg}}} \\ \hline

\raisebox{-\height}{$\deg$} &  \ct{1, 1}{$\oplus$} & \ct{1.3267, 2}{$\bar{\wedge}$-tree} & \ct{1.3267, 3}{$\bar{\wedge}$-tree} & \ct{2, 2}{$\wedge\circ\vee$} & \ct{2*, 3}{$\wedge\circ\vee$} & \ct{2*, 3}{$\wedge\circ\vee$} & \ct{1, 1}{$\oplus$}& \cellcolor{darkgray} &
\ct{2, 6}{$\wedge$} & \ct{2, 6}{$\wedge$} \\ \hline

\raisebox{-\height}{$Q$} & \ct{1, 1}{$\oplus$} & \ct{1, 1}{$\oplus$} & \ct{1, 1}{$\oplus$} &
\co{2, 2}{\hyperref[thm:QvsC]{Th.~\ref*{thm:QvsC}}} &
\co{2*, 3}{\hyperref[thm:QvsC]{Th.~\ref*{thm:QvsC}}} &
\co{2*, 3}{\hyperref[thm:QvsC]{Th.~\ref*{thm:QvsC}}} & \ct{1, 1}{$\oplus$} &
\co{2, 3}{\hyperref[thm:Qvsdeg]{Th.~\ref*{thm:Qvsdeg}}} & \cellcolor{darkgray} &
\co{4*, 6}{\hyperref[thm:Qvsadeg]{Th.~\ref*{thm:Qvsadeg}}} \\ \hline

\raisebox{-\height}{$\adeg$} & \ct{1, 1}{$\oplus$} & \ct{1, 1}{$\oplus$} & \ct{1, 1}{$\oplus$} &
\ct{7/6, 2}{$\wedge \circ \textsc{Ed}$} &
\ct{7/6, 3}{$\wedge \circ \textsc{Ed}$} &
\ct{7/6, 3}{$\wedge \circ \textsc{Ed}$} &
\ct{1, 1}{$\oplus$} & \ct{1, 1}{$\oplus$}  & \ct{1, 1}{$\oplus$} & \cellcolor{darkgray}  \\ 

\end{tabular}
\label{tab:sep}
\end{center}
An entry $a, b$ in the row $M_1$ and column $M_2$ roughly\footnote{More precisely it means $a \leq \crit(M_1,M_2) \leq b$, where we define $\crit(\cdot,\cdot)$ in \sec{prelim}.}
means $M_1(f) = \tO(M_2(f)^b)$ for all total $f$ and there exists a total $f$ with $M_1(f) = \tOmega(M_2(f)^a)$. \
Each cell contains a citation or a description of a separating function, where $\oplus = \PARITY$, $\wedge = \AND$, $\wedge\circ\vee = \ANDOR$, $\bar{\wedge}$-tree is the balanced \NAND-tree function \cite{SW86,San95}, and $\wedge \circ \textsc{Ed}$ is the $\AND$ function composed with Element Distinctness.\footnote{%
The Element Distinctness function accepts a list of $n$ numbers in $[n]$ as input and asks if any two of them are equal. \
The certificate complexity of the $\AND_n$ composed with element distinctness on $n$ bits is $\tO(n)$, by composing certificates for the two functions. \ However, its approximate degree is $\tOmega(n^{7/6})$, which follows from noting that its (negative) one-sided approximate degree is $\Omega(n^{2/3})$~\cite{BT15} and that composition with the $\AND_n$ function raises its approximate degree by a factor of $\sqrt{n}$ \cite{BT15,She13}. \ (We thank Mark Bun for outlining this proof.)%
}
Entries followed by a star (e.g., 2*) correspond to separations that are optimal if $D(f)=O(\bs(f)^2)$. \
Separations colored \colorbox{open!70}{red} are
new to this work.
Separations colored \colorbox{conj!70}{gray} are separations
from recent papers that we reprove in this work.
\end{minipage}
\end{table}

\subsection{Overview of techniques}

\subsubsection*{Cheat sheet technique and randomized versus quantum query complexity}
While we know large partial function separations between randomized and quantum query complexity, such as
Simon's problem \cite{Sim97} or
the Forrelation problem \cite{AA15} that satisfies $R(g) = \tOmega(\sqrt{n})$ and $Q(g)=1$, it is unclear how to make these functions total. \ We could simply define $g$ to be $0$ on inputs not in the domain, but then the quantum algorithm would need to be able to decide whether an input is in the domain.

To solve this problem, we compose the Forrelation problem $g$ with a total function $h=\ANDOR$ on $n^2$ bits, to obtain a partial function $f = g \circ h$ on $n^3$ bits that inherits the properties of both $g$ and $h$. \ Since $\ANDOR$ has low certificate complexity, it is easier to certify that an input lies in the domain of $g \circ h$, since we only need to certify the outputs of the $n$ different $h$ gates.

Since $R(\ANDOR_{n^2}) = \Omega(n^2)$, it can be shown that $R(f) = R(g\circ h) = \Omega(n^{2.5})$. \ However, since $Q(\ANDOR_{n^2}) = O(n)$, we have $Q(f) = O(n)$. \ Now $f$ is still a partial function, but there is a small certificate for the input being in the domain. \ If we could ensure that only the quantum algorithm had access to this certificate, but the randomized algorithm did not, we would have our power $2.5$ separation.

To achieve this, we hide a ``cheat sheet'' in the input that contains all the information the quantum algorithm needs. \ We store it in a vast array of size $n^{10}$ of potential cheat sheets, which cannot be searched quickly by brute force by any algorithm, classical or quantum. \ Thus, we must provide the quantum algorithm the address of the correct cheat sheet. \ But what information do we have that is known only to the quantum algorithm? \ The value of $f$ on the input! \ While one bit does not help much, we can use $10\log n$ copies of $f$ acting on $10\log n$ different inputs. \ The outputs to these $10\log n$ problems can index into an array of size $n^{10}$, and can be determined by the quantum algorithm using only $O(n\log n)$ queries. \ At this index we will store the cheat sheet, which contains certificates for all $10\log n$ instances of $f$, convincing us that all the inputs lie in the domain. \ Our cheat sheet function now evaluates to $1$ if and only if all $10\log n$ instances of $f$ lie in the domain, \emph{and} the cell in the array pointed to by the outputs of these $10\log n$ instances of $f$ contains a cheat sheet certifying that these inputs lie in the domain.

\subsubsection*{Quantum query complexity versus certificate complexity}
Consider the $\ksum$ problem, in which we are given $n$ numbers and have to decide if any $k$ of them sum to $0 \pmod M$ for some $M$. \ For large $k$, the quantum query complexity of $\ksum$ is nearly linear in $n$ \cite{BS13}. \ While it is easy to certify $1$-inputs by showing any $k$ elements that sum to $0$, the $0$-inputs are difficult to certify and hence the certificate complexity is also linear in $n$.

Building on $\ksum$, we define a new function that we call \textsc{Block} $\ksum$, whose quantum query complexity and certificate complexity are both linear in $n$. \ However, \textsc{Block} $\ksum$ has a curious property:
for both $1$-inputs and $0$-inputs, the certificates themselves consist almost exclusively of input bits set to $1$. \
This means that if we compose this function with $\ksum$, the composed function on $n^2$ bits has certificates of size $\tO(n)$ because any certificate of \textsc{Block} $\ksum$ consists almost entirely of $1$s, which correspond to $\ksum$ instances that output $1$, which are easy to certify. \ On the other hand, the quantum query complexity of the composed function, which we call $\BKK$ for \textsc{Block} $\ksum$ of $\ksum$, is the product of the quantum query complexities of individual functions. \ Thus the certificate complexity of $\BKK$ is $\tO(n)$, but its quantum query complexity is $\tOmega(n^2)$.

\subsubsection*{Quantum query complexity versus polynomial degree}
We show that plugging any function that achieves $Q(f) = \tOmega(n^2)$ and $C(f)= \tO(n)$ into the cheat sheet framework yields a function with $Q(f) = \tOmega(n^2)$ and $\deg(f)= \tO(n)$. \ The quantum lower bound uses the hybrid method \cite{BBBV97} and the recent strong direct product theorem for quantum query complexity \cite{LR13}. \
The degree upper bound holds because for every potential cheat sheet location, there exists a degree $\tO(n)$ polynomial that checks if this location is the one pointed to by the given input. \ And since the input can point to at most one cheat sheet, the sum of these polynomials equals the function $f$.

To achieve a fourth power separation between $Q(f)$ and $\adeg(f)$, we need to check whether a cheat sheet is valid using a polynomial of degree $\tO(\sqrt{C(f)})$. \ Some certificates can be checked by Grover's algorithm in time $\smash{\tO}(\sqrt{C(f)})$, which would yield an approximating polynomial of similar degree, but this is not true for all functions $f$. \ To remedy this, we construct a new function based on composing $\BKK$ with itself recursively $\log n$ times, to obtain a new function we call $\RBKK$, and show that certificates for $\RBKK$ can be checked quickly by a quantum algorithm.

\section{Preliminaries}
\label{sec:prelim}

For any positive integer $n$, let $[n] := \{1,\ldots,n\}$. \
We use $f(n) = \tO(g(n))$ to mean there exists a constant $k$ such that $f(n) = O(g(n) \log^k n)$. \
For example, $f(n) = \tO(1)$ means $f(n) = O(\log^k n)$ for some constant $k$. \
Similarly, $f(n) = \tOmega(g(n))$ denotes $f(n) = \Omega(g(n)/\log^k n)$ for some constant $k$. \
Finally, $f(n) =  \tTheta(g(n))$ means $f(n) = \tO(g(n))$ and $f(n) = \tOmega(g(n))$.

\subsection{Boolean functions}

Let $f$ be function from a domain $D \subseteq \Sigma^n$ to $\B$, where $\Sigma$ is some finite set. \
Usually $\Sigma=\B$, and we assume this for simplicity below, although most complexity measures are easily generalized to larger input alphabets. \
A function $f$ is called a total function if $D = \Sigma^n$. Otherwise, we say $f$ is a partial function or a promise problem and refer to $D$ as the promised set or  the promise.

\setlength{\intextsep}{0pt}%
\setlength{\columnsep}{10pt}%
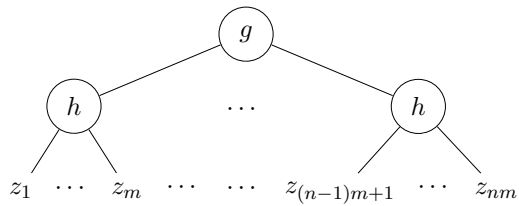
\begin{wrapfigure}{r}{0.45\textwidth}
\vspace{0.25em}
\centering
   \begin{tikzpicture}[-, scale=.9,
    level distance = 3em,
    sibling distance = 2pt,
   every internal node/.style={vert}]
    \tikzstyle{vert}=[circle,draw=black,minimum size=0.8cm,inner sep=0pt]
    \tikzset{edge from parent/.style=
        {draw, edge from parent path={(\tikzparentnode) -- (\tikzchildnode)}}}
    \Tree
    [.{$g$}
        [.{$h$}
            {$z_1$}
				\edge[draw=white];{$\cdots$}
				{$z_m$}
        ]
		\edge[draw=white];
        [.\node[draw=white]{};
            \edge[draw=white];{$\cdots$}
        ]
		\edge[draw=white];
        [.\node[draw=white]{$\cdots$};
            \edge[draw=white];{$\cdots$}
        ]
        [.{$h$}
            {$z_{(n-1)m+1}$}
				\edge[draw=white];{$\cdots$}
				{$z_{nm}$}
        ]
    ]
    \end{tikzpicture}
\vspace{-1em}
\caption{Composed function $g \circ h$.\label{fig:composed}}
\vspace{0em}
\end{wrapfigure}

For any two (possibly partial) Boolean functions $g:G \to \B$, where $G \subseteq \B^n$, and $h:H\to \B$, where $H \subseteq \B^m$, we can define the composed function $g\circ h: D \to \B$, where $D \subseteq \B^{nm}$, as follows. \ For an input $(z_1, z_2, \ldots, z_{nm}) \in \B^{nm}$, we define $g \circ h$ as follows (as depicted in \fig{composed}):
\begin{equation}
g\circ h(z) := g(h(z_1,\ldots,z_m),h(z_{m+1},\ldots,z_{2m}),\ldots,h(z_{(n-1)m+1},\ldots,z_{nm})),
\end{equation}
where the function $g \circ h$ is only defined on $z$ where the inputs to the $h$ gates lie in the domain $H$ and the $n$-bit input to $g$,  $(h(z_1,\ldots,z_m),h(z_{m+1},\ldots,z_{2m}),\ldots,h(z_{(n-1)m+1},\ldots,z_{nm}))$, lies in $G$.

Some Boolean functions that we use in this paper are $\AND_n$ and $\OR_n$, the $\AND$ and $\OR$ functions respectively on $n$ bits, and $\ANDOR_{n^2}:\B^{n^2} \to \B$, defined as $\AND_n \circ \OR_n$.

\subsection{Complexity measures}
\label{sec:MEASURES}

Formal definitions of most measures introduced here may be found in the survey on query complexity by Buhrman and de Wolf \cite{BdW02} and in the recent paper of Ambainis et al.~\cite{ABB+15}.

In the model of query complexity, algorithms may only access the input by asking queries to an oracle that knows the input $x \in D$. \ The queries are of the form ``what is $x_i$?'' for some $i \in [n]$ and the oracle responds with the value of $x_i \in \Sigma$. \
The goal is to minimize the number of queries made to compute $f$.

We use $D(f)$ to denote the deterministic query complexity of computing $f$, the minimum number of queries that have to be made by a deterministic algorithm that outputs $f(x)$ on every input $x$. \
We use $R(f)$ to denote bounded-error randomized query complexity, the minimum number of queries made by a randomized algorithm that outputs $f(x)$ on input $x$ with probability at least $2/3$. \
We use $R_0(f)$ to denote zero-error randomized query complexity, the minimum number of queries made by a randomized algorithm that must output either the correct answer, $f(x)$, on input $x$ or output $*$, indicating that it does not know the answer. \
However, the probability (over the internal randomness of the algorithm) that it outputs $*$ should be at most $1/2$ on any $x \in D$.

Similarly, we study quantum analogues of these measures. \
The quantum analogue of $D(f)$ is exact quantum query complexity, denoted $Q_E(f)$, the minimum number of queries made by a quantum algorithm that outputs $f(x)$ on every input $x\in D$ with probability 1. \
The quantum analogue of $R(f)$ is bounded-error quantum query complexity, denoted $Q(f)$, where the algorithm is only required to be correct with probability at least $2/3$.

We also study some measures related to the complexity of certifying the output. \
For a function $f$, let $C_1(f)$ denote the minimum number of bits of a $1$-input, i.e., an input with $f(x) = 1$, that have to be revealed to convince a verifier that $f(x) = 1$. \
Alternately, imagine the verifier is allowed to interact with an untrustworthy prover who claims that $f(x) = 1$, and is allowed to provide an arbitrarily long witness for this. \
After receiving the witness, the verifier makes at most $C_1(f)$ queries to the input and decides whether to accept that $f(x)=1$ or not. \ We require that if $f(x)=1$, then there is some witness that makes the verifier accept and if $f(x) = 0$ then no witness makes the verifier accept. \
Similarly we define $C_0(f)$ as the analogous notion for $0$-inputs. \
Finally the certificate complexity of $f$, or $C(f)$, is defined as $\max\{C_0(f),C_1(f)\}$. \
Similarly, if the verifier is randomized and only needs to accept valid proofs and reject invalid proofs with probability greater than $2/3$, we get the bounded-error randomized analogues $\RC_0(f)$, $\RC_1(f)$, and $\RC(f) := \max \{\RC_0(f), \RC_1(f)\}$ \cite{Aar06}.

We also study a combinatorial measure of a function called block sensitivity, denoted $\bs(f)$. \
For a function $f$ and input $x$, a block of bits $B \subseteq [n]$ is said to be sensitive for $x$ if the input obtained by complementing all the bits $x_i$ for $i \in B$ yields an input $y \in D$ with $f(x)\neq f(y)$. \
The block sensitivity of $f$ on input $x$ is the maximum number of disjoint blocks sensitive for $x$. \
The block sensitivity of $f$ is the maximum block sensitivity over all inputs  $x\in D$.

Lastly we study measures related to representing a function with a polynomial with real coefficients. \
The (exact) degree of $f$, denoted $\deg(f)$, is the minimum degree of a real polynomial $p$ over the variables $x_1,x_2,\ldots,x_n$, such that $p(x_1,x_2,\ldots,x_n) = f(x)$ for all $x \in D$. \
Similarly, the approximate degree of $f$, denoted $\adeg(f)$, is the minimum degree of a polynomial that approximates the value of $f(x)$ on every input $x\in D$, i.e., for all $x\in D$ the polynomial satisfies $|p(x)-f(x)|\leq 1/3$.

All the measures introduced in this section lie between $0$ and $n$. \
For most measures this follows because the algorithm may simply query all $n$ input variables and compute $f(x)$. \
Also, any Boolean function $f$ can be exactly represented by a real  polynomial of degree at most $n$.

\subsection{Known relations and separations}

We now summarize the best known relations between the complexity measures studied in this paper. \
\fig{rel} depicts all known relations of the type $M_1(f) = O(M_2(f))$ for complexity measures $M_1(f)$ and $M_2(f)$. \ 
An upward line from $M_1$ to $M_2$ in \fig{rel} indicates $M_1(f) = O(M_2(f))$ for all (partial or total) Boolean functions $f$. \ 
Most of these relations follow straightforwardly from definitions. \ For example $R_0(f) \leq D(f)$ because any deterministic algorithm is also a zero-error randomized algorithm. \ The relationships that do not follow from such observations are $\bs(f) = O(\RC(f))$ \cite{Aar06}, $\deg(f) =O(Q_E(f))$ \cite{BBC+01}, and $\adeg(f) = O(Q(f))$ \cite{BBC+01}.

\setlength{\intextsep}{0pt}%
\setlength{\columnsep}{15pt}%
\begin{wrapfigure}{r}{0.29\textwidth}
\vspace{-1em}
  \begin{tikzpicture}[x=1cm,y=.9cm]

     \node (D) at(2,5){$D$};
     \node (Rz) at(1,4){$R_0$};
     \node (QE) at(3,4){$Q_E$};
     \node (C) at(0,3){$C$};
     \node (R) at(2,3){$R$};
     \node (RC) at(1,2){$\RC$};
     \node (bs) at(1,0.9){$\bs$};
     \node (deg) at(4,3){$\deg$};
     \node (Q) at(3,2){$Q$};
     \node (adeg) at(3,0.9){$\adeg$};

     \path[-] (Rz) edge (D);
     \path[-] (QE) edge  (D);
     \path[-] (C) edge (Rz);
     \path[-] (R) edge (Rz);
     \path[-] (RC) edge (C);
     \path[-] (RC) edge (R);
     \path[-] (bs) edge (RC);
     \path[-] (deg) edge (QE);
     \path[-] (Q) edge (QE);
     \path[-] (Q) edge (R);
     \path[-] (adeg) edge (Q);
     \path[-] (adeg) edge (deg);
  \end{tikzpicture}
\vspace{-0.9em}
    \caption{Relations between complexity measures.\label{fig:rel}}
\vspace{-1.5em}
\end{wrapfigure}
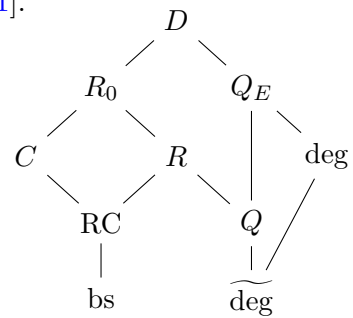

All other known relationships between these measures follow by combining the relationships in \fig{rel} with the following relationships that hold for all total Boolean functions:
\begin{itemize}[noitemsep,topsep=4pt]
\item $C(f) \leq \bs(f)^2$ \cite{Nis91}
\item $D(f) \leq C(f)\bs(f)$ \cite{BBC+01}
\item $D(f)\leq \deg(f)^3$ \cite{Mid04,Tal13}
\item $\RC(f) = O(\adeg(f)^2)$ \cite{KT13}
\item $R_0(f) = O(R(f)^2 \log R(f))$ \cite{KT13}
\end{itemize}

We now turn to the best known separations between these measures. \
To conveniently express these results, we use a notion called the \emph{critical exponent}, as defined by \cite{GSS13}. \
The the critical exponent for a measure $M_1$ relative to a measure $M_2$, denoted $\crit(M_1,M_2)$, is the infimum over all $r$ such that the relation $M_1(f) = O(M_2(f)^r)$ holds for all total functions $f$.
For example, because $D(f) \leq C(f)^2$ for all total $f$, we have $\crit(D,C) \leq 2$. \
Furthermore, since we also know that the $\ANDOR$ function on $k^2$ bits satisfies $D(f)=k^2$ and $C(f)=k$, we have $\crit(D,C)=2$. \
Note that the critical exponent between $M_1$ and $M_2$ is $2$ even if we have $M_1(f) = O(M_2(f)^2 \log M_2(f))$, or more generally if $M_1(f) \leq M_2(f)^{2+o(1)}$.

\tab{sep} lists the best known separations between these measures. \ A cell in the table has the form $a, b$, where $a \leq b$ are the best known lower and upper bounds on the critical exponent for the row measure relative to the column measure. \
The cell also contains a citation for the function that provides the lower bound. \
For example, in the cell corresponding to row $D$ and column $\deg$, we have the entry $2, 3$, which means $2 \leq \crit(D,\deg) \leq 3$, and a function achieving the separation appears in \cite{GPW15}.

\section{Randomized versus quantum query complexity}
\label{sec:cheatsheet}

In this section we show a power $2.5$ separation between bounded-error randomized query complexity, $R(f)$, and bounded-error quantum query complexity, $Q(f)$. \ In \sec{intuition} we motivate the cheat sheet framework and provide a sketch of the separation. \ In \sec{implementation} we formally prove the separation.

\subsection{Intuition}
\label{sec:intuition}

We begin with the best known separation between randomized and quantum query complexity  for \emph{partial} functions, which is currently  $\tOmega(\sqrt{n})$ versus $1$ provided by the Forrelation problem \cite{AA15}. \
(Note that Simon's problem also provides a similar separation up to log factors \cite{Sim97}.) \
Let $g$ be the Forrelation function on $D \subseteq \B^n$ with $R(g) = \tOmega(\sqrt{n})$ and $Q(g) = 1$, although any function with these query complexities (up to log factors) would do.

One way to make $g$ total would be  to define it to be $0$ on the rest of the domain $\B^n \setminus D$. \
But then it is unclear if the new function $g$ has low quantum query complexity, since the quantum algorithm would have to test whether the input lies in the promised set $D$. \
In general since partial function separations often require a stringent promise on the input, we may expect that it is necessary to examine most input bits to ascertain that $x \in D$.

Indeed, it is not even clear how to \emph{certify} that $x \in D$ for our function $g$. \
On the other hand, the domain certification problem is trivial for total functions. \
So we can compose $g$ with a total function $h$ to obtain some of the desirable properties of both. \
We can certify that an input to $g\circ h$ lies in its domain by certifying the outputs of all instances of $h$. \
This means we want $h$ to have low certificate complexity, and since we will use $g\circ h$ to separate randomized and quantum query complexity, we would also like $h$ to have $Q(h)$ smaller than $R(h)$. \
A function that fits the bill perfectly is the $\ANDOR_{m^2}$ function that satisfies $R(h) = \Omega(m^2)$, $C(h) = m$, and $Q(h) = O(m)$.

We can now compute the various complexities of $f := g\circ h$. \ First we have $Q(f) = \tO(m)$, by composing algorithms for $g$ and $h$. \
There also exists a certificate of size $\tO(nm)$ that proves that the input satisfies the promise, since we can certify the outputs of the $n$ different $h$ gates using certificates of size $C(h)=m$. \
Also, given query access to this certificate of size $\tO(nm)$, a quantum algorithm can check the certificate's validity using Grover search in $\tO(\sqrt{nm})$ queries. \ Hence a quantum algorithm with access to a certificate can solve the problem with $\tO(m+\sqrt{nm})$ queries. \ On the other hand, it can be shown that $R(f) = R(g \circ h) = \Omega(R(g)R(h)) =  \tOmega(\sqrt{n}m^2)$.

Now if we set $m=n$, then we see that $Q(f) = \tO(n)$ and $R(f) = \tOmega(n^{2.5})$, but $f$ is still a partial function. \
However, $f$ has the desirable property that a quantum algorithm given query access to a certificate can decide whether the input satisfies the promise using $\tO(n)$ queries. \
But we cannot simply append the certificate to the input as we do not want the randomized algorithm to be able to use it. \ It is, however, acceptable if the randomized algorithm finds the certificate after it spends $R(f)$ queries, since that is the lower bound we want to prove.

What we would like is to hide the certificate somewhere in the input where only the quantum algorithm can find it. \
What information do we have that is only known to the quantum algorithm and remains unknown to the randomized algorithm unless it spends $R(f)$ queries? \
Clearly the value of $f$ on the input has this property.

While one bit does not help much, we can obtain additional bits of this kind by using (say) $10 \log n$ copies of $f$. \
Now the answer string to these problems can address an array of size $n^{10}$, much larger than can be brute-force searched by a randomized algorithm in a reasonable amount of time. \
Furthermore, this address can be found by a quantum algorithm using only $\tO(Q(f))$ queries. \
At this location we will store a cheat sheet: a collection of certificates for all $10 \log n$ instances of $f$.

Now the new function, which we call $f_\CS$ (shown in \fig{cs}), evaluates to $1$ if and only if the $10\log n$ inputs are in the domain of $f$, and the outputs of the $10\log n$ copies of $f$ point to a valid cheat sheet, i.e., a set of valid certificates for the $10\log n$ copies of $f$. \
This construction ensures that the quantum algorithm can find the cheat sheet using $\tO(Q(f))=\tO(n)$ queries, and then verify it using $\tO(n)$ queries, which gives $Q(f_\CS) = \tO(n)$.
On the other hand, we intuitively expect that the randomized algorithm cannot even find the cheat sheet unless it computes $f$ by spending at least  $R(f)=\tOmega(n^{2.5})$ queries, which gives us the desired separation.

\subsection{Implementation}
\label{sec:implementation}

In this section we prove \thm{RvsQ}, restated for convenience:

\RvsQ*

Let $g$ be the Forrelation function on $D \subseteq \B^n$ with $R(g) = \tOmega(\sqrt{n})$ and $Q(g) = O(1)$. \
Let $h:\B^{m^2} \to \B$ be the $\ANDOR_{m^2}$ function, defined as $\AND_m \circ \OR_m$. This function satisfies $R(h) = \Omega(m^2)$, which can be shown using a variety of methods: it follows from the partition bound \cite[Theorem 4]{JK10} and also from our more general \thm{Rcomposition}. \ We also have $C(h) = m$, since one $1$ from each $\OR$ gate is a valid $1$-certificate and an $\OR$ gate with all zero inputs is a valid $0$-certificate. \ Lastly, $Q(h) = \tO(m)$ follows from simply composing quantum algorithms for $\AND$ and $\OR$, and indeed $Q(h) = O(m)$ \cite{HMdW03}.

Let $f = g\circ h$ be a partial function on $nm^2$ bits. \ We have $Q(f) = O(m)$ by composition, since quantum algorithms can be composed in general without losing a log factor due to error reduction \cite{Rei11,LMR+11}. \
There also exists a certificate of size $\tO(nm)$ that proves that the input satisfies the promise, since we can certify the outputs of the $n$ different $h$ gates using $C(h)=m$ pointers to the relevant input bits. \
Since each pointer uses $O(\log n)$ bits, the certificate is of size $\tO(nm)$. \ Also note that a quantum algorithm with \emph{query access} to this certificate can check its validity using Grover search in $\tO(\sqrt{nm})$ queries.

Lastly, we claim that $R(f) = R(g \circ h) = \Omega(R(g)R(h)) =  \tOmega(\sqrt{n}m^2)$. \ While a general composition theorem for  bounded-error or zero-error randomized query complexity is unknown, we can show such a result when the inner function is the $\OR$ function.

\begin{theorem}
\label{thm:Rcomposition}
Let $f:D\to\B$ be a partial function, where $D \subseteq \B^n$.
Then $R(f\circ\OR_m)=\Omega(mR(f))$
and $R_0(f\circ\OR_m)=\Omega(mR_0(f))$. \
Therefore $R(f \circ \AND_m \circ \OR_m) = \Omega(m^2R(f))$ and $R_0(f \circ \AND_m \circ \OR_m) = \Omega(m^2 R_0(f))$.
\end{theorem}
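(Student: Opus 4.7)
My plan is to first prove the single-level statement $R(f\circ\OR_m)=\Omega(mR(f))$ and then obtain the two-level $\AND_m\circ\OR_m$ version from it by applying the single-level result twice with a De~Morgan detour. For the $\OR_m$-composition, the approach is a Yao-minimax style simulation: fix a distribution $\mu$ on $\Dom(f)$ that is hard for every bounded-error randomized algorithm using fewer than $R(f)$ queries, and lift it to a distribution $\nu$ on inputs of $f\circ\OR_m$. Sample $y\sim\mu$; independently for each $i\in[n]$, sample $j_i\in[m]$ uniformly and set the $i$-th block $z_i$ to $0^m$ if $y_i=0$ and to the unit vector $e_{j_i}$ if $y_i=1$. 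Then $\OR_m(z_i)=y_i$ and $(f\circ\OR_m)(z)=f(y)$, so any algorithm succeeding on $\nu$ automatically solves $f$ on $\mu$ when supplied with such a $z$.

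Given a deterministic algorithm $A$ for $f\circ\OR_m$ using $T$ queries and succeeding with probability at least $2/3$ on $\nu$, I would construct a randomized algorithm $A'$ for $f$ as follows. On input $y\sim\mu$, $A'$ privately samples $(j_i)_{i\in[n]}$ uniformly and then simulates $A$; whenever $A$ queries position $j$ of block $i$, $A'$ returns $0$ if $j\ne j_i$ (which is correct regardless of $y_i$), and only when $j=j_i$ does $A'$ query $y_i$ (if it has not yet) and return that bit. By construction $A'$ answers $A$'s queries exactly according to the conditional distribution of $z$ given $y$ under $\nu$, so its success probability matches that of $A$.

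The main step is bounding the expected number of queries $A'$ makes to $y$: this equals the expected number of blocks $i$ for which $A$ hits position $j_i$. For blocks with $y_i=0$, the query set $Q_i$ is independent of $j_i$ (answers are always $0$), so the hit probability is exactly $|Q_i|/m$. For blocks with $y_i=1$, fix the external history (queries and responses outside block $i$) and let $K_i$ denote the maximum number of queries $A$ would make in block $i$ if it never saw a $1$; by symmetry over the uniform $j_i$, the hit probability is $K_i/m$, while the realized query count $k_i$ in block $i$ satisfies $\mathbb{E}[k_i]\ge K_i/2$ because even when $A$ stops upon finding the planted $1$, the success position in $A$'s search order is uniform in $[1,K_i]$. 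Hence each block contributes hit probability at most $2\mathbb{E}[k_i]/m$; summing and using $\sum_i k_i\le T$ gives expected $y$-queries $O(T/m)$. After a constant-factor success amplification of $A$, Markov's inequality converts $A'$ into a worst-case bounded-error algorithm for $f$ of cost $O(T/m)$, yielding $T=\Omega(mR(f))$ via Yao's principle. The $R_0$ statement follows from the same simulation, with $A'$ outputting $*$ exactly when $A$ does.

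For $\AND_m\circ\OR_m$, I would apply the $\OR_m$-composition theorem twice. Setting $f'(y):=f(\bar y)$, we have $R(f')=R(f)$ and, by De~Morgan, $(f\circ\AND_m)(z)=(f'\circ\OR_m)(\bar z)$, so $R(f\circ\AND_m)=R(f'\circ\OR_m)\ge\Omega(mR(f))$. A second application to the composition with the outer $\OR_m$ then gives $R(f\circ\AND_m\circ\OR_m)\ge\Omega(mR(f\circ\AND_m))\ge\Omega(m^2R(f))$, and the zero-error case is identical. The main obstacle in the argument is the hit-probability bound in the $y_i=1$ case, since $A$'s adaptive strategy in block $i$ can interleave with queries in other blocks and may continue querying a block after locating the $1$; the symmetry-plus-truncation estimate $\mathbb{E}[k_i]\ge K_i/2$ is what pins down the constant-factor efficiency of the search and avoids a spurious $\log m$ loss in the final bound.
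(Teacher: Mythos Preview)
Your proof is correct and follows essentially the same route as the paper's: plant each $x_i$ at a uniformly random position $j_i$ in the $i$-th $\OR_m$ block, simulate the $f\circ\OR_m$ algorithm while querying $x_i$ only when the hidden position is hit, bound the expected number of hits by $O(T/m)$, convert via Markov, and derive the $\AND_m\circ\OR_m$ statement by a De~Morgan detour. Your explicit $K_i$-versus-$\mathbb{E}[k_i]$ analysis of the $y_i=1$ case (showing hit probability $K_i/m\le 2\mathbb{E}[k_i]/m$) is actually more careful than the paper's appeal to the ``unordered search'' bound on decision trees of a given expected height, which glosses over exactly the early-stopping issue you isolate; but the overall architecture is identical.
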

\begin{proof}
We prove this for bounded-error algorithms; the argument
for zero-error algorithms will be almost identical. \
Let $A$ be the best randomized algorithm for
$f\circ\OR_m$. \ We convert $A$ into an algorithm
$B$ for $f$, as follows.

Given input $x$ to $f$, $B$ generates $n$ random
inputs to $\OR_m$, denoted by $Y_1,Y_2,\ldots,Y_n$,
in the following way. \ First, each $Y_i$ is set to $0^m$,
the all-zero string. \ Next, a random entry of $Y_i$ is
chosen and replaced with a $*$ symbol. \ Finally,
for each index $i$, the $*$ in $Y_i$ will be replaced
by the bit $x_i$. \ This causes $\OR_m(Y_i)=x_i$ to be
true for all $i$.

To evaluate $f(x)$, the algorithm $B$ can simply run $A$
on the string $Y_1Y_2\dots Y_n$. \ Since
\begin{equation}f\circ\OR_m(Y_1Y_2\dots Y_n)
    =f(\OR_m(Y_1)\OR_m(Y_2)\dots\OR_m(Y_n))
    =f(x_1x_2\dots x_n)=f(x),\end{equation}
this algorithm evaluates $f(x)$ with the same error
as $A$. \ This process uses $R(f\circ\OR_m)$ queries to the input $Y$. \
However, not all of these queries require $B$ to query
$x$. In fact, $B$ only needs to query $x$ when algorithm $A$
queries a bit that was formerly a $*$ in one of the $Y_i$. \
The expected number of queries $B$ makes is therefore
the expected number of $*$ entries found by the algorithm
$A$. \ Using Markov's inequality, we can 
turn $B$ into an $R(f)$ algorithm, which uses a fixed
number of queries to calculate $f(x)$ with bounded error;
it follows that the expected number of $*$ entries found
by $A$ is at least $\Omega(R(f))$.

We can now view $A$ as a randomized algorithm that finds
$\Omega(R(f))$ star bits (in expectation) given input
in $Y_1Y_2\dots Y_n$. \ Set $Y=Y_i$ for a randomly chosen
$i$. \ The number of queries $A$ makes to $Y$ must be
exactly $R(f\circ\OR_m)/n$ in expectation. \
Let the probability that $A$ finds $k$ stars be
$p_k$, for $k=0,1,\dots,n$. \ Then
\begin{equation}\sum_{k=0}^n kp_k=\Omega(R(f)).\end{equation}
For each $k$, the probability that $A$ finds a star in $Y$
given it found $k$ stars in total is $k/n$. \ The overall
probability that $A$ finds a star in $Y$ is therefore
\begin{equation}\sum_{k=0}^n p_k(k/n)=\frac{1}{n}\sum_{k=0}^n kp_k
    =\Omega(R(f)/n).\end{equation}

In other words, $A$ makes $R(f\circ\OR_m)/n$ expected
queries to $Y$, and finds the $*$ in $Y$ with probability
$\Omega(R(f)/n)$. \ Now, finding the single $*$ in an
otherwise all-zero string is an unordered search problem;
the chance of solving this problem after $T$ queries is
at most $T/m$. \ In other words, any decision tree of height
$T$ has probability only $T/m$ of solving the problem. \
Since $A$'s querying strategy on $Y$ can be written
as a probability distribution over decision trees
with expected height $R(f\circ\OR_m)/n$, it follows
that the probability of $A$ finding the $*$ is
at most $R(f\circ\OR_m)/nm$. \ Thus we have
$R(f\circ\OR_m)/nm=\Omega(R(f)/n)$,
or $R(f\circ\OR_m)=\Omega(mR(f))$.

If $A$ is a zero-error randomized algorithm instead of
a bounded-error algorithm, the same argument follows
verbatim, except that there's no longer a need to use
Markov's inequality to conclude that the expected number
of stars found by $A$ is at least $\Omega(R_0(f))$.

The last part follows since we can show the same result for $\AND_m$, using the fact that the query complexity of a function $f(x_1,\ldots,x_n)$ equals that of $f(\overline{x_1}, \ldots, \overline{x_n})$ and the query complexity of $f$ and $\bar{f}$
is the same.
\end{proof}

\begin{figure}
\centering
\begin{tikzpicture}[scale=0.5]
    \pgfmathsetmacro{\aHeight}{13.5}
    \pgfmathsetmacro{\fHeight}{\aHeight - 1.5}
    \pgfmathsetmacro{\andHeight}{\fHeight - 1.5}
    \pgfmathsetmacro{\orHeight}{\andHeight - 1.5}
    \pgfmathsetmacro{\fDotsHeight}{\fHeight-0.7}
    \pgfmathsetmacro{\fnHeight}{\fDotsHeight-0.5}
    \pgfmathsetmacro{\andDotsHeight}{\andHeight-1}
    \pgfmathsetmacro{\andnHeight}{\andDotsHeight-0.5}
    \pgfmathsetmacro{\orDotsHeight}{\orHeight-1.2}
    \pgfmathsetmacro{\ornHeight}{\orDotsHeight-0.5}
    \pgfmathsetmacro{\inputHeight}{\orHeight-1.5}
    
    \pgfmathsetmacro{\DotsShift}{0.1}
    
    \pgfmathsetmacro{\aPos}{5-1.2}
    \pgfmathsetmacro{\fPos}{\aPos}
    \pgfmathsetmacro{\fDotsPos}{\fPos+\DotsShift}
    \pgfmathsetmacro{\fnPos}{\fDotsPos}
    \pgfmathsetmacro{\andOnePos}{\aPos-2}
    \pgfmathsetmacro{\andOneDotsPos}{\andOnePos+\DotsShift}
    \pgfmathsetmacro{\andOnenPos}{\andOneDotsPos}
    \pgfmathsetmacro{\andTwoPos}{\aPos+2}
    \pgfmathsetmacro{\andTwoDotsPos}{\andTwoPos+\DotsShift}
    \pgfmathsetmacro{\andTwonPos}{\andTwoDotsPos}
    \pgfmathsetmacro{\orOnePos}{\andOnePos-1}
    \pgfmathsetmacro{\orOneDotsPos}{\orOnePos+\DotsShift}
    \pgfmathsetmacro{\orOnenPos}{\orOneDotsPos}
    \pgfmathsetmacro{\orTwoPos}{\andOnePos+1}
    \pgfmathsetmacro{\orTwoDotsPos}{\orTwoPos+\DotsShift}
    \pgfmathsetmacro{\orTwonPos}{\orTwoDotsPos}
    \pgfmathsetmacro{\orThreePos}{\andTwoPos-1}
    \pgfmathsetmacro{\orThreeDotsPos}{\orThreePos+\DotsShift}
    \pgfmathsetmacro{\orThreenPos}{\orThreeDotsPos}
    \pgfmathsetmacro{\orFourPos}{\andTwoPos+1}
    \pgfmathsetmacro{\orFourDotsPos}{\orFourPos+\DotsShift}
    \pgfmathsetmacro{\orFournPos}{\orFourDotsPos}
    \pgfmathsetmacro{\inputOnePos}{\orOnePos-0.8}
    \pgfmathsetmacro{\inputTwoPos}{\orOnePos+0.8}
    \pgfmathsetmacro{\inputThreePos}{\orTwoPos-0.8}
    \pgfmathsetmacro{\inputFourPos}{\orTwoPos+0.8}
    \pgfmathsetmacro{\inputFivePos}{\orThreePos-0.8}
    \pgfmathsetmacro{\inputSixPos}{\orThreePos+0.8}
    \pgfmathsetmacro{\inputSevenPos}{\orFourPos-0.8}
    \pgfmathsetmacro{\inputEightPos}{\orFourPos+0.8}
    
    \node (a1) at (\aPos,\aHeight) {$a_1$};
    \node (f1) at (\fPos,\fHeight) {\textsc{Forrelation}};
    \node (dots11) at (\fDotsPos,\fDotsHeight) {$\dots$};
    \node (n11) at (\fnPos,\fnHeight) {$^n$};

    \node (and11) at (\andOnePos,\andHeight) {$\wedge$};
    \node (dots12) at (\andOneDotsPos,\andDotsHeight) {$\dots$};
    \node (n12) at (\andOnenPos,\andnHeight) {$^n$};
    \node (and12) at (\andTwoPos,\andHeight) {$\wedge$};
    \node (dots13) at (\andTwoDotsPos,\andDotsHeight) {$\dots$};
    \node (n13) at (\andTwonPos,\andnHeight) {$^n$};

    \node (or11) at (\orOnePos,\orHeight) {$\vee$};
    \node (dots14) at (\orOneDotsPos,\orDotsHeight) {$\dots$};
    \node (n14) at (\orOnenPos,\ornHeight) {$^n$};
    \node (or12) at (\orTwoPos,\orHeight) {$\vee$};
    \node (dots15) at (\orTwoDotsPos,\orDotsHeight) {$\dots$};
    \node (n15) at (\orTwonPos,\ornHeight) {$^n$};
    \node (or13) at (\orThreePos,\orHeight) {$\vee$};
    \node (dots16) at (\orThreeDotsPos,\orDotsHeight) {$\dots$};
    \node (n16) at (\orThreenPos,\ornHeight) {$^n$};
    \node (or14) at (\orFourPos,\orHeight) {$\vee$};
    \node (dots17) at (\orFourDotsPos,\orDotsHeight) {$\dots$};
    \node (n17) at (\orFournPos,\ornHeight) {$^n$};

    \node (input11) at (\inputOnePos,\inputHeight) {};
    \node (input12) at (\inputTwoPos,\inputHeight) {};
    \node (input13) at (\inputThreePos,\inputHeight) {};
    \node (input14) at (\inputFourPos,\inputHeight) {};
    \node (input15) at (\inputFivePos,\inputHeight) {};
    \node (input16) at (\inputSixPos,\inputHeight) {};
    \node (input17) at (\inputSevenPos,\inputHeight) {};
    \node (input18) at (\inputEightPos,\inputHeight) {};
    
    \draw (a1) -- (f1);
    \draw (f1) -- (and11) -- (or11);
    \draw (f1) -- (and12) -- (or13);
    \draw (and11) -- (or12);
    \draw (and12) -- (or14);
    \draw (or11) -- (input11);
    \draw (or11) -- (input12);
    \draw (or12) -- (input13);
    \draw (or12) -- (input14);
    \draw (or13) -- (input15);
    \draw (or13) -- (input16);
    \draw (or14) -- (input17);
    \draw (or14) -- (input18);
    
    \pgfmathsetmacro{\aPos}{13-1.2}
    \pgfmathsetmacro{\fPos}{\aPos}
    \pgfmathsetmacro{\fDotsPos}{\fPos+\DotsShift}
    \pgfmathsetmacro{\fnPos}{\fDotsPos}
    \pgfmathsetmacro{\andOnePos}{\aPos-2}
    \pgfmathsetmacro{\andOneDotsPos}{\andOnePos+\DotsShift}
    \pgfmathsetmacro{\andOnenPos}{\andOneDotsPos}
    \pgfmathsetmacro{\andTwoPos}{\aPos+2}
    \pgfmathsetmacro{\andTwoDotsPos}{\andTwoPos+\DotsShift}
    \pgfmathsetmacro{\andTwonPos}{\andTwoDotsPos}
    \pgfmathsetmacro{\orOnePos}{\andOnePos-1}
    \pgfmathsetmacro{\orOneDotsPos}{\orOnePos+\DotsShift}
    \pgfmathsetmacro{\orOnenPos}{\orOneDotsPos}
    \pgfmathsetmacro{\orTwoPos}{\andOnePos+1}
    \pgfmathsetmacro{\orTwoDotsPos}{\orTwoPos+\DotsShift}
    \pgfmathsetmacro{\orTwonPos}{\orTwoDotsPos}
    \pgfmathsetmacro{\orThreePos}{\andTwoPos-1}
    \pgfmathsetmacro{\orThreeDotsPos}{\orThreePos+\DotsShift}
    \pgfmathsetmacro{\orThreenPos}{\orThreeDotsPos}
    \pgfmathsetmacro{\orFourPos}{\andTwoPos+1}
    \pgfmathsetmacro{\orFourDotsPos}{\orFourPos+\DotsShift}
    \pgfmathsetmacro{\orFournPos}{\orFourDotsPos}
    \pgfmathsetmacro{\inputOnePos}{\orOnePos-0.8}
    \pgfmathsetmacro{\inputTwoPos}{\orOnePos+0.8}
    \pgfmathsetmacro{\inputThreePos}{\orTwoPos-0.8}
    \pgfmathsetmacro{\inputFourPos}{\orTwoPos+0.8}
    \pgfmathsetmacro{\inputFivePos}{\orThreePos-0.8}
    \pgfmathsetmacro{\inputSixPos}{\orThreePos+0.8}
    \pgfmathsetmacro{\inputSevenPos}{\orFourPos-0.8}
    \pgfmathsetmacro{\inputEightPos}{\orFourPos+0.8}
    
    
    \node (a2) at (\aPos,\aHeight) {$a_2$};
    \node (f2) at (\fPos,\fHeight) {\textsc{Forrelation}};
    \node (dots21) at (\fDotsPos,\fDotsHeight) {$\dots$};
    \node (n21) at (\fnPos,\fnHeight) {$^n$};

    \node (and21) at (\andOnePos,\andHeight) {$\wedge$};
    \node (dots22) at (\andOneDotsPos,\andDotsHeight) {$\dots$};
    \node (n22) at (\andOnenPos,\andnHeight) {$^n$};
    \node (and22) at (\andTwoPos,\andHeight) {$\wedge$};
    \node (dots23) at (\andTwoDotsPos,\andDotsHeight) {$\dots$};
    \node (n23) at (\andTwonPos,\andnHeight) {$^n$};

    \node (or21) at (\orOnePos,\orHeight) {$\vee$};
    \node (dots24) at (\orOneDotsPos,\orDotsHeight) {$\dots$};
    \node (n24) at (\orOnenPos,\ornHeight) {$^n$};
    \node (or22) at (\orTwoPos,\orHeight) {$\vee$};
    \node (dots25) at (\orTwoDotsPos,\orDotsHeight) {$\dots$};
    \node (n25) at (\orTwonPos,\ornHeight) {$^n$};
    \node (or23) at (\orThreePos,\orHeight) {$\vee$};
    \node (dots26) at (\orThreeDotsPos,\orDotsHeight) {$\dots$};
    \node (n26) at (\orThreenPos,\ornHeight) {$^n$};
    \node (or24) at (\orFourPos,\orHeight) {$\vee$};
    \node (dots27) at (\orFourDotsPos,\orDotsHeight) {$\dots$};
    \node (n27) at (\orFournPos,\ornHeight) {$^n$};
    
    \node (input21) at (\inputOnePos,\inputHeight) {};
    \node (input22) at (\inputTwoPos,\inputHeight) {};
    \node (input23) at (\inputThreePos,\inputHeight) {};
    \node (input24) at (\inputFourPos,\inputHeight) {};
    \node (input25) at (\inputFivePos,\inputHeight) {};
    \node (input26) at (\inputSixPos,\inputHeight) {};
    \node (input27) at (\inputSevenPos,\inputHeight) {};
    \node (input28) at (\inputEightPos,\inputHeight) {};
    
    \draw (a2) -- (f2);
    \draw (f2) -- (and21) -- (or21);
    \draw (f2) -- (and22) -- (or23);
    \draw (and21) -- (or22);
    \draw (and22) -- (or24);
    \draw (or21) -- (input21);
    \draw (or21) -- (input22);
    \draw (or22) -- (input23);
    \draw (or22) -- (input24);
    \draw (or23) -- (input25);
    \draw (or23) -- (input26);
    \draw (or24) -- (input27);
    \draw (or24) -- (input28);

    \pgfmathsetmacro{\aPos}{23-1.2}
    \pgfmathsetmacro{\fPos}{\aPos}
    \pgfmathsetmacro{\fDotsPos}{\fPos+\DotsShift}
    \pgfmathsetmacro{\fnPos}{\fDotsPos}
    \pgfmathsetmacro{\andOnePos}{\aPos-2}
    \pgfmathsetmacro{\andOneDotsPos}{\andOnePos+\DotsShift}
    \pgfmathsetmacro{\andOnenPos}{\andOneDotsPos}
    \pgfmathsetmacro{\andTwoPos}{\aPos+2}
    \pgfmathsetmacro{\andTwoDotsPos}{\andTwoPos+\DotsShift}
    \pgfmathsetmacro{\andTwonPos}{\andTwoDotsPos}
    \pgfmathsetmacro{\orOnePos}{\andOnePos-1}
    \pgfmathsetmacro{\orOneDotsPos}{\orOnePos+\DotsShift}
    \pgfmathsetmacro{\orOnenPos}{\orOneDotsPos}
    \pgfmathsetmacro{\orTwoPos}{\andOnePos+1}
    \pgfmathsetmacro{\orTwoDotsPos}{\orTwoPos+\DotsShift}
    \pgfmathsetmacro{\orTwonPos}{\orTwoDotsPos}
    \pgfmathsetmacro{\orThreePos}{\andTwoPos-1}
    \pgfmathsetmacro{\orThreeDotsPos}{\orThreePos+\DotsShift}
    \pgfmathsetmacro{\orThreenPos}{\orThreeDotsPos}
    \pgfmathsetmacro{\orFourPos}{\andTwoPos+1}
    \pgfmathsetmacro{\orFourDotsPos}{\orFourPos+\DotsShift}
    \pgfmathsetmacro{\orFournPos}{\orFourDotsPos}
    \pgfmathsetmacro{\inputOnePos}{\orOnePos-0.8}
    \pgfmathsetmacro{\inputTwoPos}{\orOnePos+0.8}
    \pgfmathsetmacro{\inputThreePos}{\orTwoPos-0.8}
    \pgfmathsetmacro{\inputFourPos}{\orTwoPos+0.8}
    \pgfmathsetmacro{\inputFivePos}{\orThreePos-0.8}
    \pgfmathsetmacro{\inputSixPos}{\orThreePos+0.8}
    \pgfmathsetmacro{\inputSevenPos}{\orFourPos-0.8}
    \pgfmathsetmacro{\inputEightPos}{\orFourPos+0.8}
    
    
    \node (a3) at (\aPos,\aHeight) {$a_{10\log n}$};
    \node (f3) at (\fPos,\fHeight) {\textsc{Forrelation}};
    \node (dots31) at (\fDotsPos,\fDotsHeight) {$\dots$};
    \node (n31) at (\fnPos,\fnHeight) {$^n$};

    \node (and31) at (\andOnePos,\andHeight) {$\wedge$};
    \node (dots32) at (\andOneDotsPos,\andDotsHeight) {$\dots$};
    \node (n32) at (\andOnenPos,\andnHeight) {$^n$};
    \node (and32) at (\andTwoPos,\andHeight) {$\wedge$};
    \node (dots33) at (\andTwoDotsPos,\andDotsHeight) {$\dots$};
    \node (n33) at (\andTwonPos,\andnHeight) {$^n$};

    \node (or31) at (\orOnePos,\orHeight) {$\vee$};
    \node (dots34) at (\orOneDotsPos,\orDotsHeight) {$\dots$};
    \node (n34) at (\orOnenPos,\ornHeight) {$^n$};
    \node (or32) at (\orTwoPos,\orHeight) {$\vee$};
    \node (dots35) at (\orTwoDotsPos,\orDotsHeight) {$\dots$};
    \node (n35) at (\orTwonPos,\ornHeight) {$^n$};
    \node (or33) at (\orThreePos,\orHeight) {$\vee$};
    \node (dots36) at (\orThreeDotsPos,\orDotsHeight) {$\dots$};
    \node (n36) at (\orThreenPos,\ornHeight) {$^n$};
    \node (or34) at (\orFourPos,\orHeight) {$\vee$};
    \node (dots37) at (\orFourDotsPos,\orDotsHeight) {$\dots$};
    \node (n37) at (\orFournPos,\ornHeight) {$^n$};

    \node (input31) at (\inputOnePos,\inputHeight) {};
    \node (input32) at (\inputTwoPos,\inputHeight) {};
    \node (input33) at (\inputThreePos,\inputHeight) {};
    \node (input34) at (\inputFourPos,\inputHeight) {};
    \node (input35) at (\inputFivePos,\inputHeight) {};
    \node (input36) at (\inputSixPos,\inputHeight) {};
    \node (input37) at (\inputSevenPos,\inputHeight) {};
    \node (input38) at (\inputEightPos,\inputHeight) {};
    
    \draw (a3) -- (f3);
    \draw (f3) -- (and31) -- (or31);
    \draw (f3) -- (and32) -- (or33);
    \draw (and31) -- (or32);
    \draw (and32) -- (or34);
    \draw (or31) -- (input31);
    \draw (or31) -- (input32);
    \draw (or32) -- (input33);
    \draw (or32) -- (input34);
    \draw (or33) -- (input35);
    \draw (or33) -- (input36);
    \draw (or34) -- (input37);
    \draw (or34) -- (input38);
    
    \pgfmathsetmacro{\cheatsheetHeight}{4}
    
    \node (bigdots) at (18-1.2,\fHeight) {$\dots$};
    \draw (14.5-1.2,\aHeight) ellipse (12 and 0.7);
    \draw[->] (25.3,\aHeight) .. 
        controls(29,\aHeight/2)and (20,\cheatsheetHeight+4)
        .. (20.5,\cheatsheetHeight + 1);
    
    \draw[fill=blue!50] (20,\cheatsheetHeight)
        rectangle (21,1+\cheatsheetHeight);
    \draw (20,\cheatsheetHeight) --
        (19.5,\cheatsheetHeight-4) --
        (21.5,\cheatsheetHeight-4) --
        (21,\cheatsheetHeight);
    \draw[->] (20.5,\cheatsheetHeight-3) -- (n15);
    \draw[->] (20.5,\cheatsheetHeight-2) -- (n26);
    \draw[->] (20.5,\cheatsheetHeight-1) -- (n34);
    \node (array) at (8,\cheatsheetHeight-1)
        {array of size $n^{10}$};
    \node (cell) at (25.5,\cheatsheetHeight-2)
        {\begin{tabular}{l} marked cell containing\\
        pointers to certificates\\
        proving the value of $a$
        \end{tabular}};
    
    \foreach \x in {5,...,29}
        \draw (\x,\cheatsheetHeight) rectangle 
            (\x+1,1+\cheatsheetHeight);
\end{tikzpicture}
\caption{The function $f$ that achieves a superquadratic separation between $Q(f)=\tO(n)$ and $R(f)=\tOmega(n^{2.5})$.}
\label{fig:cs}
\end{figure}
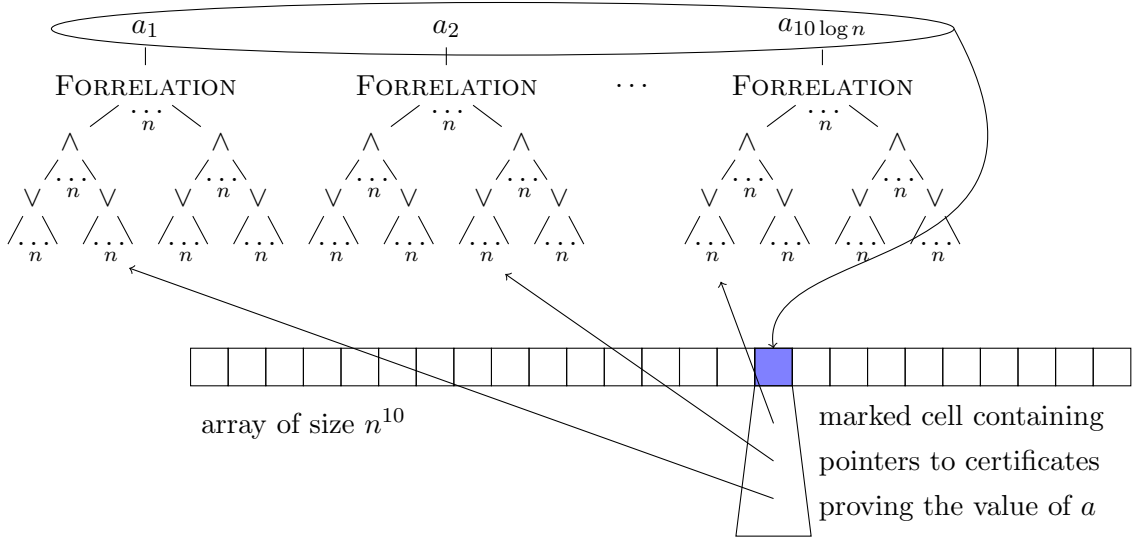

We now define the cheat sheet version of $f$, which we call $f_\CS$, depicted in \fig{cs}. \ Let the input to $f_\CS$ consist of $10 \log n$ inputs to $f$, each of size $nm^2$, followed by $n^{10}$ blocks of bits of size $\tO(mn)$ each, which we refer to as the array of cells or array of cheat sheets. \ Each cell is large enough to hold certificates for all $10\log n$ inputs to $f$ that certify for each input the output of $f$ evaluated on that input and that the promise holds.

Let us denote the input to $f_\CS$ as $z = (x^1, x^2, \ldots, x^{10\log n}, Y_1, Y_2,\ldots, Y_{n^{10}})$, where $x^i$ is an input to $f$, and the $Y_i$ are the aforementioned cells of size $\tO(mn)$. \ We define the value of $f_\CS(z)$ to be $1$ if and only if the following conditions hold:
\begin{enumerate}[topsep=4pt,noitemsep]
\item For all $i$, $x^i$ is in the domain of $f$. If this condition is satisfied, let $\ell$ be the positive integer corresponding to the binary string $(f(x^1),f(x^2),\ldots,f(x^{10\log n}))$.
\item $Y_\ell$ certifies that all $x^i$ are in the domain of $f$ and that $\ell$ equals the binary string formed by their output values, $(f(x^1),f(x^2),\ldots,f(x^{10 \log n}))$.
\end{enumerate}

We now upper bound the quantum query complexity of $f_\CS$. \ The quantum algorithm starts by assuming that the first condition of $f_\CS$ holds and simply computes $f$ on all $10\log n$ inputs, which uses $\tO(Q(f)) = \tO(m)$ queries. \
The answers to these inputs points to the cheat sheet $Y_\ell$, where $\ell$ is the integer corresponding to the binary string $(f(x^1),f(x^2),\ldots,f(x^{10\log n}))$. \
As discussed, verifying the cheat sheet of size $\tO(mn)$ requires only $\tO(\sqrt{mn})$ queries by a recursive application of Grover search. \ The algorithm outputs $1$ if and only if the verification of $Y_\ell$ succeeded. If the certificate is accepted by the algorithm, then both conditions of $f_\CS$ are satisfied and hence it is easy to see the algorithm is correct. \ Hence $Q(f_\CS) = \tO(m+\sqrt{mn})$.

We also know that $R(f) = \tOmega(\sqrt{n}m^2)$. \ We now prove that this implies $R(f_\CS) = \tOmega(R(f)) = \tOmega(\sqrt{n}m^2)$. \ Proving this completes the proof of \thm{RvsQ}, since setting $m=n$ immediately yields $Q(f_\CS) = \tO(n)$ and $R(f_\CS) = \tOmega(n^{2.5})$.

To complete the proof, we show in general that $R(f_\CS) = \tOmega(R(f))$.

\begin{lemma}
\label{lem:Rlowerbound}
Let $f:D\to\B$ be a partial function,
where $D\subseteq [M]^n$, and let $f_\CS$ be the cheat
sheet version of $f$ with $c=10\log n$ copies of $f$. \
Then $R(f_{\CS})=\Omega(R(f)/c^2)=\tOmega(R(f))$.
\end{lemma}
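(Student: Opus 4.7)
The plan is to reduce a randomized algorithm for $f_\CS$ to one for $f$ at a cost of a factor $c$ in queries. Given a $q$-query bounded-error randomized algorithm $A$ for $f_\CS$, I will construct a bounded-error randomized algorithm $B$ for $f$ using $O(q/c)$ expected queries (and hence $O(q/c)$ worst-case queries after a Markov plus constant-factor amplification step). This yields $R(f) = O(R(f_\CS)/c)$, which rearranges to $R(f_\CS) = \Omega(c\, R(f))$, and in particular implies the stated $R(f_\CS) = \Omega(R(f)/c)$.

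$B$ operates by embedding $x$ as one of the $c$ inputs fed to $A$. Pick $i^\star \in [c]$ uniformly at random and draw the remaining $c-1$ inputs $x^j$ ($j \ne i^\star$) from a fixed hard distribution on $\Dom(f)$; since $B$ generates the $x^j$'s itself, it records $f(x^j)$ and a certificate $\sigma^j$ for free. Place $x$ in slot $i^\star$ and populate the cheat-sheet array with exactly two nontrivial cells $Y_{\ell^{(0)}}$ and $Y_{\ell^{(1)}}$, where $\ell^{(b)}$ is the string of known $f(x^j)$'s with $b$ in coordinate $i^\star$. Each of these two cells holds the real certificates $\sigma^j$ for $j \ne i^\star$, plus a lazy candidate $b$-certificate for $x^{i^\star}$: whenever $A$ queries a value-bit of this candidate, $B$ answers by querying $x$. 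The candidate is designed so that combinatorial validity coincides with $f(x) = b$. All other cells are zeroed out. Finally $B$ simulates $A$ on this composite input and outputs $A$'s answer (with a fixed parity fix-up).

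Correctness follows because the true cheat-sheet index is $\ell^{(f(x))}$, so by construction $Y_{\ell^{(f(x))}}$ verifies while $Y_{\ell^{(1-f(x))}}$ does not; hence $f_\CS$ of the composite equals a known function of $f(x)$, and $A$'s output reveals $f(x)$ with bounded error. Query analysis: $B$ queries $x$ only when $A$ queries slot $i^\star$ of the input or the $i^\star$-relevant part of $Y_{\ell^{(0)}}\cup Y_{\ell^{(1)}}$. Because $i^\star$ is uniform over $c$ slots that play symmetric roles in both the input and the cheat sheet, a double-averaging argument gives $O(q/c)$ expected queries to $x$. Markov's inequality and bounded-error amplification convert $B$ to a worst-case bounded-error algorithm of the same complexity up to constants.

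The main obstacle is the design of the lazy candidate certificate: its combinatorial validity as a $b$-certificate must coincide with $f(x) = b$, even though $B$ never learns $f(x)$ in advance. I plan to handle this by committing $B$ ahead of time to a fixed certificate pattern drawn from canonical $b$-inputs of the hard distribution (separately for $b = 0, 1$); the candidate then verifies exactly when $x$ agrees with this pattern on the committed bits, which by the properties of the hard distribution occurs with $\Omega(1)$ probability iff $f(x) = b$. Getting the validity indicator right, respecting the $\tO(nm)$ size budget of each cell, and bounding the small probability loss from imperfect pattern matching are the technical points; none require more than routine probability and a standard amplification step.
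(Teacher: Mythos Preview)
Your approach differs substantially from the paper's. The paper never attempts to construct a valid $1$-input to $f_\CS$; it runs $A$ only on inputs whose cheat-sheet array is entirely \emph{blank} (so $f_\CS=0$), observes that $A$ must nonetheless query the cell $Y_\ell$ indexed by the true answers with probability at least $1/3$ (otherwise it could not distinguish the blank input from the $1$-input obtained by filling in $Y_\ell$ correctly), and then runs a hybrid argument over product distributions $\mathcal{D}^{i_1}\times\cdots\times\mathcal{D}^{i_c}$ to locate a single coordinate whose flip changes the probability of hitting some fixed cell by $\Omega(1/c)$. That coordinate gives a distinguisher between the hard $0$- and $1$-distributions for $f$.

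Your direct-reduction plan, by contrast, requires $B$ to \emph{manufacture} a cheat sheet that is valid for the unknown slot $x^{i^\star}=x$, and this is where the argument breaks. The ``lazy candidate'' scheme commits in advance to a set of certificate positions (drawn from some canonical $b$-input) and fills in the values from $x$ on demand; validity then means that $x$ carries a $b$-certificate supported on those \emph{fixed} positions. There is no reason this happens with $\Omega(1)$ probability over the hard $b$-distribution: already for $\OR_n$ with the standard hard $1$-distribution (a single $1$ at a uniformly random coordinate), any fixed position is a $1$-certificate with probability only $1/n$, and the same obstruction appears for $\ANDOR$ and for the Forrelation$\,\circ\,\ANDOR$ composite actually used in the application. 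Absent the ``$\Omega(1)$ probability'' claim, your constructed composite has $f_\CS=0$ for \emph{both} values of $f(x)$ with probability $1-o(1)$, so $A$'s output carries no information about $f(x)$ and the reduction yields nothing. (A secondary issue: the $O(q/c)$ symmetry count presupposes that $x$ is distributed identically to the $x^j$, which must be arranged via Yao's principle before the averaging step; and the two non-blank cells, whose location and contents depend on $i^\star$, further disturb the symmetry you invoke.) The paper's blank-array hybrid sidesteps all of this precisely because it never needs to fabricate a certificate for an unknown input.
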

\begin{proof}
Let $A$ be any bounded-error randomized algorithm for evaluating $f_{\CS}$. \ We will prove that $A$ makes $\Omega(R(f)/c^2)$ queries.

We start by proving the following claim:
Let $x^1,x^2,\ldots,x^c\in\Dom(f)$ denote $c$ valid inputs to $f$ and let $z$ be an input to $f_{\CS}$ consisting of
$x^1x^2\cdots x^c$ with blank array---that is, it has $0$ in all the entries of all cheat sheets. \ Let us assume that the all zero string is not a valid cheat sheet for any input. \ This can be enforced, for example, by requiring that all cheat sheets begin with the first bit equal to $1$. \ Let $\ell$ be the binary number $f(x^1)f(x^2)\cdots f(x^c)$, and let $h_\ell$ be the
\th{\ell} cheat sheet. \ Then we claim that $A$ must query a bit of $h_\ell$ when run on $z$ with probability at least $1/3$.

This claim is easy to prove. \ Since the all zero string is an invalid cheat sheet, it follows that $f_{\CS}(z)=0$, so $A$ outputs $0$ when run on $z$ with probability at least $2/3$. \ Now if we modify $h_\ell$ in $z$ to be the correct cheat sheet for the given input $x^1,x^2,\ldots,x^c$, then we obtain a new input $z'$ with $f_{\CS}(z^\prime)=1$. \
This means $A$ outputs $1$ when run on $z^\prime$ with probability at least $2/3$. \
However, since these inputs only differ on $h_\ell$, if $A$ does not query $h_\ell$ with high probability, it cannot distinguish input $z$ from input $z'$. Since $A$ is a valid algorithm for $f_\CS$, the claim follows.

We now use the hybrid argument to prove the lemma. \
For any input $z$ with blank array,
let $p_z\in[0,1]^{2^c}$ be
the vector such that for each $i=[2^c]$,
$(p_z)_i$ is the probability
that $A$ makes a query to the \th{i} cheat sheet when
given input $z$. \ Then the $1$-norm of $p_z$ is at most
the running time of $A$, which is at most $R(f)$. \
By the claim, if $\ell$ is the relevant cheat sheet for $z$,
we have $(p_z)_\ell\geq 1/3$. \
On the other hand, since $p_z$ has
$2^c\geq n^{10} \geq R(f)^{10}$
entries that sum to at most $R(f)$,
almost all of them have value
less than, say, $R(f)^{-5}$.

Next, consider hard distributions $\mathcal{D}^0$
and $\mathcal{D}^1$ over the $0$- and $1$-inputs to $f$,
respectively. \ We pick these distributions
such that distinguishing between them
with probability at least $1/2+1/12c$ takes at least
$\Omega(R(f)/c^2)$ queries for any randomized algorithm.

For each $i\in[2^c]$,
let $q_i$ be the expectation over
$p_z$ when $z$ is made of $c$ inputs to $f$ generated from
$\mathcal{D}^i=
\mathcal{D}^{i_1}\times\mathcal{D}^{i_2}\times\dots\times\mathcal{D}^{i_c}$, together with a blank array
(here $i_j$ means the \th{j} bit of $i$
when written in binary). \
Then for all $i\in[2^c]$, we have
$(q_i)_i\geq 1/3$, and the $1$-norm of $q_i$ is at most
$R(f)$, so most entries of $q_i$ are less than $R(f)^{-5}$. \
The entries of $q_i$ can be interpreted
as the probabilities of each cheat sheet being queried
by the algorithm on an input sampled from $\mathcal{D}^i$.

Let $k\in[2^c]$ be such that $(q_0)_k<1/6$. \
Let $k_0,k_1,\dots,k_c$ be such that
$k_0=0$, $k_c=k$, and consecutive $k$'s differ by at most
one bit (when represented as binary vectors of length $c$). \
Since $(q_{k_0})_k<1/6$ and $(q_{k_c})_k\geq 1/3$,
there must be some $j\in[c]$ such that
$(q_{k_{j+1}})_k-(q_{k_j})_k>1/6c$ and $(q_{k_j})_k<1/3$.
Let $a=(q_{k_{j+1}})_k$ and let $b=(q_{k_j})_k$.

We can now use this to distinguish the product distribution
$\mathcal{D}^{k_{j+1}}$ from $\mathcal{D}^{k_j}$,
with probability of error at most $1/2-1/12c$: simply
run the algorithm $A$, output $1$ if it queried an entry
of the \th{k} cheat sheet, and otherwise output $1$ with
probability $\max(0,\frac{1-a-b}{2-a-b})$. \
If this maximum is $0$, it means we have $b>2/3$ and
$a<1/3$, so this algorithm has error at most $1/3$. \
Otherwise, it is not hard to check that the algorithm
will determine if the input is from $\mathcal{D}^{k_{j+1}}$
with probability at most $1/2-(b-a)/2<1/2-1/12c$.

Finally, note that since $k_{j+1}$ and $k_j$ differ
in only one bit, distinguishing $\mathcal{D}^{k_{j+1}}$
from $\mathcal{D}^{k_j}$ allows us to distinguish between
$\mathcal{D}^0$ and $\mathcal{D}^1$. \ This is because given
any input from $\mathcal{D}^0$ or $\mathcal{D}^1$, the
algorithm can sample other inputs to construct an input
from $\mathcal{D}^{k_{j+1}}$ or $\mathcal{D}^{k_j}$,
and then run the distinguishing algorithm. \
It follows that the running time of $A$ is at least
$\Omega(R(f)/c^2)$, by the choice of the distributions
$\mathcal{D}^0$ and $\mathcal{D}^1$.
\end{proof}

\section{Quantum query complexity versus certificate complexity and degree}
\label{sec:BKK}

We now show a nearly quadratic separation between $Q(f)$ and $C(f)$, which yields a similar separation between $Q(f)$ and $\deg(f)$. \ We will also use the functions introduced in this section as building blocks to obtain the nearly \th{4} power separation between $Q(f)$ and $\adeg(f)$ proved in \sec{QUARTIC}.

\subsection{Quadratic gap with certificate complexity}

In this section we establish \thm{QvsC}, restated for convenience:

\QvsC*

Consider the $\ksum$ problem, $\ksum:[M]^n \to \B$, which asks if there are $k$ elements in the input string $x_1, x_2, \ldots, x_n \in [M]$ that sum to $0 \pmod M$. \
Belovs and \v{S}palek \cite{BS13} showed that $Q(\ksum) = \Omega(n^{k/(k+1)})$ when the alphabet $M$ has size $n^k$ and $k$ is constant. \ In \app{ksum},
we show that their proof implies a bound of $Q(\ksum)=\Omega(n^{k/(k+1)}/\sqrt{k})$ for super-constant $k$.

Now the $1$-certificate complexity of the $\ksum$ problem is $k$ (assuming it costs one query to get an element of $M$),
since it suffices to provide the $k$ elements that sum to $0 \pmod M$. \ If we take $k=\log n$, we get
$Q(\ksum)=\Omega(n/\sqrt{\log n})$ and
$C_1(\ksum)=O(\log n)$. \ Although this function is not Boolean, turning it into a Boolean function
will only incur an additional polylogarithmic loss.

While $\ksum$ does not separate quantum query complexity from certificate complexity, its $1$-certificate complexity is much smaller than its quantum query complexity. \
Composing a function with small $0$-certificates, such as the $\AND_n$ with $\ksum$ already gives a function whose quantum query complexity is larger than its certificate complexity: in this case, we have certificate complexity $\tO(n)$ and quantum query complexity $\tOmega(n^{3/2})$, which follows from the following general composition theorem~\cite{HLS07,Rei11,LMR+11,Kim12}:

\begin{theorem}[Composition theorem for quantum query complexity]
\label{thm:Qcomposition}
Let $f:D\to \B$ and $g:E \to \B$ be partial functions where $D \subseteq \B^n$ and $E \subseteq \B^m$. \
Then $Q(f \circ g) = \Theta(Q(f)Q(g))$.
\end{theorem}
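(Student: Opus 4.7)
The plan is to prove both directions by passing through the general (negative-weights) adversary bound $\Adv(f)$, which by Reichardt \cite{Rei11} and Lee-Mittal-Reichardt-{\v{S}}palek-Szegedy \cite{LMR+11} satisfies $Q(f) = \Theta(\Adv(f))$ for every partial Boolean function. The theorem then reduces to the purely combinatorial statement that the adversary bound composes multiplicatively, $\Adv(f\circ g) = \Theta(\Adv(f)\,\Adv(g))$, and one appeals to this characterization on each side.

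For the upper bound $Q(f\circ g) = O(Q(f)Q(g))$ the plan is to use span programs. First, take an optimal span program $P_g$ for $g$ of witness size $\Theta(Q(g))$ and an optimal span program $P_f$ for $f$ of witness size $\Theta(Q(f))$. Next, substitute a fresh copy of $P_g$ at each input wire of $P_f$; the standard span-program composition lemma shows that the composed program has witness size $O(Q(f)Q(g))$. Applying Reichardt's span-program evaluation algorithm then yields a bounded-error quantum algorithm for $f\circ g$ making $O(Q(f)Q(g))$ queries. The key gain over a naive recursive composition is that this route avoids the logarithmic amplification overhead that would arise from boosting each inner call's success probability before feeding it to the outer algorithm.

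For the lower bound $Q(f\circ g) = \Omega(Q(f)Q(g))$ the plan is to directly compose optimal adversary matrices $\Gamma_f$ and $\Gamma_g$ witnessing $\Adv(f)$ and $\Adv(g)$. Following the tensor-product construction of H{\o}yer-Lee-{\v{S}}palek \cite{HLS07}, one builds an adversary matrix $\Gamma$ for $f\circ g$ whose rows and columns are indexed by valid inputs to the composed function, with entries that combine the weights from $\Gamma_f$ on the outer coordinates with those of $\Gamma_g$ on the inner coordinates. A direct spectral-norm calculation then shows $\|\Gamma\| \geq \|\Gamma_f\|\cdot\|\Gamma_g\|$, while each difference matrix $\Gamma \circ D_i$ can be bounded by a single $\|\Gamma_g \circ D_j\|$, so that after normalization we obtain $\Adv(f\circ g) = \Omega(\Adv(f)\,\Adv(g))$.

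The main obstacle is removing the logarithmic factor from the upper bound: naive recursive composition with majority voting only yields $O(Q(f)Q(g)\log Q(f))$, and avoiding this loss is precisely the content of the span-program framework (or equivalently the variable-time amplitude amplification approach of \cite{LMR+11,Kim12}). The lower bound, by contrast, is an essentially algebraic computation requiring no quantum ingredients beyond the adversary characterization, and similar tensor-product composition arguments are well established in the adversary literature.
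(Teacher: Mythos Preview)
The paper does not prove this theorem; it is stated as a known result with citations to \cite{HLS07,Rei11,LMR+11,Kim12} and then used as a black box. Your proposal correctly sketches the standard argument from those references: the characterization $Q(f)=\Theta(\Adv(f))$ together with multiplicative composition of the general adversary bound (tensoring adversary matrices for the lower bound, composing span programs for the upper bound), so there is nothing to compare against in the paper itself.
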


To get an almost quadratic gap between $Q(f)$ and $C(f)$, we use a variant of
$\ksum$ itself as the outer function instead of
the $\AND$ function. \
From $\ksum$, we define a new Boolean function that we call \textsc{Block} $\ksum$, whose quantum query complexity is $\tTheta(n)$ and certificate complexity is also $\tTheta(n)$. \
However, although its certificate complexity is linear, the certificates consist almost exclusively of input bits set to $1$ and only $\tO(1)$ input bits set to $0$. \
This means if we compose this function with $\ksum$, the composed function has certificates of size $\tO(n)$, since the certificates of \textsc{Block} $\ksum$ are essentially composed of $1$s, which are easy to certify for $\ksum$. \ We denote this composed function $\BKK:\B^{n^2} \to \B$. \ It satisfies $C(\BKK) = \tO(n)$ and $Q(\BKK) = \tOmega(n^2)$, which yields the desired quadratic separation.

We now define the \textsc{Block $k$-sum} problem.

\begin{definition}
    Let \textsc{Block $k$-sum} be a total Boolean function
    on $n$ bits defined as follows. \
    We split the input into blocks of size
    $10k\log n$ each and say a block is \emph{balanced}
    if it has an equal number of $0$s and $1$s. \
    Let the balanced
    blocks represent numbers in an alphabet $M$ of size
    $\Omega(n^k)$. The value of the function is
    $1$ if and only if there are $k$ balanced blocks
    whose corresponding numbers sum to $0 \pmod M$ and all other
    blocks have at least as many $1$s as $0$s.
\end{definition}

We then compose this function with $\ksum$ to get the function $\BKK$.

\begin{definition}
    Let $\BKK_{n^2,k}:\B^{n^2} \to \B$ be the function
    \textsc{Block $k$-sum} on $n$ bits composed with a Boolean
    version of $\ksum$ on $n$ bits,
    and define $\BKK_{n^2}$ to be $\BKK_{n^2,k}$
    with $k=\log n$.
\end{definition}

We are now ready to establish the various complexities of $\BKK$.

\begin{theorem}\label{thm:BKK}
    For the total Boolean function $\BKK_{n^2,k}:\B^{n^2} \to \B$, we have
\begin{equation}
C(\BKK_{n^2,k})=O(k^2 n\log n) \quad \text{and} \quad
Q(\BKK_{n^2,k})=\Omega
        \Bigl(\frac{n^{2-2/(k+1)}}{k^3\log^2 n}\Bigr).
\end{equation}
\end{theorem}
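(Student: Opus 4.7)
The plan is to prove the two bounds separately. The quantum lower bound is a clean application of the composition theorem together with the Belovs--\v{S}palek bound for $\ksum$, while the certificate upper bound is more delicate and will require a case analysis where I expect the main obstacle to lie.

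For the quantum lower bound, applying \thm{Qcomposition} gives $Q(\BKK_{n^2,k}) = \Theta(Q(\text{BlockKSum}_n) \cdot Q(\ksum))$, where $\ksum$ here is the Boolean-input form on $n$ bits encoding $n/(k\log n)$ elements of an alphabet of size $n^k$. The extended Belovs--\v{S}palek bound gives $Q(\ksum) = \Omega((n/(k\log n))^{k/(k+1)}/\sqrt{k})$. For $\text{BlockKSum}_n$ I would reduce from $\ksum_N$ with $N = n/(10k\log n)$ elements in an alphabet of size $\Omega(n^k)$: since $\binom{10k\log n}{5k\log n} = \Omega(n^{10k}/\mathrm{polylog})$, each alphabet symbol admits a distinct balanced-binary encoding of length $10k\log n$, and a $\ksum_N$ instance so encoded yields a BlockKSum input in which every block is balanced and which has the same output value. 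This gives $Q(\text{BlockKSum}_n) \geq Q(\ksum_N) = \Omega((n/(k\log n))^{k/(k+1)}/\sqrt{k})$. Multiplying the two factors and using $2k/(k+1) = 2 - 2/(k+1)$ yields $\Omega(n^{2-2/(k+1)}/(k^3 \log^2 n))$, as claimed.

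For the certificate upper bound, I would build $\BKK$ certificates by combining a BlockKSum certificate with per-position $\ksum$ certificates, exploiting that a $\ksum$ $1$-certificate takes $O(k^2\log n)$ bits (just the $k$ summands) while a $\ksum$ $0$-certificate takes $O(n)$ bits (reveal everything). For a $1$-input of $\BKK$, the natural BlockKSum $1$-certificate reveals $k$ balanced blocks in full---contributing $O(k^2\log n)$ positions, of which $O(k^2\log n)$ are zeros---together with $5k\log n$ one-positions in each of the remaining $N = n/(10k\log n)$ blocks, contributing $O(n)$ one-positions overall. Substituting $\ksum$ certificates at each position produces a $\BKK$ $1$-certificate of size $O(n)\cdot k^2\log n + O(k^2\log n) \cdot n = O(k^2 n \log n)$.

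For a $0$-input of $\BKK$ I would case-split on why BlockKSum equals $0$: (a) some block has strictly more $0$s than $1$s, certified by revealing $5k\log n + 1$ zero-positions in that block; (b1) fewer than $k$ blocks are balanced, certified by showing strict majority-$1$ in $N-k+1$ blocks using only one-positions; or (b2) at least $k$ blocks are balanced yet no $k$-subset of them sums to zero. Cases (a) and (b1) fit into the $O(k^2 n\log n)$ budget directly via the same substitution. The main obstacle will be case (b2): a naive certificate that reveals every balanced block in full produces $\Theta(n)$ zero-positions, and the $O(n)$ per-zero $\ksum$ $0$-certificate cost would push the total past $O(k^2 n\log n)$. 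I expect this case to require a subtler argument that keeps the number of revealed zero-positions at $\tO(1)$---for instance by exhibiting a short combinatorial witness that the set of balanced blocks cannot contain any $k$-sum to zero without having to expose each of them entirely. Once that is in place, since $C(\BKK_{n^2,k}) = \max(C_0,C_1)$, the overall bound $O(k^2 n \log n)$ follows.
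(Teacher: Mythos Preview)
Your quantum lower bound argument is correct and matches the paper's: reduce $\ksum$ to \textsc{Block} $\ksum$ via the balanced-block encoding, apply the Belovs--\v{S}palek bound, and multiply via the composition theorem.

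For the certificate upper bound you have correctly identified the structure but left a genuine gap at case~(b2). The resolution is simpler than the ``subtler argument'' you anticipate, and in fact unifies your cases (b1) and (b2). Whenever every block has at least as many $1$s as $0$s, the paper certifies the $0$-input by revealing \emph{all} the $1$-positions and nothing else. This uses $O(n)$ ones and \emph{no} $0$-positions. To see that this is a valid certificate, observe that any total input consistent with this partial assignment can differ from the original only by turning some unrevealed positions (all of which are $0$ in the original) into $1$s. Since each block already has at least $5k\log n$ ones revealed, a block can be balanced in such an extension only if it was balanced in the original input and none of its unrevealed positions were flipped---so it carries the same encoded number. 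Hence the balanced blocks in any extension form a subset of the originally balanced blocks with identical encodings, and since no $k$ of those sum to zero, the function is $0$ on every extension. Substituting the $\ksum$ $1$-certificates at each of the $O(n)$ revealed positions then gives a $\BKK$ certificate of size $O(n)\cdot O(k^2\log n)=O(k^2 n\log n)$, with no $\ksum$ $0$-certificates needed at all.
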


\begin{proof}
We start by analyzing the certificates
of \textsc{Block $k$-sum}. \
The key property we need
is that every input of \textsc{Block $k$-sum}
has a certificate that uses very few $0$s, but can use a large
number of $1$s. \ To see this,
note that we can certify a $1$-input by showing
the $k$ balanced blocks that sum to $0$
which requires $O(k^2\log n)$ $0$s,
and all the $1$s in every other block. \
There are two kinds of $0$ inputs to certify:
A $0$-input that has a block with more $0$s than $1$s
can be certified by providing that block, which only uses $O(k\log n)$ $0$s. \
A $0$-input in which all blocks have at least
as many $1$s as $0$s can be certified by
providing all the $1$s: this provides the number
represented by each block if it were balanced
(though it does not prove the block is actually balanced),
which is enough to check that no $k$ of them sum to zero. \
In conclusion, \textsc{Block $k$-sum} can always be
certified by providing $O(n)$ $1$s and $O(k^2\log n)$
$0$s.

We now analyze the certificate complexity
of $\BKK_{n^2,k}$. \
For each input, the outer \textsc{Block $k$-sum}
has a certificate using $O(k^2\log n)$ $0$s
and $O(n)$ $1$s. \ The inner function, $\ksum$, has
$1$-certificates of size $O(k^2\log n)$ since there are
$k$ numbers to exhibit and each uses $k\log n$
bits when represented in binary, and has $0$-certificates of size $O(n)$. \
Therefore, the composed function
always has a certificate of size $O(k^2 n\log n)$. \
Hence $C(\BKK_{n^2,k})=O(k^2 n\log n)$.

The quantum query complexity
of \textsc{Block $k$-sum} is $\Omega(Q(\ksum)/k\log n)$
by a reduction from $\ksum$. \
Using the result in \app{ksum}, this is
$\Omega(n^{1-1/(k+1)}/k^{3/2}\log n)$. \
Invoking the composition theorem for quantum query complexity (\thm{Qcomposition}), we get
$Q(\BKK_{n^2,k})=\Omega\left(
\frac{n^{2-2/(k+1)}}{k^3\log^2 n}\right)$.
\end{proof}

Thus for the function $\BKK:\B^{n^2} \to \B$, defined as $\BKK = \BKK_{n^2,\log n}$, we have
    \begin{equation}
    C(\BKK)=O(n\log^3 n) =\tO(n) \quad \text{and} \quad
    Q(\BKK)=\Omega\Bigl(\frac{n^2}{\log^5 n}\Bigr) =\tOmega(n^2).
    \end{equation}
This establishes \thm{QvsC}, since $\BKK$ is a total Boolean function.

\subsection{Quadratic gap with (exact) degree}
\label{sec:Qvsdeg}

We now show how to obtain a total function that nearly quadratically separates $Q(f)$ from $\deg(f)$ using any total function that achieves a similar separation between $Q(f)$ and $C(f)$. \ This proves \thm{Qvsdeg}:

\Qvsdeg*

Let $f:\B^n \to \B$ be a total function with $Q(f) = \tOmega(n)$ and $C(f) = \tO(\sqrt{n})$, such as the $\BKK$ function introduced in the previous section. \ Let $f_\CS$ denote the cheat sheet version of $f$ (as described in \sec{cheatsheet}), created using $10\log n$ copies of $f$ that point to a cheat sheet among $n^{10}$ potential cheat sheets, where a valid cheat sheet contains certificates of size $\tO(\sqrt{n})$ for each of the $10\log n$ inputs to $f$ and the binary string corresponding to their outputs equals the location of the cheat sheet. \
In this case $f$ is a total function so the cheat sheets do not certify that the input satisfies the promise, but only the value of $f$ evaluated on the input. \
We claim that the cheat sheet version of $f$ satisfies $Q(f_\CS) = \tOmega(n)$ and $\deg(f_\CS) = \tO(\sqrt{n})$.

Let us start with the degree upper bound, $\deg(f_\CS) = \tO(\sqrt{n})$. Let $\ell \in [n^{10}]$ be a potential location of the cheat sheet. \
For any $\ell$, consider the problem of outputting $1$ if and only if $f_\CS(z)=1$ and $\ell$ is the location of the cheat sheet for the input $z$. \
Since $\ell$ is known, this can be solved by a deterministic algorithm $\mathcal{A}_\ell$ that makes $\tO(\sqrt{n})$ queries, since it can simply check if the certificate stored at cell $\ell$ in the array is valid: it can check all the $10 \log n$ certificates of size $\tO(\sqrt{n})$ for each of the $10 \log n$ instances of $f$ and then check if the outputs of $f$ evaluate to the location $\ell$. \
Since polynomial degree is at most deterministic query complexity, we can construct a representing polynomial for $\mathcal{A}_\ell$ for any location $\ell$. \ This is a polynomial $p_\ell$ of degree $\tO(\sqrt{n})$ such that $p_\ell(z)=1$ if and only if $f_\CS(z)=1$ and $\ell$ is the position of the cheat sheet on input $z$. \
Now we can simply add all the polynomials $p_\ell$ together to obtain a new polynomial $q$ of the same degree. \
We claim $q(z) = f_\CS(z)$ since if $f_\CS(z) = 0$ then certainly all the polynomials $p_\ell(z)=0$ (since none of the cheat sheets is valid) and if $f_\CS(z) = 1$ then $q(z) = 1$ because exactly one of many $p_\ell(z)$ will evaluate to $1$, the one corresponding to the location of the cheat sheet for the input $z$. \ Note that the property used here is that in a $1$-input to $f_\CS$, exactly one location serves as the correct cheat sheet, i.e., the location of the cheat sheet for a $1$-input is unique.

The claim that $Q(f_\CS) = \tOmega(n)$ seems intuitive since $Q(f) = \tOmega(n)$ and the cheat sheet version of $f$ cannot be easier than $f$ itself. \ This intuitive claim is true and we show below that $Q(f_{\CS})=\Omega(Q(f))$ in general, which completes the proof of \thm{Qvsdeg}.

To prove this general result for quantum query complexity, we will need the following
strong direct product theorem due to Lee and Roland \cite{LR13}.

\begin{theorem}
\label{thm:sdpt}
Let $f$ be a (partial) function with ${Q}_{1/4}\left(f\right)\geq T$. \
Then any $T$-query quantum algorithm that outputs the value of $f$ evaluated on $c$ independent instances has success probability at most $O(\left(3/4\right)^{c/2})$.
\end{theorem}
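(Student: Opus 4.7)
Since this theorem is quoted verbatim from Lee and Roland \cite{LR13}, the paper's ``proof'' is simply the citation; nevertheless, here is how I would attempt an independent proof.

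My starting point would be the tight characterization of bounded-error quantum query complexity by the general (negative-weight) adversary bound, $Q_{1/4}(f) = \Theta(\Adv(f))$. Given this, I would invoke the multiplicative adversary method of \v{S}palek, which was introduced precisely to yield strong direct product theorems. In this framework one tracks a multiplicative potential $\Phi_t = \langle \psi_t | \Gamma | \psi_t \rangle$ for some adversary matrix $\Gamma$, where $|\psi_t\rangle$ is the algorithm's state after $t$ queries viewed as a vector over inputs, and the idea is that (i) each oracle query changes $\Phi_t$ by at most a factor $1 + O(1/T)$ when $T = \Theta(\Adv(f))$, while (ii) an algorithm that succeeds on $f$ forces a much larger change in $\Phi$, giving a contradiction when $T$ is too small.

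To upgrade this to a strong direct product theorem for $c$ copies, I would take the potential for the $c$-copy problem to be the tensor power $\Gamma^{\otimes c}$, where $\Gamma$ is an optimal multiplicative adversary matrix for the single-copy problem. The structure of the argument is then: first, show that each query (which touches only one of the $c$ copies) still changes $\Phi_t$ by only a factor $1 + O(1/T)$, uniformly in $c$; second, show that an algorithm which solves all $c$ copies correctly with probability $p$ must drive the final potential to at least $\Omega(p \cdot \lambda^c)$ for some factor $\lambda > 1$ coming from the spectral gap of $\Gamma$; third, combine these bounds under the constraint $T \lesssim \Adv(f)$ to extract $p = O((3/4)^{c/2})$.

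The main obstacle is establishing that a single query has essentially the same per-step multiplicative effect on $\Gamma^{\otimes c}$ as it does on $\Gamma$ itself, without losing a factor that scales with $c$. This requires carefully exploiting the product structure of the spectrum of $\Gamma^{\otimes c}$ together with the fact that a query acts nontrivially on only a single tensor factor; it is essentially the content of showing that the multiplicative adversary bound itself tensorizes, which is where the technical heart of \cite{LR13} lies. The remaining pieces---translating the adversary bound into query complexity and converting a potential-function change into a success-probability bound---are standard parts of the adversary method and I would not expect them to introduce new difficulties.
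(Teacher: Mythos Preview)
You are correct that the paper does not prove this theorem: it is stated with attribution to Lee and Roland \cite{LR13} and used as a black box in the proof of \lem{Qlowerbound}. So there is nothing to compare against on the paper's side.

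Your independent sketch is along the right lines and correctly identifies the multiplicative adversary framework as the engine behind the strong direct product theorem. One point worth sharpening: you begin from $Q_{1/4}(f)=\Theta(\Adv(f))$, which is the \emph{additive} adversary characterization, but then invoke ``an optimal multiplicative adversary matrix $\Gamma$ for the single-copy problem'' as if such a matrix with the right parameters were simply available. The tensorization and per-query progress bounds for the multiplicative adversary were essentially already in \v{S}palek's work; the main technical contribution of \cite{LR13} is precisely the step you skip, namely showing that any additive adversary witness for $\Adv(f)$ can be converted into a multiplicative adversary with only constant loss, so that the multiplicative bound is at least a constant fraction of $\Adv(f)$. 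Your sketch locates the ``technical heart'' in the tensorization step, but that is the easier part; the harder part is getting a good multiplicative $\Gamma$ in the first place. With that caveat, the outline is sound.
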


We now prove the lower bound on the quantum
query complexity of cheat sheet functions.

\begin{lemma}
\label{lem:Qlowerbound}
Let $f:D\to\B$ be a partial function,
where $D\subseteq [M]^n$, and let $f_\CS$ be a cheat
sheet version of $f$ with $c=10\log n$ copies of $f$. \
Then $Q(f_{\CS})=\Omega(Q(f))$.
\end{lemma}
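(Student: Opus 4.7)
The plan is to reduce computing $c = 10\log n$ independent copies of $f$ to a single run of any quantum algorithm $A$ for $f_\CS$, then invoke the Lee--Roland strong direct product theorem (\thm{sdpt}).

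Let $A$ be any bounded-error quantum algorithm for $f_\CS$ using $T$ queries. Given inputs $x^1, \ldots, x^c \in \Dom(f)$, I would have a reducing algorithm $B$ prepare a virtual input $z$ to $f_\CS$ as follows: the $x^i$ occupy the first $c$ blocks, and each cheat-sheet cell $Y_\ell$ (for $\ell \in \{0,1\}^c$) is filled locally by $B$ with content that purports to certify $f(x^i) = \ell_i$ for every $i$. By the defining property of certificates, no $\ell_i$-certificate of $f$ can be consistent with $x^i$ unless $\ell_i = f(x^i)$; therefore the unique cell passing the verifier is $Y_{\ell^*}$ with $\ell^* = (f(x^1), \ldots, f(x^c))$, so $f_\CS(z) = 1$. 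Since $B$ generates all cheat-sheet cells itself, answering $A$'s queries into those cells costs nothing; only $A$'s queries into the $x^i$ portions of $z$ are charged to $B$.

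Next, $B$ extracts $\ell^*$ from $A$'s execution on $z$ using $O(T)$ queries and constant success probability. The core idea is that on $z$ the algorithm $A$ outputs $1$ precisely because $Y_{\ell^*}$ is the unique valid cheat sheet, so $A$ acts as a tester that singles out $\ell^*$; amplitude amplification combined with intermediate measurements of $A$'s query-location register should pinpoint $\ell^*$ without a significant blowup in queries. The result is a quantum algorithm $B$ solving the $c$-fold product problem in $O(T)$ queries with constant success probability.

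Applying \thm{sdpt}: any $T'$-query quantum algorithm evaluates $c$ copies of $f$ with probability at most $O((3/4)^{c/2}) = n^{-\Omega(1)}$ whenever $T' \leq Q_{1/4}(f) = \Omega(Q(f))$. Since $B$'s success probability is $\Omega(1) \gg n^{-\Omega(1)}$, we must have $T = \Omega(Q(f))$, establishing $Q(f_\CS) = \Omega(Q(f))$. The principal obstacle is the extraction step: $A$ natively produces only a single output bit, so recovering the $c$-bit string $\ell^*$ --- and doing so with only $O(T)$ quantum queries rather than $O(\sqrt{2^c}\cdot T)$ --- requires the bulk of the technical work and likely exploits the specific structure of how $A$ must coherently verify the cheat sheet.
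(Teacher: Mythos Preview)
Your overall framework is right --- reduce from the $c$-fold direct product of $f$ and invoke the strong direct product theorem --- but the proposal has two genuine gaps, and the paper's proof resolves both in a way you have not anticipated.

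\textbf{The construction does not give $f_\CS(z)=1$.} You fill each cell $Y_\ell$ with content ``purporting to certify $f(x^i)=\ell_i$,'' and conclude that the unique valid cell is $Y_{\ell^*}$. But the reducing algorithm $B$ does not know the $x^i$ beyond black-box access, so it cannot write down a valid certificate for any $x^i$: a certificate consists of (pointers to) specific bits of $x^i$, and $B$ has no idea which bits those are. Hence even $Y_{\ell^*}$ will generally fail $\phi$, and $f_\CS(z)=0$ for your constructed $z$. The paper sidesteps this entirely by using the \emph{all-zero} cheat-sheet array (so $f_\CS(z)=0$ by design) and never asking $B$ to construct certificates.

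\textbf{The extraction step, as you acknowledge, is the real obstacle --- and aiming for constant success probability is too ambitious.} The paper's key observation is that you do \emph{not} need to recover $\ell^*$ with probability $\Omega(1)$; probability $\Omega(1/T^2)$ already suffices, since $1/T^2 \geq 1/n^2 \gg (3/4)^{c/2}$ when $c=10\log n$. To achieve this, the paper runs $A$ on the input $z$ with the all-zero array, picks a uniformly random query step $t\in[T]$, and measures which cheat-sheet cell $A$ is querying at step $t$. The BBBV hybrid argument \cite{BBBV97} shows that because changing only $Y_{\ell^*}$ to a valid certificate would flip $A$'s output, we must have $\sum_t \sqrt{m_{\ell^*,t}} = \Omega(1)$, whence by Cauchy--Schwarz $\sum_t m_{\ell^*,t} = \Omega(1/T)$, so the random-$t$ measurement returns $\ell^*$ with probability $\Omega(1/T^2)$. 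No amplitude amplification, no exotic extraction --- just BBBV plus the realization that the SDPT only needs to be beaten by an inverse-polynomial margin.
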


\begin{proof}
By \thm{sdpt}, we know that given $c=10\log n$ instances of $f$, any quantum
algorithm that makes fewer than $T={Q}_{1/4}(f)$
queries will correctly decide all the
instances with probability at most
$O(\left(3/4\right)^{c/2})$.

Now suppose by way of contradiction that
${Q}_{1/4}\left(f_{\CS}\right)\leq T$. \
Then we will show how to decide $c$
copies of $f$ with probability
$\Omega\left(1/T^{2}\right)$
by making $T$ quantum queries. \ Since
\begin{equation}
\frac{1}{T^2}
\geq\frac{1}{n^{2}}
\geq\frac{1}{2^{c/5}}\gg\left(\frac{3}{4}\right)^{c/2},
\end{equation}
this is enough to contradict \thm{sdpt}.

Let $Q$ be a quantum algorithm that decides $f_{\CS}$
using at most $T$ queries. \
Then consider running $Q$ on an input
$z=(x^1,\ldots,x^c,Y_1,Y_2,\ldots,Y_{2^c})$, where the $x^i$ are in the domain of $f$
and whose cheat-sheet array $(Y_1,Y_2,\ldots,Y_{2^c})$ has been completely zeroed
out. \ We assume again that the all-zero cheat sheet is invalid for any input. \
This can always be enforced by requiring a valid cheat sheet to have the first
bit set to 1.

For all $y\in\left[2^c\right]$
and $t\in\left[ T\right]$, define $m_{y,t}$, or
\textit{the query magnitude on }$y$
\textit{at query} $t$,
to be the probability that $Q$ would be found
querying some bit in cell
$Y_y$, were we to measure in the standard basis
immediately before the \th{t} query.

For an input $z$ where we have zeroed out the cheat sheet array,
clearly $f_{\CS}(z)=0$ and hence $Q$ outputs $0$ on this input with
high probability. \ On the other hand, if we let
$\ell=f\left(x^1\right)\cdots f\left(x^c\right)\in[2^c]$,
by modifying only the \th{\ell} cell in the array, $Y_\ell$,
we could produce an input $z^{\prime}$
such that $f_{\CS}\left(z^{\prime}\right)=1$. \
From these facts, together with the standard
BBBV hybrid argument \cite{BBBV97}, it follows that
\begin{equation}
\sqrt{m_{\ell,1}}+\cdots+\sqrt{m_{\ell,T}}
    =\Omega\left(1\right).
\end{equation}
So by the Cauchy--Schwarz inequality, we have
\begin{equation}
m_{\ell,1}+\cdots+m_{\ell,T}=\Omega\left(\frac{1}{T}\right).
\end{equation}
This implies that, if we simply choose a
$t\in\left[  T\right]$ uniformly
at random, run $Q$ until the \th{t}
query with the cheat-sheet array zeroed out,
and then measure in the standard basis, we will observe
$\ell=f\left(x^1\right)\cdots f\left(x^c\right)$
with probability $\Omega\left(1/T^2\right)$. \
This completes the proof.
\end{proof}

\section{Quantum query complexity versus approximate degree}
\label{sec:QUARTIC}

We now show how to obtain  a nearly quartic separation between quantum query complexity and approximate degree from a function that quadratically separates $Q(f)$ from $C(f)$. \ As in \sec{cheatsheet}, we first motivate the construction in \sec{int} and then formally give the separation in \sec{imp}. \ The main result of this section is \thm{Qvsadeg}:

\Qvsadeg*

\subsection{Intuition}
\label{sec:int}

To obtain the quartic separation with approximate degree, we could try the same approach as in the previous section for exact degree. \ Using notation from \sec{Qvsdeg}, consider the algorithm $\mathcal{A}_\ell$ that makes $\tO(\sqrt{n})$ queries. \
Instead of using a polynomial to exactly represent $\mathcal{A}_\ell$, we could try to construct more efficient approximating polynomials. \
If there were a quantum algorithm that checked the certificate quadratically faster than the deterministic $\mathcal{A}_\ell$, then we would be done. \
This is because such a quantum algorithm would yield a polynomial of degree $\tO(n^{1/4})$ that approximates $\mathcal{A}_\ell$. \ Then we could sum up these polynomials as before, with some error reduction to ensure that the error in each polynomial is small enough not to affect the sum of the polynomials.  \ This error reduction only adds an overhead of $O(\log n)$, which gives us a polynomial of degree $\tO(n^{1/4})$ that approximates $f_\CS$.

However, it is unclear if there is a quantum algorithm to check a certificate of size $\tO(\sqrt{n})$ quickly. \
Reading the certificate itself could require $\tOmega(\sqrt{n})$ queries. \
All we know is that the certificate can be checked using $\tO(\sqrt{n})$ queries classically, but this does not imply a quadratic quantum speedup. \
To obtain a quadratic speedup the certificates need some structure that can, for example, be exploited with Grover's algorithm. \
But the certificates in this case are simply $0$-inputs of $\ksum$ of size $\tO(\sqrt{n})$, and it is unclear how to quantumly check if an input is a $0$-input (even with query access to any certificate) any quicker than querying the entire input. \
What we would like is for the certificate to be checkable using $\tO({n}^{1/4})$ queries to the certificate and the input. \
So we construct a new function that has this property by modifying the $\BKK$ function used in \sec{Qvsdeg}.

Let $f$ be $\BKK:\B^{n^2} \to \B$ for some large, but constant $n$. \
We choose $n$ to be large enough that $\Adv(f) \geq n^{1.99}$ and $C(f) \leq n^{1.01}$. \
Such an $n$ exists because asymptotically $C(\BKK) = \tO(n)$ and $Q(\BKK) = \tOmega(n^2)$ (and $Q(\BKK) = \Theta(\Adv(\BKK))$ \cite{LMR+11}). \
Composing this function with itself $d$ times gives us a new function $g:\B^N \to \B$, where $N = n^{2d}$. \
This function has $C(g) \leq C(f)^d \leq n^{1.01d} \leq N^{0.51}$. \
Since the general adversary bound satisfies a perfect composition theorem \cite{LMR+11}, we have
$Q(g) = \Omega(\Adv(g)) = \Omega(\Adv(f)^d) = \Omega(n^{1.99d}) = \Omega(N^{0.99})$.

If we view the function $g$ as a tree of depth $d$ and fanin $n^2$, it has $N$ leaves but only $N^{0.51}$ of these leaves are part of a certificate. \
Consider the subtree of $g$ that consists of all  nodes that are ancestors of these $N^{0.51}$ certificate leaves. \
This subtree has size $O(N^{0.51})$ and the set of values of all nodes in this subtree is a certificate for $g$. \
Furthermore, this certificate can be checked by a quantum algorithm in only $O(N^{0.26})$ queries, since each level of the tree consists of at most $N^{0.51}$ instances of $f$ acting on a constant number of bits. \
Checking a constant-sized $f$ costs only $O(1)$ queries, and searching over all $N^{0.51}$ instances for an invalid certificate costs $O(N^{0.26})$ queries by Grover's algorithm. \ This can be done for each level, resulting in a quantum algorithm with query complexity $\tO(N^{0.26})$.

Thus we have constructed a function $g$ for which $Q(g)$ is close to $N$, but the complexity of quantumly checking a certificate (given query access to it) is close to $N^{1/4}$. \ Plugging this into the cheat sheet construction yields a nearly \th{4} power separation between quantum query complexity and approximate degree.

\subsection{Implementation}
\label{sec:imp}

Recall the function $\BKK:\B^{n^2} \to \B$ introduced in \thm{BKK}. \ We introduce a function we call $\RBKK$ which consists of
recursively composing $\BKK$ with itself $d$ times; that is, we replace each bit in $\BKK$ with a new copy of $\BKK$, then replace each bit in the new copies with yet more copies of $\BKK$, and so on. \
The resulting function will look like a tree of height $d$ where each vertex has $n^2$ children and will have total
input size $N = n^{2d}$. \
We will choose $d=(4/25)\log n/\log\log n$ to optimize our construction. \
The resulting function $\RBKK$ has the following properties.

\begin{theorem}
\label{thm:RBKK}
    There exists a total function $\RBKK:\B^N \to \B$
    such that given query access to a certificate of size $N^{1/2+o(1)}$, a quantum algorithm can check the validity of the certificate using
    at most $N^{1/4+o(1)}$ queries. \ Furthermore, $Q(\RBKK) = N^{1-o(1)}$.
\end{theorem}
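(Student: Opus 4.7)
The plan is to define $\RBKK$ as the $d$-fold recursive composition of $\BKK_{n^2}$ with itself, taking $d=(4/25)\log n/\log\log n$ so that the total input size is $N=n^{2d}$. All three claims---$C(\RBKK)\leq N^{1/2+o(1)}$, $Q(\RBKK)=N^{1-o(1)}$, and quantum verifiability in $N^{1/4+o(1)}$ queries---will then follow from \thm{BKK} together with standard composition arguments, with this particular $d$ chosen so that every polylogarithmic factor in $n$ aggregates to $N^{o(1)}$.

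First I would bound the certificate complexity. Since certificates compose multiplicatively, $C(\RBKK)\leq C(\BKK_{n^2})^d=O(n\log^3 n)^d=n^d\cdot(\log n)^{O(d)}$. Here $n^d=\sqrt{N}$, and $(\log n)^{O(d)}=\exp(O(d\log\log n))=n^{O(1)}$, which equals $N^{O(1/d)}=N^{o(1)}$ since $d\to\infty$. Thus $C(\RBKK)\leq N^{1/2+o(1)}$. For the quantum lower bound, applying \thm{Qcomposition} recursively yields $Q(\RBKK)=\Theta(Q(\BKK_{n^2})^d)=\Omega((n^2/\polylog n)^d)=\Omega(N/N^{o(1)})=N^{1-o(1)}$.

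Next I would exhibit the certificate-checking quantum algorithm. Viewing $\RBKK$ as a balanced tree of depth $d$ and fanin $n^2$, a valid certificate corresponds to a subtree $\mathcal{T}$ in which each internal node stores a $\BKK_{n^2}$-certificate for its local instance, and each leaf of $\mathcal{T}$ is an actual input bit. The verifier proceeds level by level: at each level $i\in\{0,\ldots,d-1\}$, it runs Grover's algorithm over the (at most $N^{1/2+o(1)}$) nodes of $\mathcal{T}$ at level $i$, searching for one whose local $\BKK_{n^2}$-certificate is inconsistent with the values that the certificate claims for its children. Checking a single node takes only $O(C(\BKK_{n^2}))=N^{o(1)}$ queries to the certificate, so each level's Grover call costs $O(\sqrt{N^{1/2+o(1)}}\cdot N^{o(1)})=N^{1/4+o(1)}$ queries. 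A final Grover pass over the $\leq N^{1/2+o(1)}$ leaves of $\mathcal{T}$ compares the claimed leaf values against the actual input. Summing over the $d=O(\log N/\log\log N)$ levels, the total query count is $N^{1/4+o(1)}$.

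The main obstacle I foresee is controlling error accumulation across the $d$ Grover calls: since each call has only bounded success probability, I would amplify each to error $O(1/d)$ via $O(\log d)$ repetitions before union-bounding, contributing only an $N^{o(1)}$ factor to the overall query count. A secondary issue is arranging the certificate format so that the verifier can obtain $O(1)$-query access to the $\BKK_{n^2}$-certificate associated with any node of $\mathcal{T}$; this is easily handled by laying the certificate out as a flat array indexed by a canonical encoding of tree positions, at an $O(\log N)$ overhead that is absorbed into $N^{o(1)}$.
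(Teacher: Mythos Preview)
Your construction and overall strategy match the paper's proof: $\RBKK$ is the $d$-fold composition of $\BKK_{n^2}$ with $d=(4/25)\log n/\log\log n$, the certificate is the recursive tree of $\BKK$ certificates, and verification is a Grover search over the nodes of that tree.

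The one substantive difference is in the quantum lower bound. You invoke \thm{Qcomposition} recursively to claim $Q(\RBKK)=\Theta(Q(\BKK_{n^2})^d)$, but the universal constants in that $\Theta$ compound to a factor $c^{d}$ after $d$ applications; since $d\to\infty$ this is not a legitimate $\Theta$. In your setting $c^{d}=\exp(O(d))=N^{o(1)}$, so the conclusion $Q(\RBKK)=N^{1-o(1)}$ survives, but you should say this explicitly. The paper sidesteps the issue entirely by passing through the general adversary bound, which satisfies the \emph{exact} composition identity $\Adv(f^{d})=\Adv(f)^{d}$ for Boolean $f$ \cite{LMR+11}; combined with $Q=\Theta(\Adv)$ this gives $Q(\RBKK)=\Omega(\Adv(\BKK)^{d})$ with no constant to track. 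Either route works here, but the adversary-based argument is cleaner and would be required if $d$ were any larger.
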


\begin{proof}
Our function $\RBKK$ is defined as the $d$-fold composition of $\BKK:\B^{n^2}\to \B$ with itself. \ This yields a function on $N = n^{2d}$ bits and we set $d=(4/25)\log n/\log\log n$.

With this choice of $d$, we can show more precisely that the function $\RBKK$ has a certificate that can be checked by a quantum algorithm using $O(N^{1/4}L_N[1/2,1])=N^{1/4+o(1)}$ queries, where $L_N[a,c]=\exp((c+o(1))\log^a N\log\log^{1-a} N)$, and
$Q(\RBKK)=\Omega({N}/{L_N[1/2,1]}) = N^{1-o(1)}$.

With this choice of $d$, the parameters are now related as follows: Since $N=n^{2d}$, we have
$\log N=2d\log n=(8/25)\log^2 n/\log\log n$. \ This gives
$\log n=(5/4+o(1))\sqrt{\log N\log\log N}$,
so we get $n%
=L_N[1/2,5/4]$. \
Additionally, since
$(\log n)^d=\exp(d\log\log n)=\exp((4/25)\log n)=n^{4/25}$, it follows that $(\log n)^d=L_N[1/2,1/5]$.

We start with the quantum lower bound for $Q(\RBKK)$. \
By \thm{BKK} and the optimality of the general adversary bound \cite{LMR+11}, we have
\begin{equation}
\Adv(\BKK) = \Omega(Q(\BKK))=\Omega\left(
    \frac{n^2}{\log^5 n}\right)
    =\frac{n^2}{(\log n)^{5+o(1)}}.\end{equation}
By using the fact that $\Adv(f^d) = \Adv(f)^d$ for Boolean functions $f$ \cite{LMR+11}, we get
\begin{equation}
Q(\RBKK)=\Omega(\Adv(\BKK)^d)
=\frac{n^{2d}}{(\log n)^{(5+o(1))d}}
=\frac{N}{L_N[1/2,1/5]^{5+o(1)}}=
\Omega({N}/{L_N[1/2,1]}).
\end{equation}

Now we show the upper bound on quantumly checking a certificate. \
First note that every non-leaf node in the $\RBKK$ tree corresponds to a $\BKK$ instance. \
For each such node, there is therefore a set of $C(\BKK)$ children that constitute a
certificate for that $\BKK$ instance. \ We can therefore certify the $\RBKK$ instance by starting from the top
of the tree, listing out $C(\BKK)$ children that constitute a certificate, then listing out $C(\BKK)$
children for each of those, and so on. \
In each layer $i$ of the tree, we thus have $C(\BKK)^i$ nodes that belong to a certificate.

We require our quantumly checkable certificate to provide, for each non-leaf node that belongs to a certificate, pointers to $C(\BKK)$ of the node's children that constitute a certificate for that $\BKK$ instance, starting with the root of the tree. \
A quantum algorithm can then use Grover search to search for a bad certificate. \
More precisely, the algorithm checks the certificate for the root to see if the $C(\BKK)$ children of the root pointed to and their claimed values do indeed form a certificate. \ It then checks if all the claimed values in the first level are correct assuming the claimed values of nodes in the second level and so on. \ The total number of certificates to check is
\begin{equation}
1+C(\BKK)+C(\BKK)^2 +\dots+C(\BKK)^{d-1}\leq 2C(\BKK)^{d-1},
\end{equation}
where each certificate has size $C(\BKK)$. \ Therefore, the quantum algorithm will make
\begin{equation}
\tO(C(\BKK)^{(d-1)/2}C(\BKK)) =\tO(C(\BKK)^{(d+1)/2})
\end{equation}
queries, where we have logarithmic factors due to the pointers being encoded in binary and error reduction. \
From \thm{BKK}, we have
$C(\BKK)=O(n\log^3 n)=n(\log n)^{3+o(1)}$, so the search takes
\begin{equation}
\tO(n^{(d+1)/2}(\log n)^{(3+o(1))(d+1)/2})
    =\tO(N^{1/4}n^{1/2}(\log n)^{(3/2+o(1))d})
    =\tO(N^{1/4}L_N[1/2,37/40]).\end{equation}
quantum queries, which is is $O(N^{1/4}L_N[1/2,1])$.
\end{proof}

Using this function $\RBKK$ we can now establish \thm{Qvsadeg}. \ This proof is very similar to the separation between quantum query complexity and exact degree in \sec{Qvsdeg}.

Let $f$ be the total function $\RBKK:\B^n \to \B$ with $Q(f) = n^{1-o(1)}$ and $C(f) = n^{1/2+o(1)}$, and more importantly given query access to this certificate, a quantum algorithm can check its validity using $n^{1/4+o(1)}$ queries. \
Let $f_\CS$ denote the cheat sheet version of $f$ created using $10\log n$ copies of $f$ as before. \
From \lem{Qlowerbound} we know that $Q(f_\CS) = \Omega(Q(f))$ and hence $Q(f_\CS) = n^{1-o(1)}$. \ We now show $\adeg(f_\CS) = n^{1/4+o(1)}$.

Let $\ell \in [n^{10}]$ be a potential location of the cheat sheet. \
For any $\ell$, consider the problem of outputting $1$ if and only if $f_\CS(z)=1$ and $\ell$ is the location of the cheat sheet for the input $z$. \
For any fixed $\ell$, this can be solved by a quantum algorithm $\mathcal{Q}_\ell$ that makes $n^{1/4+o(1)}$ queries as shown in \thm{RBKK}, since it can simply check if the certificate stored at cell $\ell$ in the array is valid: it can check all the $10 \log n$ certificates for each of the $10 \log n$ instances of $f$ and then check if the outputs of $f$ evaluate to the location $\ell$.

Since approximate polynomial degree is at most quantum query complexity, we can construct a representing polynomial for $\mathcal{Q}_\ell$ for any location $\ell$. \
This is a polynomial $p_\ell$ of degree $n^{1/4+o(1)}$ such that $p_\ell(z)=1$ if and only if $f_\CS(z)=1$ and $\ell$ is the position of the cheat sheet on input $z$. \
Now we can simply add all the polynomials $p_\ell$ together to obtain a new polynomial $q$ of the same degree, except that we first reduce the error in each polynomial to below $1/n^{10}$ so that the total error is bounded. \
(This can be done, for example, using the amplification polynomial construction of \cite[Lemma 1]{BNRdW07}.) \
Now $q(z) = f_\CS(z)$ as argued in \sec{Qvsdeg}.

\section{Cheat sheet functions}
\label{sec:cheat}

In this section we define the cheat sheet framework more generally. \ Once the framework is set up, proofs based on the framework are short and conveniently separate out facts about the framework from results about the separation under consideration. \ To demonstrate this, we reprove some known separations in \sec{oldsep}.

\subsection{Certifying functions and cheat sheets}

Intuitively, a certifying function for a function
$f:D\to\B$, where $D \subseteq [M]^n$, is a function $\phi$
that takes in an input $x \in [M]^n$, a claimed value $y_1$ for
$f(x)$, and a proof $(y_2, y_3, \ldots, y_k)$ that
we indeed have $x \in D$ and $f(x)=y_1$. \
The value of the certifying function
will be 1 only when these conditions hold. \ We would also like
the certifying function to depend nontrivially on the certificate
$y$, i.e., for every $x \in D$ there should be some $y$ that makes
the function output $1$, and some $y$ that makes it
output $0$. \ For convenience of analysis, we enforce that
the certificate $y=0^k$ is invalid for all $x$. \
Any nontrivial certifying function can be
made to have this property by requiring that (say) the second bit of $y$, $y_2$,
is $1$ for all valid certificates.

\begin{definition}[Certifying function]
Let $f:D\to\B$ be a partial function, where $D \subseteq [M]^n$ for some integer $M\geq2$. \
We say a total function $\phi:[M]^n\times\B^k\to\B$, where $k$ is any positive integer,
is a \emph{certifying function} for $f$ if the following conditions are satisfied:
\begin{enumerate}[itemsep=0pt]
\item $\forall x\notin D$, $\phi(x,y)=0$ \hfill (invalid inputs should not have certificates)
\item $\forall x\in[M]^n$, $\forall y\in\B^k$, if $y_1 \neq f(x)$ then $\phi(x,y)=0$ \hfill (certificate asserts $y_1 = f(x)$)
\item $\forall x\in D$, $\exists y\in\B^k$ such that $\phi(x,y)=1$ \hfill (valid inputs should have certificates)
\item $\forall x \in [M]^n$, $\phi(x,0^k)=0$ \hfill (nontriviality condition)
\end{enumerate}
\end{definition}

Typically, a certifying function will be defined so
that its value is only $1$ if $y$ includes
pointers to a certificate in $x$ and if $f$ is a partial function,
we will also want $y$ to include a proof that $x$ satisfies the promise of $f$. \
Thus the query complexity of $\phi$ may be smaller than that of $f$.


We now define the cheat sheet version of a function $f$ with certifying function $\phi$. \
In the separations in the previous sections we used $10\log n$ copies of the function $f$ to create the address
of the cheat sheet. \
For generality we now use $c = \lceil 10 \log D(f) \rceil$ instead so that the construction only adds logarithmic factors in $D(f)$, as opposed to logarithmic factors in $n$, which may be larger than $D(f)$. \
This does not make a difference in our applications, however.

\begin{definition}[Cheat sheet version of $f$]
Let $f:D\to\B$ be a partial function, where $D \subseteq [M]^n$ for some integer $M\geq2$, and let $c = \lceil 10 \log D(f) \rceil$. Let $\phi:[M]^n\times\B^k\to\B$, for some positive integer $k$, be a certifying function for $f$. \
Then \emph{the cheat sheet version of $f$ with respect to $\phi$}, denoted $f_\CSp$, is a total function
\begin{equation}
f_\CSp:([M]^n)^c\times((\B^k)^c)^{2^c}\to\B
\end{equation}
acting on an input $(X,Y_1,Y_2,\ldots, Y_{2^c})$, where $X \in ([M]^n)^c$ and $Y_i \in (\B^k)^c$. \
Also, let $X \in ([M]^n)^c$ be written as $X=(x^1,x^2,\ldots, x^c)$, where $x^i \in [M]^n$. \
Then we define the value of $f_\CSp(X,Y_1,Y_2,\ldots Y_{2^c})$ to be $1$ if and only if conditions (1) and (2) hold.
\begin{enumerate}[topsep=4pt]
\item[(1)] For all $i\in [c]$, $x^i \in D$.
\end{enumerate}
If condition (1) is satisfied, let $\ell \in [2^c]$ be the positive integer corresponding to the binary string $(f(x^1),f(x^2),\ldots,f(x^c))$ and let $Y_\ell = (y^1, y^2, \ldots y^c)$ where each $y^i \in \B^k$.
\begin{enumerate}[topsep=4pt]
\item[(2)] For all $i \in [c]$, $x^i \in D$ and $\phi(x^i,y^i) = 1$.
\end{enumerate}
\end{definition}

Intuitively, the input to $f_\CSp$ is interpreted as $c$ inputs to $f$,
which we call $(x^1,x^2,\ldots,x^c)$, followed by $2^c$ strings
$(Y_1,Y_2,\ldots Y_{2^c})$ of length $ck$ called cheat sheets. \
The value of the function will be $0$ if any of the $c$ inputs to $f$
do not satisfy the promise of $f$. \
If they do satisfy the promise, let $\ell \in [2^c]$ be the number encoded by
the binary string $f(x^1)f(x^2)\dots f(x^c)$, where $0^c$ and $1^c$ encode the numbers
$1$ and $2^c$ respectively. \
If the \th{\ell} cheat sheet in the input, $Y_\ell$, is written as
$Y_\ell=(y^1, y^2, \ldots, y^c)$, where $y^i\in\B^k$ for all $i$, then the value of
$f_\CSp$ is $1$ if $\phi(x^i,y^i)=1$
for all $i \in [c]$, and $0$ otherwise.

In other words, the value of $f_\CSp$ is $1$ if
the first part of its input consists of $c$ valid
inputs to $f$ that together encode a location of a
cheat sheet in the second part of the input, and
this cheat sheet in turn
contains pointers that certify
the values of the $c$ inputs to $f$. \ The idea is that
the only way to find the cheat sheet is to solve
the $c$ copies of $f$, so that $f_\CSp$ has query
complexity at least that of $f$, but once the cheat sheet
has been found one can  verify the value of
$f_\CSp$ using only the query complexity of $\phi$,
which may be much smaller.

We now characterize the query complexity of $f_\CSp$
in terms of the query complexities of $f$ and $\phi$ in
various models. \ The following result summarizes our results
for the query complexity of cheat sheet functions.

\begin{theorem}[Query complexity of cheat sheet functions]
\label{thm:cs}
Let $f:D\to\B$, where $D \subseteq \B^n$, be a partial function,
and let $\phi$ be a certifying function for $f$. \
Then the following relations hold.
\begin{itemize}[noitemsep,topsep=2pt]
\item $D(f_\CSp)=\tTheta(D(f)+D(\phi))$
\item $R(f_\CSp)=\tTheta(R(f)+R(\phi))$
\item $Q(f_\CSp)=\tTheta(Q(f)+Q(\phi))$
\item $\deg(f_\CSp)=\tTheta(\deg(\phi))$
\item $\adeg(f_\CSp)=\tTheta(\adeg(\phi))$
\item $R_0(f_\CSp)=\tOmega(R_0(f)+R_0(\phi))$ (but the corresponding upper bound relation is false)
\item $Q_E(f_\CSp)=\tO(Q_E(f)+Q_E(\phi))$ (we conjecture that the corresponding lower bound holds)
\end{itemize}
\end{theorem}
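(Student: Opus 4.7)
The plan is to prove each relation by a direct algorithm or polynomial construction for the upper bound, and by a hybrid-argument or subfunction reduction for the lower bound, handling the ``$M(f)$'' and ``$M(\phi)$'' contributions separately.

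\textbf{Upper bounds for $D$, $R$, $Q$, $Q_E$.} The canonical algorithm evaluates $f$ on each of the $c=\lceil 10\log D(f)\rceil$ inputs $x^1,\ldots,x^c$, interprets the answers as an index $\ell\in[2^c]$, and then evaluates $\phi$ on each of the $c$ pairs $(x^i,y^i)$ stored in $Y_\ell$, accepting iff all $c$ calls to $\phi$ return $1$. For $D$ and $Q_E$ this is exact; for $R$ and $Q$ it gives constant success probability after $O(\log c)$ rounds of amplification per subroutine (using the error-free composition of~\cite{Rei11,LMR+11} in the quantum case), at the cost of only a logarithmic factor. The total query count is $\tO(M(f)+M(\phi))$ in all four cases. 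The identical recipe fails for $R_0$: if even one subroutine outputs $*$, the address $\ell$ becomes undetermined, and amplification in the zero-error setting is not cheap, so only the lower bound is claimed.

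\textbf{Polynomial upper bounds.} For each $\ell\in[2^c]$, let $p_\ell$ be the polynomial that multiplies a degree-$c$ monomial indicator (asserting that the first coordinate of each $y^i$ in $Y_\ell$ concatenates to $\ell$) against $\prod_{i=1}^c q_i$, where $q_i$ is a representing polynomial for $\phi$ applied to $(x^i,Y_{\ell,i})$. Then $\deg(p_\ell)=O(c\cdot\deg(\phi))$, and by condition~(2) of a certifying function together with the uniqueness of $\ell$, at most one $p_\ell$ is nonzero on any given input, so $\sum_\ell p_\ell\equiv f_\CSp$. This yields $\deg(f_\CSp)=\tO(\deg(\phi))$. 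For $\adeg$, replace each $q_i$ by an $\adeg(\phi)$-degree approximation whose error has first been amplified below $1/(6\cdot 2^c)=1/\poly(D(f))$ via the polynomial construction of~\cite{BNRdW07}; the errors then accumulate to at most $1/3$ across the $2^c$ summands, giving $\adeg(f_\CSp)=\tO(\adeg(\phi))$.

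\textbf{Lower bounds.} The ``$M(f)$'' contribution in each line is obtained by adapting \lem{Rlowerbound} (whose hybrid argument over product distributions goes through verbatim for $D$ and $R_0$) and \lem{Qlowerbound} (for $Q$). The ``$M(\phi)$'' contribution comes from a subfunction reduction: fix $x^1,\ldots,x^{c-1}$ to specific elements of $D$ with known outputs $s'$, zero out every cheat sheet except the two cells indexed by $(s',0)$ and $(s',1)$, populate the first $c-1$ slots of those two cells with valid certificates for $x^1,\ldots,x^{c-1}$, and leave the final slot free. The restricted function in $(x^c,y^c)$ is $\phi$ up to relabeling, and since all measures in the statement are monotone under subfunctions, $M(f_\CSp)\geq M(\phi)$; the same restriction proves the matching lower bounds on $\deg$ and $\adeg$. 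For $Q_E$ the ``$M(f)$'' portion of the lower bound eludes this approach because \lem{Qlowerbound} invokes the strong direct product theorem of~\cite{LR13}, which is only stated for bounded error; an exact analogue would settle the conjecture.

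\textbf{Main obstacle.} The most delicate step is the $\adeg$ upper bound, where one sums $2^c=\poly(D(f))$ approximating polynomials and must keep the total pointwise error below $1/3$. This is tight: without the disjoint-support structure of the $p_\ell$'s on $1$-inputs (and their simultaneous vanishing on $0$-inputs), the naive union bound would fail. The amplification polynomial of~\cite{BNRdW07} fits precisely because it reduces constant error to $\poly(1/N)$ error at only $O(\log N)$ cost in degree, exactly matching what the $2^c$-fold sum demands.
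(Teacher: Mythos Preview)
Your proposal largely matches the paper's approach, but there is one genuine gap. You assert that the hybrid argument of \lem{Rlowerbound} ``goes through verbatim for $D$ and $R_0$,'' yet that argument terminates by producing a \emph{bounded-error} distinguisher between the hard distributions $\mathcal{D}^0$ and $\mathcal{D}^1$, which yields only $R_0(f_\CSp)=\tOmega(R(f))$, not $\tOmega(R_0(f))$. The paper needs a different idea for $R_0$: a zero-error algorithm must output a \emph{certificate} for $f_\CSp$, and on a blank-array input any short certificate must rule out all but a negligible fraction of the $2^c$ cheat-sheet indices, hence must contain $0$-certificates (or $1$-certificates) for at least half of the $c$ copies of $f$. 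Shuffling the given instance among $c-1$ fresh samples from the hard distribution then converts the $f_\CSp$ algorithm into a zero-error certificate finder for $f$ itself, giving the $R_0(f)$ bound. (For $D$ the paper also uses a direct adversary strategy rather than a distributional hybrid, though there the difference is cosmetic.)

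Two smaller remarks. Your degree-$c$ monomial indicator in the polynomial construction is a welcome addition: without it one can park valid certificates for $x^1,\ldots,x^c$ at a wrong location $\ell\neq\ell^*$ and make $\sum_\ell p_\ell$ positive on a $0$-input; the paper's concrete argument in \sec{Qvsdeg} includes exactly this check, though the general lemma glosses over it. And in your subfunction reduction for the $M(\phi)$ lower bound, the ``final slot'' is free in \emph{both} cells $(s',0)$ and $(s',1)$, so you must additionally identify those two variable blocks to obtain exactly $\phi$; the paper sidesteps this by restricting to the promise that all $2^c$ cheat sheets are identical and that all but the last of the $c$ certifying instances are valid.
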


The algorithmic measures, $D$, $R$, and $Q$ behave as expected
since \thm{cs} asserts that the straightforward way to compute $f_\CSp$
in these models, which is to first compute all $c$ copies of $f$ and then
check if they point to a valid cheat sheet, is essentially optimal. \
We conjecture that $Q_E$, the quantum analogue of $D$,
should behave similarly to $D$ or $Q$, but we can only prove the upper bound (see \lem{upper}).
In fact, if the lower bound for $Q_E$ can be proved, then we would get a near cubic separation between
$Q(f)$ and $Q_E(f)$.

For the zero-error class $R_0$, while we can show that the lower bound holds,
the upper bound relation is false (see \thm{R0counter} for a counterexample). \
We believe the zero-error class $Q_0$
behaves similarly: we conjecture the analogous lower bound
holds for $Q_0$, but are unable to prove it. \ For $Q_0$, the
corresponding upper bound is false, for reasons similar to those for $R_0$.

Notably, one-sided error measures do not behave well for cheat sheet functions at all. \
In \thm{R1lowercounter}, we demonstrate $f$ and $\phi$ such that
$R_1(f_\CSp)\ll R_1(f)$, showing that the lower bound fails. \
The upper bound need not hold either.

Lastly, note that $\deg$ and $\adeg$ behave fundamentally
differently on cheat sheet functions than the
algorithmic measures. \ The (approximate) degree of the function
$f$ does not even play a role in determining the (approximate) degree
of $f_\CSp$.

\thm{cs} is proved in the next two sections. \
In \app{curious} we present some examples
where the na\"{i}vely expected relations do not hold.

\subsection{Upper bounds on the complexity of cheat sheet functions}
\label{sec:upper}

We start by showing that the usual algorithmic models
can compute $f_\CSp$ in roughly the number of queries
they require for $f$ and $\phi$.

\begin{lemma}
\label{lem:upper}
Let $f:D\to\B$, where $D \subseteq [M]^n$, be a partial function, and $\phi$ be a certifying function for $f$. \ Then the following upper bounds hold.
\begin{itemize}[noitemsep,topsep=0pt]
\item $D(f_\CSp)=\tO(D(f)+D(\phi))$
\item $R(f_\CSp)=\tO(R(f)+R(\phi))$
\item $Q(f_\CSp)=\tO(Q(f)+Q(\phi))$
\item $Q_E(f_\CSp)=\tO(Q_E(f)+Q_E(\phi))$
\end{itemize}
\end{lemma}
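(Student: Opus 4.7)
The plan is to give one unified algorithm that works in all four models, differing only in which primitive algorithms it calls and how much error reduction is needed. On input $z=(x^1,\dots,x^c,Y_1,\dots,Y_{2^c})$, the algorithm proceeds in two stages: first, it runs the best $M$-algorithm for $f$ on each of the $c$ copies $x^1,\dots,x^c$ to determine the intended address $\ell\in[2^c]$; second, it reads the location $Y_\ell=(y^1,\dots,y^c)$ (which requires no queries to know, since $\ell$ is classical) and runs the best $M$-algorithm for $\phi$ on each pair $(x^i,y^i)$. The algorithm outputs $1$ iff all $c$ evaluations of $\phi$ return $1$.

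For $M\in\{D,Q_E\}$ the subroutines always succeed, so correctness is immediate and the cost is simply $c\,M(f)+c\,M(\phi)=\tO(M(f)+M(\phi))$, using $c=O(\log D(f))$. For $M\in\{R,Q\}$ I would first amplify each subroutine to error $\delta=1/(100c)$ by $O(\log c)$ independent repetitions (with majority vote in the randomized case and standard quantum error reduction in the quantum case), so that by a union bound over the $2c$ subroutine calls the overall error is at most $1/3$. The total cost becomes $O(c\log c)\cdot(M(f)+M(\phi))=\tO(M(f)+M(\phi))$, as desired.

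It remains to check correctness under the assumption that every subroutine returns its intended value. If $f_\CSp(z)=1$, then every $x^i\in D$, the computed address $\ell$ really is the encoding of $(f(x^1),\dots,f(x^c))$, and $\phi(x^i,y^i)=1$ for all $i$, so the algorithm outputs $1$. If $f_\CSp(z)=0$ but all $x^i\in D$, then the algorithm correctly computes $\ell$ and at least one $\phi(x^i,y^i)$ must equal $0$, so the algorithm outputs $0$. The only delicate case is when some $x^i\notin D$: then the subroutine for $f$ on $x^i$ may behave arbitrarily and we have no guarantee on the computed $\ell$. However, property~(1) of a certifying function ensures $\phi(x^i,y)=0$ for every $y\in\B^k$, so the $\phi$-check on the $i$-th coordinate will reject no matter which cell $Y_\ell$ is consulted, and the algorithm again outputs $0$.

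The argument is largely routine; the only step that requires any thought is this last correctness point, which is precisely what property~(1) of the certifying function was designed to handle. Everything else is just running known algorithms with standard amplification and a union bound, so no genuine obstacle arises.
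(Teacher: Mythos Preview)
Your proposal is correct and follows essentially the same two-stage algorithm as the paper's proof: compute $f$ on all $c$ copies (with amplification for the bounded-error models), go to the indicated cheat sheet, and evaluate $\phi$ on each $(x^i,y^i)$ pair. Your correctness analysis is in fact more explicit than the paper's, which dispatches the $x^i\notin D$ case in a single sentence by invoking property~(1) of the certifying function, exactly as you do.
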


\begin{proof}
The algorithm to evaluate $f_\CSp$ is the following:
First, evaluate $f$ on the $c$ inputs to $f$ contained
in $z$, assuming they satisfy the promise of $f$. \
If one of these inputs is discovered to contradict
the promise of $f$, then output $0$ and halt. \
This takes $cD(f)$ queries for a deterministic algorithm (and $cQ_E(f)$ queries for an exact quantum algorithm),
$O(R(f)c\log c)$ for a bounded-error randomized algorithm, and $O(Q(f)c\log c)$ for a
bounded-error quantum algorithm, where the factor of $O(\log c)$ comes
from reducing the error enough to ensure that all $c$ functions
are computed correctly with high probability. \
From the $c$ outputs we get a binary number $\ell$. \ We can then go to the
appropriate cheat sheet, and evaluate $\phi$ on all the
$(x^i,y^i)$ pairs and output $1$ if they are all $1$. \
This takes $cD(\phi)$ queries for a deterministic algorithm (and $cQ_E(\phi)$ queries for an exact quantum algorithm),
$O(R(\phi)c\log c)$ queries for a randomized algorithm,
and $O(Q(\phi)c\log c)$ queries for a quantum algorithm.
Since $\phi(x,y)=0$ when $x$ does not satisfy the promise
of $f$, this algorithm is correct. \
\end{proof}

Before proving the upper bounds for the degree measures, we prove a technical lemma that shows that $\log(D(f))$ and $\log(C(\phi))$ or $\log(\adeg(\phi))$ are the same up to constants. \ We prove this to show that the log factors that appear in our construction really can be neglected compared to the measures we consider.

\begin{lemma}
\label{lem:phi}
Let $f:D\to\B$ be a partial function, where $D \subseteq [M]^n$,
and let $\phi$ be a certifying function for $f$. \
Then $D(f)\leq C^{(1)}(\phi)^2$ and
$C(f)\leq C^{(1)}(\phi)$. \
As a consequence,
we have $D(f)\leq D(\phi)^2$ and
$C(f)\leq C(\phi)$.
\end{lemma}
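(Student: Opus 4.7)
The plan is to prove the two main inequalities $C(f)\le C^{(1)}(\phi)$ and $D(f)\le C^{(1)}(\phi)^2$ directly; the ``as a consequence'' claims then follow from the trivial bounds $C^{(1)}(\phi)\le C(\phi)$ and $C^{(1)}(\phi)\le D(\phi)$.

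For the certificate complexity bound, I would fix an arbitrary $x\in D$. By condition~(3) of the definition of a certifying function there exists some $y\in\B^k$ with $\phi(x,y)=1$, and since $\phi$ is total it admits a 1-certificate $P$ at $(x,y)$ of size at most $C^{(1)}(\phi)$. Let $P_x$ denote the restriction of $P$ to the $x$-coordinates. The crucial observation is that $P$ must fix the value of $y_1$: by condition~(2), any $y'$ with $y'_1\ne f(x)$ satisfies $\phi(x,y')=0$, so if $P$ left $y_1$ free we could flip it while staying consistent with $P$ and obtain $\phi=0$, contradicting that $P$ is a 1-certificate. Hence $P$ records $y_1=f(x)$. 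Now take any $x'\in D$ consistent with $P_x$; then $(x',y)$ is consistent with all of $P$, so $\phi(x',y)=1$, and condition~(2) forces $y_1=f(x')$, giving $f(x')=f(x)$. Thus $P_x$ is a certificate for $f$ at $x$ of size at most $C^{(1)}(\phi)$, proving $C(f)\le C^{(1)}(\phi)$.

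For the deterministic bound, I would combine the above with the classical inequality $D(f)\le C_0(f)\cdot C_1(f)\le C(f)^2$, which holds for partial functions by the usual Blum--Impagliazzo style argument: repeatedly query a 1-certificate of some currently consistent 1-input of $f$ in $D$, and observe that if $x$ is a 0-input then its 0-certificate must share a disagreeing bit with every queried 1-certificate, so every round reveals a previously unknown bit of $x$'s 0-certificate and we terminate within $C_0(f)$ rounds (with a symmetric analysis for 1-inputs). The one subtle point of the whole lemma is really the observation that any 1-certificate $P$ of $\phi$ necessarily pins down the output coordinate $y_1$; once this is in hand, the restriction-and-substitute argument above is immediate, and the deterministic bound is a black-box consequence.
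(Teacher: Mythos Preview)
Your argument for $C(f)\le C^{(1)}(\phi)$ is correct and matches the paper's (the paper is terser, but the content is the same).

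The gap is in the second part. You invoke $D(f)\le C_0(f)\,C_1(f)\le C(f)^2$ and assert that the Blum--Impagliazzo argument goes through for partial functions. It does not: the key step ``a $0$-certificate and a $1$-certificate must share a disagreeing bit'' can fail when $f$ is partial, because a common extension may lie outside $D$. Concretely, take $D=\{e_i:i\in[n]\}\subset\{0,1\}^n$ with $f(e_i)=i\bmod 2$. Then $C(f)=1$ (revealing the bit $x_i=1$ pins down $e_i$), yet $D(f)=\Theta(n)$ (one must exhaust all positions of one parity), so $D(f)\le C(f)^2$ is plainly false. In this example the $0$-certificate $\{x_2=1\}$ for $e_2$ and the $1$-certificate $\{x_1=1\}$ for $e_1$ share no index at all.

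The paper avoids this by never passing through $C(f)$. It runs the same iterative strategy, but with the $x$-parts of $1$-certificates of $\phi$ playing the role of the ``$0$- and $1$-certificates of $f$''. The crucial observation is that $\phi$ is \emph{total}, so any $1$-certificate of $\phi$ with $y_1=0$ and any $1$-certificate with $y_1=1$ must be inconsistent; moreover, since condition~(1) forces any input consistent with a $1$-certificate of $\phi$ to have its $x$-part in $D$, and condition~(2) then forces $f(x)=y_1$, this inconsistency must already occur on an $x$-coordinate (otherwise a single $x\in D$ would be certified to both $f$-values). That $x$-bit conflict is exactly what makes the iterative argument terminate in $C^{(1)}(\phi)$ rounds of at most $C^{(1)}(\phi)$ queries each. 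Your first paragraph already manufactured these $\phi$-derived certificates; the fix is simply to use \emph{those}, rather than arbitrary certificates of $f$, in the second paragraph.
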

\begin{proof}
Note that every $1$-certificate for $\phi(x,y)$
proves that $f(x)=y_1$. \ In particular, such a
$1$-certificate must reveal a certificate for $f$ in $x$. \
This already shows that $C(f)\leq C^{(1)}(\phi)$.

Moreover, every $1$-certificate of $\phi(x,y)$
where $y_1=0$ must conflict with every $1$-certificate
of $\phi(x,y)$ where $y_1=1$ on some bit in $x$. \
This is because otherwise, there would be some $x$
in the promise of $f$ that can be certified to be both
a $0$- and a $1$-input for $f$.

We can then use the following strategy to compute
$f(x)$. \ On input $x\in\Dom(f)$, pick a $1$-certificate
of some $\phi(x,y)$ with $y_1=0$, and query all of its
entries in $x$. \ This reveals at least one bit of every
$1$-certificate of $\phi$ that has $y_1=1$. \
We then find another $1$-certificate of some $\phi(x,y)$
where $y_1=0$,
consistent with the inputs revealed in $x$ so far. \
We once again reveal all of its entries in $x$.
After $C^{(1)}(\phi)$ iterations of this process,
we have either eliminated all $1$-certificates of $\phi$
with $y_1=0$, or else all $1$-certificates of $\phi$ with
$y_1=1$. \ This tells us the value of $f(x)$, because
since $x\in\Dom(f)$, there must be some $y$ such that
$\phi(x,y)=1$, and hence there must be some $1$-certificate
of $\phi$ consistent with $x$. \
The total number of queries used was at most
$C^{(1)}(\phi)^2$.
\end{proof}

Next, we show the upper bounds on the degree measures of
$f_\CSp$ and show that they only depend on the degree of $\phi$ and not
on any measure of the function $f$.

\begin{lemma}
Let $f:D\to\B$ be a partial function, where $D \subseteq \B^n$,
and $\phi$ be a certifying function for $f$. \ Then
$\deg(f_\CSp)=\tO(\deg(\phi))$ and
$\adeg(f_\CSp)=\tO(\adeg(\phi))$.
\end{lemma}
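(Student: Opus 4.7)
The plan is to build the representing (or approximating) polynomial for $f_\CSp$ by summing $2^c$ simpler polynomials, one per potential cheat sheet location $\ell\in[2^c]$. For each such $\ell$, I will construct a polynomial $p_\ell(z)$ whose intended value is $1$ when $f_\CSp(z)=1$ \emph{and} $\ell$ is the cheat sheet location encoded by $(f(x^1),\dots,f(x^c))$, and $0$ otherwise. The key structural observation is that for every input $z$, at most one value of $\ell$ can be ``active'' in this sense, so $\sum_\ell p_\ell(z)=f_\CSp(z)$ will hold pointwise.

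For the exact bound, I fix $\ell$ with binary digits $\ell_1,\dots,\ell_c$, let $q_\phi$ be an exact polynomial for $\phi$ of degree $\deg(\phi)$, and set
\[
p_\ell(z)\;:=\;\prod_{i=1}^{c}L_{\ell_i}\!\bigl(y^i_{\ell,1}\bigr)\cdot q_\phi\!\bigl(x^i,y^i_\ell\bigr),
\]
where $L_0(y)=1-y$, $L_1(y)=y$, and $y^i_{\ell,1}$ denotes the first bit of the $i$-th block of the $\ell$-th cheat sheet. This has degree $O(c\deg(\phi))=\tO(\deg(\phi))$. Uniqueness rests on property (2) of the certifying function: if every factor of $p_\ell(z)$ equals $1$, then $\phi(x^i,y^i_\ell)=1$ forces $y^i_{\ell,1}=f(x^i)$, and the $L_{\ell_i}$ factors then force $\ell_i=f(x^i)$ for each $i$, so $\ell$ is forced to be the binary encoding of $(f(x^1),\dots,f(x^c))$. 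A short case analysis---separately handling inputs where some $x^i\notin D$ (in which case $\phi(x^i,\cdot)\equiv 0$ by property (1), killing every $p_\ell$) and inputs where all $x^i\in D$ but the pointed-to cheat sheet is invalid---then yields $\sum_\ell p_\ell(z)=f_\CSp(z)$, giving $\deg(f_\CSp)=\tO(\deg(\phi))$.

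For the approximate bound, I keep the same construction but replace $q_\phi$ with an amplified approximating polynomial $\tilde q_\phi$: starting from the $\adeg(\phi)$-degree polynomial with error $1/3$, apply the amplification polynomials of \cite{BNRdW07} to reduce the error to some $\delta = 1/\poly(n)$, at the cost of a $O(\log(1/\delta))=\tO(1)$ factor in degree. After amplification each factor lies in $[0,\delta]\cup[1-\delta,1]$, so the product of $c$ such factors deviates from its Boolean target by at most $O(c\delta)$, and summing $2^c=\poly(n)$ such products keeps the overall error below $1/3$ provided $\delta\lesssim 1/(c\cdot 2^c)$. Since $c=O(\log D(f))=O(\log n)$, this is comfortably $1/\poly(n)$, and the total degree is $c\cdot O(\log(1/\delta))\cdot\adeg(\phi)=\tO(\adeg(\phi))$.

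The main technical obstacle is the error bookkeeping in the approximate case: both the multiplicative blowup inside each $p_\ell$ and the additive blowup across the $2^c$ summands have to be absorbed into the $\tO$. This is exactly what the amplification step buys us, driving each factor so close to $\{0,1\}$ that both sources of error become inverse-polynomial while costing only polylogarithmic degree overhead. With amplification in place, the uniqueness argument from the exact case carries over verbatim: on any $1$-input exactly one $p_\ell$ is close to $1$ and all others are close to $0$.
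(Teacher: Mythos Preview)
Your proof is correct and follows essentially the same strategy as the paper: build one polynomial $p_\ell$ per cheat-sheet location as a product of (amplified) copies of the polynomial for $\phi$, then sum over all $\ell\in[2^c]$. Your version is in fact more careful than the paper's writeup on two points---the extra $L_{\ell_i}(y^i_{\ell,1})$ factors explicitly enforce that at most one $p_\ell$ can equal $1$ (the bare product $\prod_i q_\phi(x^i,y^i_\ell)$ in the paper does not, since nothing prevents several cells $Y_\ell$ from holding valid certificates for the same $x^1,\dots,x^c$), and you amplify to error $\approx 2^{-c}$ rather than the paper's stated $1/(3c)$, which is what is actually needed to keep the sum over $2^c$ terms bounded.
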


\begin{proof}
We start by showing this for $\deg(f_\CSp)$. \
For each $\ell\in[2^c]$, we construct a polynomial
which is $1$ if $f_\CSp(z)=1$ and the cheat sheet
of $z$ is the \th{\ell} cheat sheet, and $0$ otherwise. \
We can do this by simply multiplying together
the polynomials of $\phi(x^i,y^i)$ for all $i \in [c]$,
where $x^i$ are the inputs to $f$ in $z$ and $y^i$ are
the entries found in the \th{\ell} cheat sheet of $z$.

The resulting polynomial, $p_\ell$, has degree $c\deg(\phi)$,
is $1$ on input $z$ if $z$ is a $1$-input for $f_\CSp$
and $\ell$ is the location of the cheat sheet for input $z$,
and is $0$ on all other inputs. \
To get a polynomial for $f_\CSp$, we simply sum up
the polynomials $p_\ell$ for all $\ell\in[2^c]$. \ The resulting
polynomial is equal to $f_\CSp(z)$ on all inputs $z$,
and has degree $c\deg(\phi)$.

The same trick works for $\adeg$, except in that case
we cannot simply construct the \textsc{And} of $c$
polynomials by multiplying them together:
we must first perform error reduction on the polynomials
and map the output range $[0,1/3]\cup[2/3,1]$
to $[0,1/3c]\cup[1-1/3c,1]$. \
It is possible to do this amplification while increasing
the degree of the polynomials by at most $O(\log c)$ using,
for example, the amplification polynomial construction of \cite[Lemma 1]{BNRdW07}. \
Therefore we have $\adeg(f_\CSp)=O(c\adeg(\phi)\log c)=\tO(\adeg(\phi))$.

Finally, note that $c$ is at most logarithmic in $\adeg(\phi)$, since
 $c=O(\log D(f))=O(\log C^{(1)}(\phi))
 =O(\log C(\phi))=O(\log\adeg(\phi))$,
 where the last equality follows from \cite{BdW02} and
 from the fact that $\phi$ is a total function.
\end{proof}

\subsection{Lower bounds on the complexity of cheat sheet functions}
\label{sec:lower}

\begin{lemma}
\label{lem:lower}
Let $f:D\to\B$ be a partial function, where $D \subseteq \B^n$,
and let $\phi$ be a certifying function for $f$. \
Then the complexity of $f_\CSp$ is at least
that of $\phi$ with respect to any complexity measure that does
not increase upon learning input bits, which includes
all reasonable complexity models such as $Q$, $D$, $R$, $R_0$, $R_1$, $Q_1$, $Q_0$,
$Q_E$, $C$, $RC$, $bs$, $\deg$, and $\adeg$.
\end{lemma}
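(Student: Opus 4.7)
The plan is to realize $\phi$ as a substitution instance of $f_\CSp$, obtained by fixing some of the $f_\CSp$ input variables to constants and identifying several other input positions with one another. All of the listed measures --- $D$, $R$, $R_0$, $R_1$, $Q$, $Q_E$, $Q_0$, $Q_1$, $C$, $\RC$, $\bs$, $\deg$, $\adeg$ --- do not increase under such operations: an algorithm, certificate, witness, sensitive block, or representing polynomial for the larger function transfers to one of no greater cost for the restricted/folded function by evaluating the pre-assigned constants for free and routing each shared free variable to its several copies. Once the substitution is constructed, $M(\phi) \leq M(f_\CSp)$ follows immediately.

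To build the substitution, fix two distinguished addresses $\ell_0,\ell_1 \in [2^c]$, namely those corresponding to the binary strings $0^c$ and $1^c$. Given a $\phi$-input $(x,y) \in [M]^n \times \B^k$, I plug it into $f_\CSp$ as follows: identify all $c$ copies of the $f$-input so that $x^1 = x^2 = \cdots = x^c := x$, identify the two cheat sheets $Y_{\ell_0}$ and $Y_{\ell_1}$ with a single $c$-fold copy $(y,y,\dots,y)$, and fix every other $Y_i$ to the all-zero string. A short case analysis then confirms that the restricted $f_\CSp$ computes $\phi(x,y)$. If $x \notin D$, condition~(1) of $f_\CSp$ fails and the value is $0$, matching $\phi(x,y)=0$ by property~(1) of certifying functions. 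If $x\in D$, the address computed from the $c$ $f$-outputs is $\ell_{f(x)}$, whose cheat sheet has been set to $(y,\dots,y)$, so condition~(2) reduces to the assertion $\phi(x,y)=1$; when $y_1 = f(x)$ this is the honest value, while when $y_1 \neq f(x)$ property~(2) of certifying functions forces $\phi(x,y)=0$ automatically, matching the fact that $f_\CSp$ must then output $0$.

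The only subtle point, and the place I would be most careful, is confirming monotonicity under identification of variables for the degree measures. Substituting $x_i \mapsto x_j$ in a real polynomial of degree $d$ produces a polynomial of degree at most $d$, and the result is still multilinear after using $x^2 = x$ on the Boolean cube, so neither $\deg$ nor $\adeg$ can increase. For every other measure on the list, invariance under both kinds of substitution is immediate from the definitions: one simulates the original algorithm, witness, or certificate on the substituted instance while answering each query to a pre-identified position in the natural way.
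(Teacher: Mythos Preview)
Your proof is correct and closely parallel to the paper's, though the embedding of $\phi$ into $f_\CSp$ is carried out differently. The paper restricts $f_\CSp$ to the \emph{promise} that all $2^c$ cheat sheets are identical (so the address $\ell$ becomes irrelevant) and, further, that the first $c-1$ of the resulting $\phi$-evaluations succeed; the restricted function then equals $\phi$ on the last pair $(x^c,y^c)$, and none of the listed measures increases under domain restriction. You instead realize $\phi$ as a \emph{substitution instance}: identify all $c$ copies of the $f$-input (forcing the address to be $0^c$ or $1^c$), identify the cheat sheets at those two addresses with $(y,\dots,y)$, and zero out the rest. Your version is a bit more concrete and sidesteps the intermediate promise step; the trade-off is that you must verify monotonicity under variable identification (which you correctly do for $\deg$ and $\adeg$, and which is routine for the remaining measures), whereas the paper only needs monotonicity under domain restriction.
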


\begin{proof}
Consider restricting $f_\CSp$ to a promise that
guarantees that all cheat sheets of the input are
identical. \ An input $z$ to this promise problem evaluates
to $1$ if and only if the $\phi$ inputs from the first
cheat sheet all evaluate to $1$. \ We add the additional
promise that this is true for all but the last input to
$\phi$. \ The value of an input $z$ to this function
then becomes exactly $\phi$ evaluated on a certain
portion of the input. \ The complexity measures
mentioned do not increase when the function
is restricted to a promise. \ Thus the complexity
of $f_\CSp$ under these measures is at least
that of $\phi$.
\end{proof}

\begin{lemma}
Let $f:D\to\B$ be a partial function, where $D \subseteq [M]^n$,
and let $\phi$ be a certifying function for $f$. \
Then $D(f_\CSp)=\Omega(D(f))$.
\end{lemma}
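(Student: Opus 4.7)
The plan is a direct reduction: from any deterministic algorithm $A$ for $f_\CSp$ using $T = D(f_\CSp)$ queries, I will build a deterministic algorithm $B$ that computes $f$ using at most $T$ queries, which yields $D(f) \leq D(f_\CSp)$.

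To set up the reduction, fix once and for all a reference input $x^{(0)} \in D$ with $f(x^{(0)}) = 0$ together with a valid $\phi$-certificate $y^{(0)}$ asserting $f(x^{(0)}) = 0$ (the cases where $f$ is constant or where no such $x^{(0)}$ exists are trivial). On input $x$ to $f$, the algorithm $B$ simulates $A$ on the cheat-sheet input
\begin{equation}
z(x) = \bigl(x,\,x^{(0)},\,\ldots,\,x^{(0)},\,Y_1,\,Y_2,\,\ldots,\,Y_{2^c}\bigr),
\end{equation}
in which every cell $Y_\ell$ is set to the all-zero string \emph{except} the cell $Y_{\ell_0}$ at the index $\ell_0 = 0^c$, whose $c$ sub-certificates are $\bigl(y_0(x),\, y^{(0)},\, \ldots,\, y^{(0)}\bigr)$. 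Here $y_0(x)$ denotes any valid $\phi$-certificate for $x$ asserting $f(x) = 0$ when $x$ is a $0$-input of $f$, and is the all-zero string (hence invalid, by the nontriviality condition on $\phi$) when $x$ is a $1$-input. Directly checking the two conditions in the definition of $f_\CSp$ verifies that $f_\CSp(z(x)) = 1$ exactly when $f(x) = 0$, so $B$ can recover $f(x) = 1 - A(z(x))$ from $A$'s output.

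The simulation rule is the natural one: whenever $A$ queries a bit of $x^1$, $B$ forwards the query to its own input $x$; queries into the copies $x^i$ for $i > 1$, into any cell $Y_\ell$ with $\ell \neq \ell_0$, or into the sub-certificates $Y_{\ell_0}[j]$ for $j > 1$ are all answered immediately from the pre-committed constants. The one nontrivial case is a query into $Y_{\ell_0}[1] = y_0(x)$, whose bits depend on $x$. The main obstacle is to serve such queries without forcing $B$ to spend extra queries on $x$ beyond those $A$ makes into $x^1$. I plan to handle this via a deferred-commitment (adversary-style) strategy: rather than fixing $y_0(x)$ in advance, $B$ answers each new query into $Y_{\ell_0}[1]$ with a bit value that remains consistent with \emph{some} valid $0$-certificate for an input agreeing with $x$ on the positions already revealed by the simulation. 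Because valid certificates can always be extended consistently under these partial constraints, $B$ never needs to query a bit of $x$ outside those already forced by $A$'s queries into $x^1$, so the total number of queries $B$ makes is at most $T$, giving $D(f) \leq T$ as desired.
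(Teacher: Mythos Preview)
Your reduction has a genuine gap in the correctness of $B$, precisely at the deferred-commitment step you flag as the ``main obstacle.'' The difficulty is not the query count but whether $B$'s output equals $f(x)$. For $B$ to be correct you need that the answers $B$ feeds to $A$ are consistent with some completion $z$ satisfying $f_\CSp(z)=1-f(x)$. Your invariant only guarantees consistency with a completion $z'$ in which $x^1=x'$ for \emph{some} $0$-input $x'$ agreeing with $x$ on the revealed positions, together with a valid certificate $y'$ for $x'$; it does not guarantee that $y'$ (or any extension of the committed bits) is a valid certificate for $x$ itself. Concretely, suppose $A$ first queries several bits of $y_0$; you commit to bits of some certificate $y'$ for some $x'\neq x$. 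If $A$ then queries positions of $x^1$ on which $x$ and $x'$ disagree, every pair $(x'',y'')$ still compatible with your commitments may be eliminated, and the committed $y_0$-bits need not extend to any valid $0$-certificate for $x$. In that case every completion with $x^1=x$ has $\phi(x,y_0)=0$, hence $f_\CSp=0$, so $A$ outputs $0$ and $B$ outputs $1$---wrong when $f(x)=0$. Your sentence ``valid certificates can always be extended consistently under these partial constraints'' is exactly the unproved (and in general false) step: nothing in the definition of a certifying function $\phi$ forces the set of valid $0$-certificates for $x$ to be closed under the partial assignments your adversary has already made based on a different $x'$.

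The paper sidesteps this entirely by using a direct adversary argument against any deterministic algorithm for $f_\CSp$: answer every query outside $x^1,\ldots,x^c$ with $0$, and on each $x^i$ play the $D(f)$-adversary for $f$. With fewer than $D(f)$ queries the algorithm cannot determine any $f(x^i)$, so the correct cheat-sheet index $\ell$ is undetermined; combined with the many unqueried cells, both outcomes of $f_\CSp$ remain consistent. This avoids having to fabricate certificate bits at all. If you want to keep the reduction format, you would need an additional structural assumption on $\phi$ (e.g., certificates depend only on pointer data, not on the values of $x$) that is not available in the general statement.
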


\begin{proof}
We describe an adversary strategy that can be used
to force a deterministic algorithm to make $\Omega(D(f))$
queries. \ Let $x^1,x^2,\dots,x^c$ represent the $c$
inputs to $f$. \ Whenever the algorithm queries a bit
not in $x^1,x^2,\dots,x^c$,
the adversary returns $0$. \
Whenever the algorithm queries a bit in
$x^i$, the adversary uses the adversarial
strategy for $f$ on that input; that is,
it returns answers that are consistent with the promise
of $f$, but that force any deterministic algorithm
to use $D(f)$ queries to discern $f(x^i)$.

Now if a deterministic algorithm
uses fewer than $D(f)$ queries
on an input to $f_\CSp$, then there must be
many cheat sheets in the input
that it did not query at all. \
In addition, such an algorithm cannot determine the value
of $f(x^i)$ for any $i$. \ It follows that
this algorithm could not have discerned the value
of $f_\CSp$ on the given input, and thus
$D(f_\CSp)\geq D(f)$.
\end{proof}

\begin{lemma}
Let $f:D\to\B$ be a partial function,
where $D\subseteq [M]^n$,
and let $\phi$ be a certifying function for $f$. \
Then $R_0(f_\CSp)=\Omega(R_0(f))$.
\end{lemma}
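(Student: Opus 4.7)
The plan is to mirror the structure of the proof of \lem{Rlowerbound} while exploiting the stronger zero-error promise, and then to shortcut the hybrid-distinguishing step by a direct reduction that constructs a zero-error algorithm for $f$ out of one for $f_\CSp$. Let $A$ be any zero-error randomized algorithm for $f_\CSp$ using at most $T$ queries in the worst case; the goal is to conclude $T=\Omega(R_0(f))$.

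First I would establish the analogue of the base claim from \lem{Rlowerbound}, upgrading the probability bound from $1/3$ to $1/2$: for any input $z=(x^1,\ldots,x^c,0,\ldots,0)$ with blank cheat-sheet array and all $x^i\in\Dom(f)$, the algorithm $A$ must query at least one bit of cheat sheet $Y_\ell$ (where $\ell$ is the binary integer encoding $f(x^1)\cdots f(x^c)$) with probability at least $1/2$. The reasoning is the same modus-tollens argument as in the bounded-error case: the all-zero cheat sheet is invalid, so $f_\CSp(z)=0$, which means zero-error forces $A$ to output $0$ with probability at least $1/2$; any execution that outputs $0$ without querying $Y_\ell$ would behave identically on the modified input $z'$ obtained by replacing $Y_\ell$ with a valid cheat sheet, yet $f_\CSp(z')=1$, contradicting zero-error on $z'$.

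Next I would use this claim to build a zero-error algorithm $B$ for $f$ directly from $A$, rather than extracting a bounded-error distinguisher via the hybrid. Given input $x$, $B$ sets $x^1=x$; for each $i=2,\ldots,c$ it samples $b_i\in\B$ uniformly and sets $x^i$ to a fixed canonical $b_i$-input of $f$ with known value $f(x^i)=b_i$. Then $B$ simulates $A$ on $z=(x^1,\ldots,x^c,0,\ldots,0)$. If $A$ outputs $*$, $B$ outputs $*$; if $A$ outputs $0$, $B$ records the set $S$ of cheat-sheet indices touched by $A$ and, knowing the tail $(b_2,\ldots,b_c)$ of $\ell$, filters $S$ down to the at most two indices whose last $c-1$ bits match that tail. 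By the upgraded claim, $\ell\in S$ and $\ell$ has matching tail, so whenever exactly one tail-matching index appears in $S$ it must equal $\ell$, and $B$ outputs its first bit; otherwise $B$ outputs $*$.

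The main obstacle is controlling $B$'s $*$-probability, since zero-error correctness is automatic from $\ell\in S$. The only way $B$ outputs $*$ while $A$ outputs $0$ is when both tail-matching indices, $\ell$ and the ``other'' index $(1-f(x),b_2,\ldots,b_c)$, lie in $S$. Here I would exploit randomness of the tail: by symmetry over the uniform choice of $(b_2,\ldots,b_c)$, the other index is uniformly distributed over $2^{c-1}$ candidate positions, so averaging the bound $|S|\le T$ yields $\Pr[\text{other}\in S]\le T/2^{c-1}$. Since $2^c\ge D(f)^{10}$, in the regime $T\le 2^{c-2}$ a single trial of $B$ has $*$-probability at most $1/2+1/4=3/4$, and three independent repetitions with fresh tails and fresh coins drop the overall $*$-probability below $1/2$, producing a bona fide zero-error algorithm for $f$ that makes at most $3T$ queries to $x$ in the worst case. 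In the complementary regime $T>2^{c-2}$ the bound is immediate, since $T>D(f)^{10}/4\ge D(f)\ge R_0(f)$. Either way, $R_0(f)\le 3T$ by the defining property of $R_0$, yielding $T=\Omega(R_0(f))$ as required.
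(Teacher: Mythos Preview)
Your upgraded base claim is fine: on the blank-array input every execution of $A$ that outputs $0$ must have touched $Y_\ell$, since otherwise the same execution would wrongly output $0$ on the $1$-input $z'$. So conditioned on $A$ outputting $0$, you indeed have $\ell\in S$ deterministically.

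The gap is in the sentence ``by symmetry over the uniform choice of $(b_2,\ldots,b_c)$ \ldots\ averaging the bound $|S|\le T$ yields $\Pr[\text{other}\in S]\le T/2^{c-1}$.'' This averaging step needs the random index $\text{other}=(1-f(x),b_2,\ldots,b_c)$ to be independent of the set $S$, but it is not: $S$ is determined by the execution of $A$ on $(x,u_{b_2},\ldots,u_{b_c},\mathbf{0})$, and that execution depends on $b_2,\ldots,b_c$ through the canonical inputs $u_{b_i}$. Concretely, $u_0$ and $u_1$ are two fixed strings; pick any coordinate $j$ on which they differ. An algorithm $A$ whose first $c-1$ queries are ``bit $j$ of $x^i$'' for $i=2,\ldots,c$ learns the entire tail $b=(b_2,\ldots,b_c)$ on these specific inputs, and nothing in your argument prevents $A$ from then touching both cheat sheets $(0,b)$ and $(1,b)$, so that $\text{other}\in S$ with probability $1$ regardless of how small $T/2^{c-1}$ is. This does not by itself produce a low-query zero-error algorithm for $f_\CSp$ (so it does not contradict the lemma), but it shows the symmetry argument cannot be correct as stated: the bound $|S|\le T$ simply does not transfer to a bound on $\Pr[\text{other}\in S]$ once $S$ is allowed to depend on $b$.

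The paper sidesteps this independence issue entirely by exploiting a different feature of zero-error algorithms: any execution that outputs $0$ has implicitly found a \emph{certificate} for $f_\CSp$. On a blank-array input built from $c$ valid $f$-inputs, such a certificate either touches almost all $2^c$ cheat sheets or exhibits $f$-certificates for at least half of the $c$ instances. Feeding $A$ with $c$ inputs drawn from the hard zero-error distribution (and symmetrising the position of the unknown input) then yields a randomized certificate-finder for $f$, giving the $\Omega(R_0(f))$ bound. If you want to rescue your direct-reduction idea, you would need to replace the fixed canonical inputs by something that genuinely hides the tail $b$ from $A$, and argue rigorously that $A$'s view of the $x^i$'s reveals little about $b$; but that argument is essentially a lower bound on computing the $f(x^i)$'s, which is what you are trying to prove.
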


\begin{proof}
Let $A$ be a zero-error randomized algorithm
for $f_\CSp$. \ Then $A$ must always find a certificate
for $f_\CSp$. \ Consider running $A$ on an input $z$
consisting of $c$ $0$-inputs to $f$, together with an
all-zeros array. \ Certifying such an input
requires presenting $0$-certificates for some of the
$c$ inputs to $f$, and presenting at least one bit from
each cheat sheet whose index has not been disproven
by the $0$-certificates of the $f$ inputs.

Now, since there are at least $D(f)^{10}$ cheat sheets,
only a small fraction can be provided in a certificate
of size at most $O(R_0(f))$. \ It follows that
the algorithm $A$ must find certificates to at least
half of the $c$ $0$-inputs to $f$ with high probability.

We can now turn $A$ into a new algorithm $B$
that finds a certificate in a given
$0$-input to $f$ sampled from the hard distribution
$\mathcal{D}^0$ over the $0$-inputs of $f$. \
Given such an input $x$, the algorithm $B$ will generate
$c-1$ additional inputs from $\mathcal{D}^0$,
shuffle them together, and add an all-zero array.
$B$ will then feed this input to $A$. \ We know that
with high probability, $A$ will find a certificate for
at least half the inputs to $f$, so it must find
a certificate for $x$ with probability close to $1/2$
(say, probability at least $1/3$). \
If it does not return a certificate, $B$ can repeat this
process. \ The expected running time of $B$ is at most
$3$ times that of $A$.

Similarly, if we feed $A$ an input consisting of $c$
$1$-inputs to $f$ together with an all-zero array,
we can use a similar argument to construct an algorithm
$C$ for finding $1$-certificates in inputs sampled
from $\mathcal{D}^1$ (the hard distribution over
$1$-inputs for $f$). \ We can then alternate running
steps of $B$ and steps of $C$ to get an algorithm
that finds a certificate in an input sampled from
either $\mathcal{D}^0$ or $\mathcal{D}^1$, which uses
an expected number of queries that is at most $6$
times that of $A$.

This last algorithm evaluates $f$ on inputs sampled from
the hard distribution for $f$, and therefore must have
expected running time at least $R_0(f)$. \ It follows
that the expected running time of $A$ is at least
$R_0(f)/6$.
\end{proof}

The lower bounds for bounded-error randomized query complexity and bounded-error quantum query complexity are exactly as in \lem{Rlowerbound} and \lem{Qlowerbound}.

\subsection{Proofs of known results using cheat sheets}
\label{sec:oldsep}

\begin{theorem}[\cite{ABB+15}]
\label{thm:R0vsQ}
 There exists a total function $f$ such that  $R_0(f) = \tOmega(Q(f)^{3})$.
\end{theorem}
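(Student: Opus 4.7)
The plan is to invoke the cheat sheet machinery on a partial function obtained by composing Forrelation with $\ANDOR$, paralleling the construction used for \thm{RvsQ} but leveraging a stronger zero-error lower bound on the inner Forrelation problem.

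Let $g$ denote Forrelation on $n$ bits, so $Q(g) = O(1)$. The key additional ingredient I would establish is $R_0(g) = \tOmega(n)$, which is substantially stronger than the bounded-error bound $R(g) = \tOmega(\sqrt{n})$ from \cite{AA15}. Intuitively, a zero-error algorithm must commit to its output with full certainty, but each bit of the Forrelation input shifts the Forrelation value by only $O(n^{-1/4})$; hence after any $k \le n - \omega(n^{1/4})$ queries, flipping some of the remaining bits can push the Forrelation value across the $[1/100,3/5]$ promise gap, so an adversary can maintain inputs from both promise classes consistent with every answer so far.

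With this bound in hand, I would define $f := g \circ \AND_n \circ \OR_n$ on $n^3$ input bits. Applying the zero-error composition theorem \thm{Rcomposition} gives $R_0(f) = \Omega(n^2 R_0(g)) = \tOmega(n^3)$, while quantum composition yields $Q(f) = \tO(n)$. Exactly as in the proof of \thm{RvsQ}, $f$ admits a certifying function $\phi$ whose proof consists of $\tO(n^2)$ pointers to per-gate certificates for the $\ANDOR$ instances, and Grover search verifies such a proof with $Q(\phi) = \tO(n)$ queries. Feeding $(f,\phi)$ into the cheat sheet construction and invoking \thm{cs} then gives
\begin{equation}
Q(f_\CSp) = \tO(Q(f)+Q(\phi)) = \tO(n) \quad\text{and}\quad R_0(f_\CSp) = \tOmega(R_0(f)+R_0(\phi)) = \tOmega(n^3),
\end{equation}
yielding a total function with $R_0 = \tOmega(Q^3)$, as required.

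The principal obstacle is rigorously establishing $R_0(\text{Forrelation}) = \tOmega(n)$: existing lower bound proofs for Forrelation target the bounded-error model, while the zero-error bound requires a direct adversary or restriction-based argument showing that any sub-linear partial restriction of the input admits completions in both promise classes. If sharpening the Forrelation bound this way proves elusive, an equivalent separation can instead be obtained by substituting any partial function $g$ with $Q(g) = \tO(1)$ and $R_0(g) = \tOmega(n)$ that admits short classical promise certificates (for instance, one built by taking an $\OR$ of independent small-$Q$ promise problems where zero-error classical verification of each copy is forced); the rest of the argument then goes through verbatim.
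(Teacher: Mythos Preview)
Your overall architecture---compose a partial function $g$ having $Q(g)=\tO(1)$ and large $R_0(g)$ with $\ANDOR$, then apply the cheat sheet construction and invoke \thm{cs}---is exactly the route the paper takes. The parameter balancing and the certifying function $\phi$ you describe are also the same. The difference, and the place where your proposal has a genuine gap, is the choice of $g$.

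You take $g=\textsc{Forrelation}$ and need $R_0(g)=\tOmega(n)$, which you correctly flag as the principal obstacle. Your sketched argument for it does not go through as written. First, the sensitivity estimate is off: flipping a single input bit of Forrelation on $N$ bits changes $\Phi$ by at most $O(N^{-1/2})$, not $O(N^{-1/4})$. Second, and more importantly, a bound on the maximum per-bit change in $\Phi$ does not by itself give an adversary strategy: to keep both promise classes alive after $k$ queries you must show that the unqueried bits can be completed to land \emph{inside} each promise region, not merely that they can move $\Phi$ by a large total amount. That requires an argument that you can steer $\Phi$ in a chosen direction and stop within the target interval, which your sketch does not supply. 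Your fallback paragraph (``substitute any $g$ with $Q=\tO(1)$ and $R_0=\tOmega(n)$'') is the right instinct but remains nonconstructive.

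The paper resolves this cleanly by taking $g$ to be the Deutsch--Jozsa problem rather than Forrelation. For Deutsch--Jozsa on $n$ bits, $Q(g)=1$, and $R_0(g)=\Omega(n)$ is immediate: a zero-error algorithm that outputs ``constant'' must have queried a certificate for constancy, and any such certificate has size $\Omega(n)$ (fewer than $n/2+1$ revealed bits are always consistent with a balanced completion). With this choice the rest of your argument goes through verbatim: compose with $\ANDOR$ on $m=n^2$ bits, get $R_0(f)=\tOmega(n^3)$ by \thm{Rcomposition} and $Q(f)=\tO(n)$, take the obvious $\phi$ with $Q(\phi)=\tO(n)$, and apply \thm{cs}. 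So the fix is a one-line substitution of the inner function; everything else in your plan matches the paper.
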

\begin{proof}
This proof is similar to that of \thm{RvsQ}, except with a different function $g$. \
Let $g:D \to \B$, where $D\in \B^n$, be a partial function that satisfies $Q(g) = 1$ and $R_0(g) = \Omega(n)$, such as the Deutsch--Jozsa problem \cite{DJ92,CEMM98}. \ $R_0(g)=\Omega(n)$ follows from the fact that even certifying that the input is all zeros or all ones requires a certificate of linear size.

Let $h: \B^m \to \B$ be the $\ANDOR$ function on $m$ bits. \
Let $f = g \circ h$ and let $\phi$ be a certifying function that requires the obvious certificate for $f$ (as used in the proof of \thm{RvsQ}). \ Then we have $Q(\phi) = O(n+\sqrt{n}m^{1/4})$ and $Q(f) = Q(g \circ h) = O(\sqrt{m})$. \
From \thm{Rcomposition} we have that $R_0(f) = R_0(g\circ h) = \tOmega(nm)$. \
Using \thm{cs}, we have $Q(f_\CSp) = \tO(Q(f) + Q(\phi)) = \tO(\sqrt{m}+n+\sqrt{n}m^{1/4})$ and $R(f_\CSp) = \tOmega(R_0(f)) = \tOmega(nm)$. \
Plugging in $m=n^2$, we get $Q(f_\CSp) = \tO(n)$ and $R(f_\CSp) = \tOmega(n^{3})$.
\end{proof}

\begin{theorem}[\cite{ABB+15}] 
\label{thm:RvsQE}
There exists a total function $f$ such that   $R(f) = \tOmega(Q_E(f)^{3/2})$  .
\end{theorem}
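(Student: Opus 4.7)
The plan is to follow the template of \thm{RvsQ} and \thm{R0vsQ}: pick a partial function $g$ with a strong separation between randomized and exact quantum query complexity, compose with $h=\ANDOR_{m^2}$ to introduce a small certificate structure, and apply the cheat sheet construction via \thm{cs}. Specifically, I would start with some $g:D\to\B$ where $D\subseteq\B^n$, aiming to make $R(g)/Q_E(g)^{3/2}$ as large as possible. Setting $f=g\circ h$, we get $R(f)=\tOmega(R(g)m^2)$ by \thm{Rcomposition}, and $Q_E(f)=\tO(Q_E(g)m^2)$ by composing exact quantum algorithms for $g$ and for $\ANDOR_{m^2}$. The natural certifying function $\phi$ for $f$, which for each of the $n$ outer positions records $m$ pointers witnessing the output of the corresponding $h$-gate, can be verified exactly with $Q_E(\phi)=\tO(nm)$ queries.

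Plugging these bounds into \thm{cs} gives $R(f_\CSp)=\tOmega(R(g)m^2)$ and $Q_E(f_\CSp)=\tO(Q_E(g)m^2+nm)$. Next I would tune $g$ and $m$ so that $R(f_\CSp)\geq Q_E(f_\CSp)^{3/2}$. For instance, if one can find $g$ with $Q_E(g)=\tO(1)$ and $R(g)=\tOmega(n)$, then setting $m=n$ yields $R(f_\CSp)=\tOmega(n^3)$ and $Q_E(f_\CSp)=\tO(n^2)$, which gives the desired power-$3/2$ separation.

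The hard part will be producing the right partial function $g$. Deutsch--Jozsa achieves $Q_E(g)=1$, but its bounded-error randomized complexity is only $O(1)$ since a few random samples distinguish constant from balanced; this is why Deutsch--Jozsa powers \thm{R0vsQ} but not the present theorem. I would instead look for a promise problem whose classical hardness is robust against random sampling---for example a Bernstein--Vazirani-type problem with a structured promise, or a carefully chosen distribution over balanced functions on which random bits are uninformative---while still admitting a constant-query exact quantum algorithm. If no such $g$ with $Q_E(g)=\tO(1)$ is available, one can back off to $Q_E(g)=\tO(n^\beta)$ and $R(g)=\tOmega(n^\alpha)$ for suitable exponents and retune $m$ so that the $nm$ contribution to $Q_E(\phi)$ does not dominate.
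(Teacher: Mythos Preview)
Your template is right, but there is a genuine gap: you never produce a working $g$, and your fallback does not succeed with the inner function you have chosen. Concretely, a partial function with $Q_E(g)=\tO(1)$ and $R(g)=\tOmega(n)$ is exactly the ingredient you say you still need, and the candidates you float (Bernstein--Vazirani variants, structured balanced promises) do not obviously give one. Moreover, your fallback of taking $Q_E(g)=\tO(n^\beta)$ and ``retuning $m$'' fails with $h=\ANDOR_{m^2}$: since $\deg(\ANDOR_{m^2})=m^2$ we have $Q_E(\ANDOR_{m^2})=\Theta(m^2)=R(\ANDOR_{m^2})$, so composing with $h$ does not improve the ratio $R(f)/Q_E(f)$ at all, and meanwhile $Q_E(\phi)=\tO(nm)$ is large. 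For the most natural fallback $g$ (the ``one marked or none'' promise, with $Q_E(g)=O(\sqrt{n})$ and $R(g)=\Omega(n)$), a short calculation shows $Q_E(f_\CSp)=\tO(n^{1/2}m^2+nm)$ versus $R(f_\CSp)=\tOmega(nm^2)$, and no choice of $m$ gives a $3/2$ power.

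The paper's fix is to change the inner function, not the outer one. It takes $g$ to be the ``one marked or none'' promise (so $Q_E(g)=O(\sqrt{n})$ via exact Grover and $R(g)=\Omega(n)$), but composes with $h=\AND_m$ rather than $\ANDOR_{m^2}$. The point is the asymmetric certificate structure of $\AND_m$ (a single bit certifies a $0$, all $m$ bits certify a $1$) meshes with $g$'s promise: on a $0$-input all $n$ inner gates are $0$ and cost one bit each, while on a $1$-input only one inner gate is $1$ and costs $m$ bits. Hence $D(\phi)=\tO(n+m)$ instead of $\tO(nm)$. With $m=\sqrt{n}$ one gets $Q_E(f_\CSp)=\tO(m\sqrt{n}+n+m)=\tO(n)$ and $R(f_\CSp)=\tOmega(nm)=\tOmega(n^{3/2})$. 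The missing idea in your plan is this tailoring of $h$'s certificate asymmetry to $g$'s promise; $\ANDOR$ has symmetric $\Theta(m)$-size certificates and cannot play this role.
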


\begin{proof}
Let $g:D \to \B$, where  $D = \{x\in \B^n: |x| = 0 \textrm{ or } |x| = 1\}$, be the partial function that evaluates to $0$ if the input is the all-zeros string and evaluates to $1$ if the input string has Hamming weight $1$. \
This function satisfies $R(g) = \Omega(n)$, since the block sensitivity of the $0$-input is linear and $Q_E(g) = O(\sqrt{n})$, since Grover's algorithm can be made exact in this case \cite{BHMT02}.

Let $h:\B^m \to \B$ be the $\AND$ function on $m$ bits, which satisfies $R(h) = \Omega(m)$ and $Q_E(h) = O(m)$. \
Let $f = g \circ h$, and let $\phi$ be a certifying function for $f$. \
The certificate complexity of this function is $O(n)$ in the no case (by showing one $0$-input to each of the $n$ $\AND$ gates), and is $O(m)$ in the yes case (by showing all the $1$ inputs to an $\AND$ gates). \
We choose the certifying function $\phi$ that requires this certificate for $f$ and hence $Q_E(\phi) \leq D(\phi) = \tO(n+m)$.

Now by composing quantum algorithms for $g$ and $h$ we have $Q_E(f) = O(m\sqrt{n})$, and by \thm{Rcomposition} we have $R(f) = \Omega(mn)$. \
Hence using \thm{cs}, we have $Q_E(f_\CSp) = \tO(Q_E(f)+Q_E(\phi)) = \tO(n + m + m\sqrt{n})$, and $R(f_\CSp) = \tOmega(mn)$. \ Choosing $m=\sqrt{n}$ gives $Q_E(f_\CSp) = \tO(n)$ and $R(f_\CSp) = \tOmega(n^{3/2})$.
\end{proof}

\section*{Acknowledgements}
We thank Ansis Rosmanis for comments on \app{ksum} and Mark Bun for discussions on approximate degree lower bounds.
We thank Hardik Bansal for spotting an error in an earlier version of the proof of \thm{Rcomposition}. 
We also thank  Iordanis Kerenidis, Fr\'ed\'eric Magniez, and Ashwin Nayak for catching a bug in an earlier version of the proof of \lem{Rlowerbound}.

This work was partially funded by the ARO grant Contract Number W911NF-12-1-0486, as well as by an Alan T.\ Waterman Award from the National Science Foundation,
under grant no.\ 1249349. This preprint is MIT-CTP \#4730.

\bibliographystyle{alpha}
\phantomsection\addcontentsline{toc}{section}{References} 
\bibliography{cheat}

\newcommand{\etalchar}[1]{$^{#1}$}
\begin{thebibliography}{BNRdW07}

\bibitem[AA15]{AA15}
Scott Aaronson and Andris Ambainis.
\newblock Forrelation: A problem that optimally separates quantum from
  classical computing.
\newblock In {\em Proceedings of the 47th ACM Symposium on Theory of Computing
  (STOC 2015)}, pages 307--316, 2015.

\bibitem[Aar06]{Aar06}
Scott Aaronson.
\newblock Quantum certificate complexity.
\newblock {\em SIAM Journal on Computing}, 35(4):804--824, 2006.

\bibitem[ABB{\etalchar{+}}15]{ABB+15}
Andris Ambainis, Kaspars Balodis, Aleksandrs Belovs, Troy Lee, Miklos Santha,
  and Juris Smotrovs.
\newblock Separations in query complexity based on pointer functions.
\newblock {\em arXiv preprint \eprint{arXiv:1506.04719}}, 2015.

\bibitem[AdW14]{AdW14}
Andris Ambainis and Ronald de~Wolf.
\newblock How low can approximate degree and quantum query complexity be for
  total boolean functions?
\newblock {\em Comput. Complex.}, 23(2):305--322, June 2014.

\bibitem[Amb03]{Amb03}
Andris Ambainis.
\newblock Polynomial degree vs. quantum query complexity.
\newblock In {\em Proceedings of the 44th IEEE Symposium on Foundations of
  Computer Science (FOCS 2003)}, pages 230--239, 2003.

\bibitem[Amb07]{Amb07}
Andris Ambainis.
\newblock Quantum walk algorithm for element distinctness.
\newblock {\em SIAM Journal on Computing}, 37(1):210--239, 2007.

\bibitem[Amb13]{Amb13}
Andris Ambainis.
\newblock Superlinear advantage for exact quantum algorithms.
\newblock In {\em Proceedings of the 45th ACM Symposium on Theory of Computing
  (STOC 2013)}, pages 891--900, 2013.

\bibitem[AS04]{AS04}
Scott Aaronson and Yaoyun Shi.
\newblock Quantum lower bounds for the collision and the element distinctness
  problems.
\newblock {\em Journal of the ACM}, 51(4):595--605, July 2004.

\bibitem[BBBV97]{BBBV97}
Charles~H. Bennett, Ethan Bernstein, Gilles Brassard, and Umesh Vazirani.
\newblock Strengths and weaknesses of quantum computing.
\newblock {\em SIAM Journal on Computing (special issue on quantum computing)},
  26:1510--1523, 1997.

\bibitem[BBC{\etalchar{+}}01]{BBC+01}
Robert Beals, Harry Buhrman, Richard Cleve, Michele Mosca, and Ronald de~Wolf.
\newblock Quantum lower bounds by polynomials.
\newblock {\em Journal of the ACM}, 48(4):778--797, 2001.

\bibitem[BdW02]{BdW02}
Harry Buhrman and Ronald de~Wolf.
\newblock Complexity measures and decision tree complexity: a survey.
\newblock {\em Theoretical Computer Science}, 288(1):21 -- 43, 2002.

\bibitem[BHMT02]{BHMT02}
Gilles Brassard, Peter H{\o}yer, Michele Mosca, and Alain Tapp.
\newblock Quantum amplitude amplification and estimation.
\newblock In {\em Quantum computation and information}, volume 305 of {\em
  Contemporary Mathematics}, pages 53--74. AMS, 2002.

\bibitem[BNRdW07]{BNRdW07}
Harry Buhrman, Ilan Newman, Hein Rohrig, and Ronald de~Wolf.
\newblock Robust polynomials and quantum algorithms.
\newblock {\em Theory of Computing Systems}, 40(4):379--395, 2007.

\bibitem[B{\v S}13]{BS13}
Aleksandrs Belovs and Robert {\v S}palek.
\newblock Adversary lower bound for the k-sum problem.
\newblock In {\em Proceedings of the 4th Conference on Innovations in
  Theoretical Computer Science}, ITCS '13, pages 323--328, 2013.

\bibitem[BT15]{BT15}
Mark Bun and Justin Thaler.
\newblock Hardness amplification and the approximate degree of constant-depth
  circuits.
\newblock In {\em Automata, Languages, and Programming}, volume 9134 of {\em
  Lecture Notes in Computer Science}, pages 268--280. 2015.

\bibitem[CEMM98]{CEMM98}
Richard Cleve, Artur Ekert, Chiara Macchiavello, and Michele Mosca.
\newblock Quantum algorithms revisited.
\newblock {\em Proceedings of the Royal Society of London A: Mathematical,
  Physical and Engineering Sciences}, 454(1969):339--354, 1998.

\bibitem[DJ92]{DJ92}
David Deutsch and Richard Jozsa.
\newblock Rapid solution of problems by quantum computation.
\newblock {\em Proceedings of the Royal Society of London A: Mathematical,
  Physical and Engineering Sciences}, 439(1907):553--558, 1992.

\bibitem[GJPW15]{GJPW15}
Mika G{\"o}{\"o}s, T.S. Jayram, Toniann Pitassi, and Thomas Watson.
\newblock Randomized communication vs. partition number.
\newblock {\em Electronic Colloquium on Computational Complexity (ECCC)
  \href{http://eccc.hpi-web.de/report/2015/169/}{TR15-169}}, 2015.

\bibitem[GPW15]{GPW15}
Mika G{\"o}{\"o}s, Toniann Pitassi, and Thomas Watson.
\newblock Deterministic communication vs. partition number.
\newblock {\em Electronic Colloquium on Computational Complexity (ECCC)
  \href{http://eccc.hpi-web.de/report/2015/050/}{TR15-050}}, 2015.

\bibitem[Gro96]{Gro96}
Lov~K. Grover.
\newblock A fast quantum mechanical algorithm for database search.
\newblock In {\em Proceedings of the 28th ACM Symposium on Theory of Computing
  (STOC 1996)}, pages 212--219, 1996.

\bibitem[GSS13]{GSS13}
Justin Gilmer, Michael Saks, and Srikanth Srinivasan.
\newblock Composition limits and separating examples for some {B}oolean
  function complexity measures.
\newblock In {\em Proceedings of 2013 IEEE Conference on Computational
  Complexity (CCC 2013)}, pages 185--196, June 2013.

\bibitem[HL{\v S}07]{HLS07}
Peter H{\o}yer, Troy Lee, and Robert {\v S}palek.
\newblock Negative weights make adversaries stronger.
\newblock In {\em Proceedings of the 39th ACM Symposium on Theory of Computing
  (STOC 2007)}, pages 526--535, 2007.

\bibitem[HMdW03]{HMdW03}
Peter H{\o}yer, Michele Mosca, and Ronald de~Wolf.
\newblock Quantum search on bounded-error inputs.
\newblock In {\em Automata, Languages and Programming}, volume 2719 of {\em
  Lecture Notes in Computer Science}, pages 291--299. 2003.

\bibitem[JK10]{JK10}
Rahul Jain and Hartmut Klauck.
\newblock The partition bound for classical communication complexity and query
  complexity.
\newblock In {\em Proceedings of the 2010 IEEE 25th Annual Conference on
  Computational Complexity}, CCC '10, pages 247--258, 2010.

\bibitem[Kim12]{Kim12}
Shelby Kimmel.
\newblock Quantum adversary (upper) bound.
\newblock In {\em Automata, Languages, and Programming}, volume 7391 of {\em
  Lecture Notes in Computer Science}, pages 557--568. 2012.

\bibitem[KT13]{KT13}
Raghav Kulkarni and Avishay Tal.
\newblock On fractional block sensitivity.
\newblock {\em Electronic Colloquium on Computational Complexity (ECCC)
  \href{http://eccc.hpi-web.de/report/2013/168/}{TR13-168}}, 2013.

\bibitem[LG06]{LeG06}
Fran\c{c}ois Le~Gall.
\newblock Exponential separation of quantum and classical online space
  complexity.
\newblock In {\em Proceedings of the Eighteenth Annual ACM Symposium on
  Parallelism in Algorithms and Architectures}, SPAA '06, pages 67--73, New
  York, NY, USA, 2006. ACM.

\bibitem[LMR{\etalchar{+}}11]{LMR+11}
Troy Lee, Rajat Mittal, Ben~W. Reichardt, Robert {\v S}palek, and Mario
  Szegedy.
\newblock Quantum query complexity of state conversion.
\newblock In {\em Proceedings of the 52nd IEEE Symposium on Foundations of
  Computer Science (FOCS 2011)}, pages 344--353, 2011.

\bibitem[LR13]{LR13}
Troy Lee and J\'er\'emie Roland.
\newblock A strong direct product theorem for quantum query complexity.
\newblock {\em computational complexity}, 22(2):429--462, 2013.

\bibitem[Mid04]{Mid04}
Gatis Midrijanis.
\newblock Exact quantum query complexity for total {B}oolean functions.
\newblock {\em arXiv preprint \eprint{arXiv:quant-ph/0403168}}, 2004.

\bibitem[Nis91]{Nis91}
Noam Nisan.
\newblock {CREW} {PRAMs} and decision trees.
\newblock {\em SIAM Journal on Computing}, 20(6):999--1007, 1991.

\bibitem[NS95]{NS95}
Noam Nisan and Mario Szegedy.
\newblock On the degree of {B}oolean functions as real polynomials.
\newblock {\em Computational Complexity}, 15(4):557--565, 1995.

\bibitem[NW95]{NW95}
Noam Nisan and Avi Wigderson.
\newblock On rank vs. communication complexity.
\newblock {\em Combinatorica}, 15(4):557--565, 1995.

\bibitem[Rei11]{Rei11}
Ben~W. Reichardt.
\newblock Reflections for quantum query algorithms.
\newblock In {\em Proceedings of the 22nd ACM-SIAM Symposium on Discrete
  Algorithms (SODA 2011)}, SODA '11, pages 560--569, 2011.

\bibitem[San95]{San95}
Miklos Santha.
\newblock On the {M}onte {C}arlo {B}oolean decision tree complexity of
  read-once formulae.
\newblock {\em Random Structures \& Algorithms}, 6(1):75--87, 1995.

\bibitem[She13]{She13}
Alexander~A. Sherstov.
\newblock Approximating the {AND-OR} tree.
\newblock {\em Theory of Computing}, 9(20):653--663, 2013.

\bibitem[Sho97]{Sho97}
Peter~W. Shor.
\newblock Polynomial-time algorithms for prime factorization and discrete
  logarithms on a quantum computer.
\newblock {\em SIAM Journal on Computing}, 26(5):1484--1509, 1997.

\bibitem[Sim97]{Sim97}
Daniel~R. Simon.
\newblock On the power of quantum computation.
\newblock {\em SIAM Journal on Computing}, 26(5):1474--1483, 1997.

\bibitem[{\v S}S06]{SS06}
Robert {\v S}palek and Mario Szegedy.
\newblock All quantum adversary methods are equivalent.
\newblock {\em Theory of Computing}, 2(1):1--18, 2006.

\bibitem[SW86]{SW86}
Michael Saks and Avi Wigderson.
\newblock Probabilistic {B}oolean decision trees and the complexity of
  evaluating game trees.
\newblock In {\em Proceedings of the 27th IEEE Symposium on Foundations of
  Computer Science (FOCS 1986)}, pages 29--38, 1986.

\bibitem[Tal13]{Tal13}
Avishay Tal.
\newblock Properties and applications of {B}oolean function composition.
\newblock In {\em Proceedings of the 4th Conference on Innovations in
  Theoretical Computer Science}, ITCS '13, pages 441--454, 2013.

\end{thebibliography}

\appendix

\section{Quantum query complexity of \texorpdfstring{$\ksum$}{k-sum}}
\label{app:ksum}

Belovs and \v{S}palek \cite{BS13} proved a lower bound
of $\Omega(n^{k/(k+1)})$  on the quantum query complexity of $\ksum$
for constant $k$,
as long as the alphabet is large enough. \ However,
in their result, Belovs and \v{S}palek
do not keep track of the dependence on $k$. \
We trace their proof to determine
this dependence, ultimately proving the following theorem.
\begin{theorem}
\label{thm:ksum}
    For the function $\ksum:[q]^n \to \B$, if $|q| \geq 2{\binom{n}{k}}$, we have
    \begin{equation}Q(\ksum)=\Omega(n^{k/(k+1)}/\sqrt{k}),\end{equation}
    where the constant in the $\Omega$ is independent of $k$.
\end{theorem}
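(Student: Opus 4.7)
The plan is to audit the quantum adversary lower bound proof of Belovs and \v{S}palek~\cite{BS13} for $\ksum$, which establishes $\Adv(\ksum) = \Omega(n^{k/(k+1)})$ for constant $k$, and keep careful track of how each constant depends on $k$ so as to recover the claimed $1/\sqrt{k}$ factor. Their argument uses the general adversary method, constructing an adversary matrix $\Gamma$ from a hard distribution on $0$-inputs (``generic'' $x \in [q]^n$ with no $k$-subset summing to $0 \pmod q$) and $1$-inputs (obtained from a $0$-input by modifying one coordinate so that exactly one $k$-subset sums to $0 \pmod q$). As a warm-up I would verify that $q \geq 2\binom{n}{k}$ suffices to make the hard distribution well-defined: a union bound shows that a uniformly random input is a $0$-input with probability at least $1/2$, since each fixed $k$-subset sums to $0$ with probability $1/q$ and there are $\binom{n}{k}$ subsets in total.

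Next I would re-derive the two key quantities appearing in $\Adv \geq \|\Gamma\|/\max_i \|\Gamma_i\|$. The Belovs--\v{S}palek construction phrases the lower bound in the language of the learning graph dual; the spectral norm $\|\Gamma\|$ arises from counting pairs $(x,y)$ of $0$- and $1$-inputs differing in a single ``activating'' coordinate inside a sensitive $k$-subset, while $\|\Gamma_i\|$ restricts to pairs whose sensitive subset contains index $i$. Both counts are governed by binomial coefficients $\binom{n}{j}$ and $\binom{k}{j}$ for $0 \leq j \leq k$, and their ratios yield the overall $n^{k/(k+1)}$ scaling via the usual balancing that minimizes the learning graph complexity across the $k$ loading stages. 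I would carry through these counts carefully, applying Stirling's approximation with explicit constants rather than absorbing them into $O(\cdot)$, and verifying that no intermediate step introduces more than polynomial-in-$k$ overhead.

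The main obstacle is locating exactly which step contributes the $\sqrt{k}$ loss. I expect it to emerge from a single Cauchy--Schwarz step applied while bounding $\|\Gamma_i\|$ across the $k$ possible stages of the learning graph flow, where a geometric-mean-type estimate on stage-by-stage contributions produces $\sqrt{k}$ instead of $k$. All other $k$-dependent factors in the Belovs--\v{S}palek construction, such as the normalization of the principal eigenvector of $\Gamma$ and the counts of sensitive subsets passing through a fixed index, should either cancel in the ratio $\|\Gamma\|/\max_i\|\Gamma_i\|$ or contribute at most subpolynomial factors dominated by $\sqrt{k}$. Combining these bookkeeping observations yields $\Adv(\ksum) = \Omega(n^{k/(k+1)}/\sqrt{k})$ with a universal hidden constant, and the theorem then follows from the standard fact $Q(f) = \Omega(\Adv(f))$.
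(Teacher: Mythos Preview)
Your high-level strategy matches the paper's: trace the Belovs--\v{S}palek adversary proof and keep the $k$-dependence explicit. But your execution plan diverges from what actually needs to be done, and in one place your speculation is off.

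The Belovs--\v{S}palek adversary matrix is not determined by a single hard distribution and pair-counting in the way you describe; it is a parametrized construction depending on real weights $\alpha_0,\alpha_1,\dots,\alpha_{n-k}$. The two quantities you would need from their paper are, without $\Omega(\cdot)$ hiding anything,
\[
\|\Gamma\|\ \ge\ \frac{\alpha_0}{2}\sqrt{\binom{n}{k}}
\quad\text{(when }q\ge\tfrac{4}{3}\binom{n}{k}\text{)},
\qquad
\|\Gamma\circ\Delta_1\|^2\ \le\ 4\max_m \alpha_m^2\binom{m+k-1}{k-1}+4k\binom{n-1}{k}\max_m(\alpha_m-\alpha_{m+1})^2.
\]
The non-trivial work, which your plan does not mention, is then to \emph{choose} the $\alpha_m$. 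The paper takes them to be an arithmetic progression down to zero with cutoff at $\beta\approx n^{k/(k+1)}/3$, and then carries out a short calculus optimization (balancing the two terms above) to get $\Adv(\ksum)^2=\Omega(n^{2k/(k+1)}/k)$. Simply ``re-deriving $\|\Gamma\|$ and $\|\Gamma_i\|$ via counting'' will not get you here; the weighted structure and the parameter choice are essential, and your description reads more like the positive-weights adversary, which cannot give a super-$\sqrt{n}$ bound for a function with $C_1=k$.

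On the origin of the $\sqrt{k}$: it is not a Cauchy--Schwarz step across $k$ learning-graph stages. The factor $k$ is already sitting explicitly in the $\|\Gamma\circ\Delta_1\|^2$ upper bound above (in front of $\binom{n-1}{k}$), inherited directly from Belovs--\v{S}palek's treatment of the $\mathcal{S}_2$ piece of their decomposition. After balancing and taking a square root, that $k$ becomes the $1/\sqrt{k}$ in the final bound. So the bookkeeping is simpler than you anticipate, but it hinges on identifying and optimizing over the $\alpha_m$ parameters, which your proposal omits.
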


\begin{proof}
We proceed by tracing Belovs and \v{S}palek's
proof, which can be found in section $3$
of \cite{BS13}, to extract an unsimplified lower bound. \
We then tune some parameters to determine the best
dependence on $k$.

The proof in \cite{BS13}
proceeds by the generalized adversary method. \
Belovs and \v{S}palek construct an adversary matrix
$\Gamma$, which depends on parameters $\alpha_m$ for
$m=0,1,\dots,n-k$. \
This construction can be found in Section $3.1$ of their
paper. \
They then show a lower bound on $\|\Gamma\|$ and an upper
bound on $\|\Gamma\circ\Delta_1\|$
in terms of the parameters $\alpha_m$
(the upper bounds on $\|\Gamma\circ\Delta_i\|$
for $i\neq 1$ follow by symmetry). \ Finally, they
pick the values for the parameters that give the best
lower bound.

A lower bound on $\|\Gamma\|$ without $\Omega$ notation,
which therefore does not ignore factors of $k$, can be
found in Section $3.6$ of their paper. \ In that section, they
show
\begin{equation}
\|\Gamma\|\geq \alpha_0\sqrt{{\binom{n}{k}}
\left(1-\frac{1}{q}{\binom{n}{k}}\right)},
\end{equation}
where $q$ is the alphabet size. \
When $q\geq \frac{4}{3}{\binom{n}{k}}$, this gives
\begin{equation}
\|\Gamma\|\geq\frac{\alpha_0}{2}\sqrt{{\binom{n}{k}}}.
\end{equation}

For the upper bound on $\|\Gamma\circ\Delta_1\|$,
it turns out to be more
convenient to switch to a different matrix, $\tilde{\Gamma}$,
which satisfies
$\|\Gamma\circ\Delta_1\|\leq\|\tilde{\Gamma}\circ\Delta_1\|$. \
This is explained in the beginning of Section $3$ of their paper and in
Section $3.1$. \ To upper bound
$\|\tilde{\Gamma}\circ\Delta_1\|$, in Section $3.3$
the authors switch to yet another matrix, $\tilde{\Gamma}_1$,
with the property that
$\tilde{\Gamma}_1\circ\Delta_1=\tilde{\Gamma}\circ\Delta_1$
and that
$\|\tilde{\Gamma}_1\circ\Delta_1\|\leq2\|\tilde{\Gamma}_1\|$. \
This gives
$\|\Gamma\circ\Delta_1\|\leq2\|\tilde{\Gamma}_1\|$,
so it suffices to upper bound $\|\tilde{\Gamma}_1\|$.

In Section $3.4$, the authors express
$\|\tilde{\Gamma}_1\|^2$ as the norm of a sum of matrices,
with the sum ranging over a set $\mathcal{S}$. \
They split $\mathcal{S}=\mathcal{S}_1\cup\mathcal{S}_2$;
by the triangle inequality, it then suffices to separately
upper bound the norm of the sum over $\mathcal{S}_1$
and the norm of the sum over $\mathcal{S}_2$. \
The former is upper bounded by
$\max_m\alpha_m^2{m+k-1\choose k-1}$,
and the latter by
$k{n-1\choose k}\max_m(\alpha_m-\alpha_{m+1})^2$. \
We therefore conclude that
\begin{equation}
\|\Gamma\circ\Delta_1\|^2\leq
4\max_m\alpha_m^2{m+k-1\choose k-1} +
4k{n-1\choose k}\max_m(\alpha_m-\alpha_{m+1})^2.
\end{equation}

It remains to pick values for $\alpha_m$ to optimize
the resulting adversary bound. \ There are no
restrictions on the $\alpha_m$ parameters. \
To maximize the adversary bound,
we want $\alpha_0$ to be large, consecutive $\alpha$'s
to be close (to minimize $\max_m(\alpha_m-\alpha_{m+1})^2$),
and $\alpha_m$ for large $m$ to be small
(to minimize $\max_m\alpha_m^2{m+k-1\choose k-1}$). \
We choose to make the $\alpha_m$ parameters
decrease to $0$ in an arithmetic sequence,
with $\alpha_m-\alpha_{m+1}=c>0$
for all $m\leq\alpha_0/c$, and $\alpha_m=0$ for all
$m\geq\alpha_0/c$. \ Let $\beta=\alpha_0/c$. \ Then
$\alpha_m=\alpha_0-cm$ for $m\leq\beta$.

We can write the final adversary bound as follows:
\begin{equation}\Adv(\ksum)^2=\Omega\left(
    \frac{\alpha_0^2{\binom{n}{k}}}
    {\max_{m\leq\beta}(\alpha_0-cm)^2{m+k-1\choose k-1}+
    k{n-1\choose k}c^2}\right)\end{equation}
\begin{equation}=\Omega\left(\min\left\{
    \frac{{\binom{n}{k}}}
    {\max_{m\leq\beta}(1-m/\beta)^2{m+k-1\choose k-1}},
    \frac{{\binom{n}{k}}}{k{n-1\choose k}}\beta^2
    \right\}\right)\end{equation}
The second term in the minimization simplifies to
$\frac{n}{k(n-k)}\beta^2$, which is $\Omega(\beta^2/k)$.
To deal with the first term,
we use the well-known inequalities
$(n/k)^k\leq{\binom{n}{k}}\leq(en/k)^k$,
which hold for all $n$ and $k$. \ This gives
\begin{equation}\Adv(\ksum)^2=\Omega\left(\min\left\{\frac{n^k}
    {(k-1)\max_{m\leq\beta}(1-m/\beta)^2(e(m+k-1))^{k-1}}
    \left(\frac{k-1}{k}\right)^k,
    \frac{\beta^2}{k}\right\}\right)\end{equation}
\begin{equation}=\Omega\left(\min\left\{\frac{n^k}
    {k\max_{m\leq\beta}(1-m/\beta)^2(e(m+k))^{k-1}},
    \frac{\beta^2}{k}\right\}\right).\end{equation}
We can solve the maximization in the denominator using
calculus. \ The unique maximum occurs at
$m=\beta-\left(\frac{2}{k+1}\right)(\beta-k)$. \
Substituting this in and simplifying, we get
\begin{equation}\Adv(\ksum)^2=\Omega\left(\min\left\{
    \frac{k\beta}{(1-k/\beta)^2}
    \left(\frac{n}{e(\beta+3)}\right)^k,
    \frac{\beta^2}{k}\right\}\right)
    =\Omega\left(\min\left\{
    k\beta\left(\frac{n}{3\beta}\right)^k,
    \frac{\beta^2}{k}\right\}\right),\end{equation}
where for the last equality we assumed $\beta\geq 2k$ and
$\beta\geq 3e/(3-e)$.

Finally, we set $\beta=(1/3)n^{k/(k+1)}$. \ This gives
\begin{equation}\Adv(\ksum)^2=\Omega\left(\min\left\{kn^{k/(k+1)}
    \left(\frac{n}{n^{k/(k+1)}}\right)^k,
    \frac{n^{2k/(k+1)}}{k}\right\}\right)
    =\Omega(n^{2k/(k+1)}/k),\end{equation}
so
\begin{equation}Q(\ksum)=\Omega(\Adv(\ksum))
    =\Omega\left(n^{k/(k+1)}\middle/\sqrt{k}\right).\end{equation}

Note that we assumed $\beta\geq 2k$. \
Since $\beta=n^{k/(k+1)}/3$,
we need $n^{k/(k+1)}\geq6k$, or $n\geq(6k)^{1+1/k}$. \
Since
$(6k)^{1/k}=\exp(\ln(6k)/k)=1+\Theta(\log(k)/k)$,
it suffices to have $n\geq 6k+\omega(\log k)$. \
In particular, this bound works as long as we have
$k\leq n/10$. \ When $k\geq n/10$,
we can directly prove a lower
bound of $\Omega(\sqrt{n})$ by a reduction from Grover
search; this means there are no restrictions on the size
of $n$ and $k$ in this result.
\end{proof}

\section{Measures that behave curiously with cheat sheets}
\label{sec:counter}
\label{app:curious}

In this appendix we show that $R_1$ and $Q_1$ can behave
strangely on cheat sheet functions, potentially
decreasing from $R_1(f)$ to $R(f)$ and from $Q_1(f)$ to
$Q(f)$.

\begin{theorem}\label{thm:R1}
Let $f:D\to\B$ be a partial function, where $D\subseteq [M]^n$,
and let $\phi$ be a certifying function for $f$. \
Then
\begin{equation}
R_1(f_\CSp)=\tO(R(f)+R_0(\phi))
\qquad \text{and} \qquad
Q_1(f_\CSp)=\tO(Q(f)+Q_0(\phi)).
\end{equation}
\end{theorem}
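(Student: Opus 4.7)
The plan is to build a one-sided-error algorithm for $f_\CSp$ that uses a \emph{two-sided} bounded-error subroutine for $f$ together with a zero-error subroutine for $\phi$, bridged by a cheap consistency check between the computed cell index and the first bits of the certificates it contains. For the randomized claim the algorithm will be: on input $z=(x^1,\ldots,x^c,Y_1,\ldots,Y_{2^c})$, (i) for each $i\in[c]$ run a bounded-error algorithm for $f(x^i)$, amplified to error at most $1/(6c)$, producing guesses $v^i$; set $\ell':=\mathrm{binary}(v^1,\ldots,v^c)$ and read the blocks $Y_{\ell'}=(y^1,\ldots,y^c)$; (ii) verify that $y^i_1=v^i$ for every $i$; (iii) for each $i$ run a zero-error algorithm for $\phi(x^i,y^i)$, amplified so that the ``$*$''-probability per call is at most $1/(6c)$, and output $0$ if any call returns $0$ or $*$; (iv) output $1$ otherwise. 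The quantum version is identical with $R,R_0$ replaced by $Q,Q_0$ and with quantum amplification in place of classical amplification.

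The key step is verifying one-sidedness: I would argue that whenever the algorithm outputs $1$, we must have $f_\CSp(z)=1$. Indeed, a zero-error algorithm never falsely claims $1$, so outputting $1$ implies $\phi(x^i,y^i)=1$ for all $i$. By clauses (1) and (2) of the definition of a certifying function, this forces $x^i\in D$ (so Condition (1) of $f_\CSp$ holds) and $y^i_1=f(x^i)$ for every $i$. The consistency test in (ii) then gives $v^i=y^i_1=f(x^i)$, so $\ell'=\ell$ is exactly the true cheat-sheet index, and the cell $Y_{\ell'}=Y_\ell$ we just verified makes Condition (2) of $f_\CSp$ hold as well. For the converse direction, on a $1$-input of $f_\CSp$ a union bound over the $2c$ amplified subroutine calls shows that all of them succeed with probability at least $2/3$, in which case the algorithm outputs $1$. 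The query cost is $\tO(R(f))$ for (i), $O(c)=\tO(1)$ for (ii), and $\tO(R_0(\phi))$ for (iii), proving the randomized bound; the quantum bound follows verbatim.

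The main conceptual obstacle, and the reason the naive algorithm (omit step (ii)) does not work, is that the adversary may place perfectly valid certificates at an incorrect index: a cell $Y_{\ell'}$ with $\ell'\neq\ell$ can satisfy $\phi(x^i,y^i)=1$ for every $i$ while the cell at the true index $\ell$ is invalid, giving $f_\CSp(z)=0$. If an erroneous bounded-error call produced $\ell'$ pointing to such a misplaced cell, the simpler algorithm would incorrectly output $1$. Inserting the consistency check $y^i_1=v^i$ fixes this precisely because a valid $\phi$-certificate forces $y^i_1=f(x^i)$, and combining this with $y^i_1=v^i$ pins down $\ell'=\ell$; everything else in the argument is routine error-reduction. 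I do not expect any other difficulty, and in particular no direct-product-style reasoning is needed here, since the ``hard'' direction is the upper bound.
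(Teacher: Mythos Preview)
Your proposal is correct and follows the same strategy as the paper: compute the $c$ copies of $f$ with bounded error to locate a cell, then run the zero-error $\phi$ algorithm on the $(x^i,y^i)$ pairs in that cell and accept only if everything checks out. The paper phrases the one-sidedness guarantee as ``output $1$ if and only if a $1$-certificate for $f_\CSp$ has been found,'' leaving implicit the observation that the $\phi$-certificates returned by the zero-error subroutine pin down $y^i_1=f(x^i)$ and hence the correct index $\ell$, so that they constitute a $1$-certificate for $f_\CSp$ only when they were read from cell $\ell$ itself. Your explicit step (ii), comparing $y^i_1$ with the guesses $v^i$, is exactly this location check unpacked; you have correctly identified the subtlety (valid certificates planted at a wrong index could fool the naive algorithm) that the paper's more compressed wording sweeps into the phrase ``finds a $1$-certificate.'' The two arguments are thus equivalent, with yours being the more explicit of the two.
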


\begin{proof}
We show a randomized (and quantum) algorithm
for $f_\CSp$ that uses the required number of queries
and finds a $1$-certificate with constant probability. \
Such an algorithm could be made to have one-sided
error by outputting $1$  only if it finds a
$1$-certificate.

The algorithm works as follows. \ First, use the randomized
(resp.\ quantum) algorithm on the $c$ inputs to $f$
(with some amplification), to determine the number $\ell$
of the correct cheat sheet with high probability
(assuming the promises of $f$ all hold). \
Next, go to the \th{\ell} cheat sheet, and use a zero-error
algorithm to evaluate
$\phi(x_i,y_i)$ for $i=1,2,\dots, c$, where $x_i$
are the inputs to $f$ and $y_i$ are the cheat sheet
strings. \ This finds certificates for each input to $\phi$.

Now, if the value of the function is $1$ on the given
input, then with constant probability, all the $c$
certificates found should be $1$-certificates for $\phi$. \
These $1$-certificates certify the value of the inputs
to $f$. \ Therefore, taken together, they constitute
a valid $1$-certificate for $f_\CSp$. \
It follows that this algorithm uses
$\tO(R(f)+R_0(\phi))$
randomized queries and finds a $1$-certificate
with constant probability. \ The result for quantum
algorithms follows similarly.
\end{proof}

We now show that it is possible for $R_1(f_\CSp)$ to be much smaller than $R_1(f)$, unlike some of the other measures studied. \ 
Intuitively, this is because $1$-inputs of $f_\CSp$ contains a cheat sheet with a certificate and hence even if the algorithm is not sure of its answer before finding the cheat sheet, the cheat sheet may convince the algorithm of its answer.

\begin{theorem}
\label{thm:R1lowercounter}
There is a total function $f:\B^n\to\B$ and a certifying
function $\phi$ for $f$ such that
$R_1(f_\CSp)=\tO(\sqrt{R_1(f)})$.
\end{theorem}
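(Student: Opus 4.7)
The plan is to apply Theorem~\ref{thm:R1}, which gives $R_1(f_\CSp) = \tO(R(f) + R_0(\phi))$, so it suffices to construct a total $f$ and certifying function $\phi$ satisfying $R(f) + R_0(\phi) = \tO(\sqrt{R_1(f)})$. In particular, we need a total function $f$ exhibiting a near-quadratic gap between $R_1(f)$ and $R(f)$, together with certificates verifiable in zero-error time $\tO(\sqrt{R_1(f)})$.

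The construction is to take $f$ itself to be a cheat sheet function, $f = h_{\CS(\phi_h)}$, for an auxiliary total function $h$ having a large $R_0/R$ gap. By~\cite{ABB+15}, there is a total Boolean function $h$ with $R_0(h) = \tOmega(R(h)^2)$ and moderate certificate complexity $C(h) = \tO(R(h))$. Let $\phi_h$ be the natural certifying function for $h$, so $R(\phi_h), R_0(\phi_h) = \tO(C(h)) = \tO(R(h))$. By Theorem~\ref{thm:cs}, $R(f) = \tTheta(R(h) + R(\phi_h)) = \tO(R(h))$. For one-sided error, however, an algorithm for $f$ must on appropriate hard inputs identify the correct cheat-sheet location $\ell$ with certainty and verify the invalidity of its contents, which in turn requires zero-error computation of the $c$ copies of $h$; this forces $R_1(f) = \tOmega(R_0(h)) = \tOmega(R(h)^2)$. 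Taking $\phi$ to be the natural certifying function for $f$ that exhibits pointers to the $c$ sub-certificates of $\phi_h$ inside the correct cheat-sheet cell, we obtain $R_0(\phi) = \tO(c \cdot C(\phi_h)) = \tO(R(h))$.

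Combining these, $R(f) + R_0(\phi) = \tO(R(h)) = \tO(\sqrt{R_1(f)})$, and Theorem~\ref{thm:R1} delivers the claimed bound $R_1(f_\CSp) = \tO(\sqrt{R_1(f)})$. The chief obstacle is rigorously establishing the lower bound $R_1(h_{\CS(\phi_h)}) = \tOmega(R_0(h))$: the generic arguments of Section~\ref{sec:lower} yield only $R_1(f) \geq R_1(\phi_h)$, which is too weak. The stronger bound requires a hybrid or adversary-style argument adapted for one-sided error, showing that on a carefully chosen input whose cheat-sheet cell is initialized to an invalid (e.g., all-zero) pattern, any one-sided-error algorithm must with certainty determine which cell $\ell$ is supposed to be valid and verify that it is not, forcing it to evaluate each of the $c$ inner $h$ instances with zero error.
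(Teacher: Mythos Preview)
Your approach has a fatal flaw: the lower bound $R_1(h_{\CS(\phi_h)})=\tOmega(R_0(h))$ that you identify as the ``chief obstacle'' is not merely unproven, it is false. Indeed, \thm{R1} applied to $h$ and $\phi_h$ gives
\[
R_1(h_{\CS(\phi_h)})=\tO\bigl(R(h)+R_0(\phi_h)\bigr)=\tO(R(h)),
\]
since you have chosen $\phi_h$ so that $R_0(\phi_h)=\tO(C(h))=\tO(R(h))$. So $R_1(f)=\tO(R(h))$, not $\tOmega(R(h)^2)$, and the whole construction collapses. Intuitively, a one-sided-error algorithm for a cheat sheet function need not determine the address $\ell$ with certainty: it guesses $\ell$ with bounded error, reads the certificates there, and outputs $1$ only if they check out. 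That is precisely the content of \thm{R1}, so invoking it twice cannot produce the gap you want.

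The paper avoids this entirely by a much simpler device. It takes the function $g$ from \cite{ABB+15} with $R_0(g)=\tOmega(n^2)$, $R_1(g)=\tO(n)$, and $C(g)=\tO(n)$, and sets $f(x,b)=g(x)\oplus b$. Fixing $b=0$ shows $R_1(f)\geq R_1(g)$; fixing $b=1$ shows $R_1(f)\geq R_1(\bar g)$. Since an $R_1$ algorithm for $g$ together with an $R_1$ algorithm for $\bar g$ yields a zero-error algorithm for $g$, one has $R_1(g)+R_1(\bar g)=\Omega(R_0(g))$, and hence $R_1(f)=\tOmega(n^2)$. Meanwhile $R(f)=R(g)\leq R_1(g)=\tO(n)$ and $C(f)=\tO(n)$, so the natural certifying function $\phi$ has $R_0(\phi)=\tO(n)$. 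A single application of \thm{R1} then finishes the proof.
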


\begin{proof}
We can use \thm{R1} to construct an
explicit function $f$ and a certifying function $\phi$
for $f$ such that $R_1(f_\CSp)$ is smaller
than $R_1(f)$. \ The construction is as follows. \
In \cite{ABB+15}, a Boolean
function was constructed that has
$R_0=\tOmega(n^2)$ and $R_1=\tO(n)$. \
This function has certificate complexity roughly
equal to $n$. \
Now, by taking the \textsc{Xor} of this function
with an additional bit, we get a function $f$
for which $R_1$ is roughly $n^2$ but $R$ and $C$ are still
roughly $n$.

We define $\phi$ to be a certifying
function that simply checks if $y$ provides pointers
to a certificate for $f$. \ Then $R_0(\phi)=\tO(n)$.
It follows that $R_1(f_\CSp)=\tO(n)$,
while $R_1(f)=\tOmega(n^2)$.
\end{proof}

Lastly we show that even zero-error randomized query complexity behaves curiously with cheat sheets. \ 
We might have expected that the obvious upper bound $R_0(f_\CSp) = \tO(R_0(f) + R_0(\phi))$ holds, but in fact it does not. \ 
The reason is subtle and relates to the behavior of the zero-error algorithm on inputs outside of the domain $D$. \ 
For a partial function $f:D\to \B$, a zero-error algorithm's behavior is not constrained on inputs outside the domain $D$. \
The algorithm could, for example, run forever on such inputs. \ 
The obvious algorithm, which simply runs the zero-error algorithm for $f$ on all $c$ inputs to $f$ now fails to output any answer on $0$-inputs to $f_\CSp$ in which the inputs to $f$ do not lie in $D$. \ 
We exploit this observation to prove the following counterexample. \ 

\begin{theorem}
\label{thm:R0counter}
There is a partial function $f:D\to\B$, where $D\subseteq \B^n$, and a certifying
function $\phi$ for $f$ such that
$R_0(f_\CSp)=\tOmega(R_0(f)^{3/2}+R_0(\phi)^{3/2})$.
\end{theorem}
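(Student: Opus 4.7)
The plan is to construct a partial function $f:D\to\B$ and a certifying function $\phi$ witnessing that $R_0(f_\CSp)$ can be polynomially larger than $R_0(f)+R_0(\phi)$. The key observation, made explicit in the discussion preceding the theorem, is that for a partial $f$ a zero-error algorithm's behavior on $x\notin D$ is unconstrained and may even diverge, so it cannot safely be invoked as a subroutine on un-verified inputs inside the naive cheat-sheet algorithm. Any valid zero-error algorithm for $f_\CSp$ must therefore either use $\phi$ on cheat-sheet entries (which requires knowing or guessing the address $\ell$) or else pay the cost of an explicit domain-test for $D$ that has no hint to exploit.

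I would first choose $f$ and $\phi$ so that $R_0(f)$ and $R_0(\phi)$ both sit on the order of some parameter $T$, while ``verifying $x\notin D$ without any supplied certificate'' is intrinsically much harder than $T$. The gap between hint-aided verification (the function $\phi$, cheap) and unaided domain-testing (expensive) is the engine of the counterexample; a natural starting template is a promise $D$ with a sparse acceptance witness, so that $x\in D$ can be verified locally given a pointer but finding such a pointer or refuting its existence requires searching. I would then identify an adversarial distribution over $0$-inputs of $f_\CSp$ on which the only route to a $0$-certificate is to directly refute $x^j\in D$ for some hidden index $j$: one copy $x^j$ is subtly off-promise while every other $x^i$ lies in $D$, and the would-be cheat sheet at the address $\ell$ implied by the natural guesses of the outputs passes $\phi$-verification on every good copy. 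On such inputs no valid cheat sheet exists (because $\phi(x^j,y^j)=0$ for every $y^j$), so the algorithm must produce direct evidence that $x^j\notin D$ without the benefit of a hint.

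The lower bound itself would follow from an adversary-style argument in the spirit of \lem{Rlowerbound}: combining the search cost of locating the bad index $j$ among the $c$ candidates with the per-copy cost of unaided promise refutation, and showing that these two subtasks must be paid multiplicatively rather than additively, yielding $\tOmega(T^{3/2})$ queries. The main obstacle is not this general outline but the precise engineering of $f$ and $\phi$ so that the clean $3/2$ exponent actually materializes: one must ensure that Markov-style boosting cannot bypass the search-verify tradeoff, that the asymmetry between hint-aided and unaided verification is sharp (the naive attempt of letting $f$ be something like $f(x)=x_1$ on $D=\{x_1=0\}\cup\{x_1=1,\,|x_{2:n}|=1\}$ fails because $\phi$ itself already pays the promise-refutation cost), and that both $R_0(f)$ and $R_0(\phi)$ stay pinned to the intended common scale $T$. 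Producing such an $f,\phi$ pair, likely via a tailored composed or layered promise, is the bulk of the proof's work.
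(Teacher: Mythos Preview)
Your outline has a structural gap that prevents it from reaching a polynomial separation. The ``search over $c$ copies'' mechanism cannot be the engine of the lower bound, because $c=\lceil 10\log D(f)\rceil$ is only logarithmic in the relevant parameters. Even if locating the one off-promise copy $x^j$ and refuting $x^j\in D$ had to be paid for multiplicatively, the product would be $\tO(c\cdot T)=\tO(T)$, not $\tOmega(T^{3/2})$. Moreover, in your setup the $c-1$ good copies are in $D$ and the indicated cheat sheet certifies them, so a zero-error algorithm can run the $R_0(f)$ procedure on each copy for a bounded number of steps, deduce $\ell$ up to one unknown bit, and then certify failure of $\phi$ at the two candidate cells; no unaided domain-refutation is forced.

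The paper's construction places the hardness inside a \emph{single} off-promise input rather than across copies, and in fact makes \emph{all} $c$ copies off-promise. It takes $f=g\circ\ANDOR_{m^2}$ where $g:\B^{4m}\to\B$ is the promise problem ``first half has weight $m$ and second half weight $0$, or vice versa'', so $R_0(g)=O(1)$ and $R_0(f)=O(m^2)$, and $\phi$ just checks supplied $\ANDOR$-certificates, so $R_0(\phi)=\tO(m^2)$. On the adversarial $0$-input every $\ANDOR$ instance outputs $0$, so every $x^i$ is off-promise; any $0$-certificate of reasonable size must, for almost all $i$, rule out one of the two promise types of $g$, which forces it to exhibit $\Omega(m)$ $\ANDOR$ $0$-certificates per copy. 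A reduction to $R_0(\ANDOR_{m^2})=\Omega(m^2)$ then yields $R_0(f_\CSp)=\tOmega(m\cdot m^2)=\tOmega(m^3)$. The $3/2$ exponent thus comes from the product ``(bits needed to refute the promise of $g$) $\times$ (cost to certify each such bit)'', namely $m\times m^2$, compared against $R_0(f),R_0(\phi)=\tO(m^2)$; the number of copies $c$ plays no role beyond log factors.
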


\begin{proof}
Let $g:\B^{4m}\to\B$ be the partial function
defined as follows. \ On input $(x,y)$ with $x,y\in\B^{2m}$,
define $g(x,y)=0$ if the Hamming weight of $x$ is $m$
and the Hamming weight of $y$ is $0$, and define
$g(x,y)=1$ if the Hamming weight of $y$ is $m$ and the
Hamming weight of $x$ is $0$. \ The promise of $g$ is that
one of these two conditions holds for every input
$(x,y)$.

It is easy to see that $R_0(g)=O(1)$. \ Let $f$ be the
composition of $g$ with an $m$ by $m$ $\ANDOR$,
and let $n=4m^3$. \ Then $R_0(f)=O(m^2)=O(n^{2/3})$. \
Let $\phi$ be a certifying function for $f$ that
takes an input to $f$ and an additional string,
and asserts that the latter provides
pointers to a certificate for each $\ANDOR$
instance in the input to $f$. \ This assertion can
be verified (or refuted) in $\tO(m^2)$
deterministic queries, so
$R_0(\phi)\leq D(\phi)=\tO(m^2)$.

We now show that $R_0(f_\CSp)=\tOmega(m^3)$. \
Consider an input consisting of $c$ inputs
to $f$, each of which has all its $\ANDOR$ instances
evaluate to $0$. \ Note that these inputs to $f$ do not
satisfy the promise of $f$. \ We attach a blank cheat sheet array
after these $c$ inputs. \ By definition of $f_\CSp$, the
resulting input is therefore a $0$-input to $f_\CSp$. \
However, any $0$-certificate
of it (of reasonable size) must
``partially certify'' at least half
of the $c$ inputs to $f$; that is, for at least half
the inputs to $f$, it must either prove the input
is not a $0$-input or prove it is not a $1$-input. \
The only way to do this is to display $(c/2)m$
$0$-certificates for $\ANDOR$ instances.

This means we have an algorithm that takes
in $4cm$ zero-inputs to $\ANDOR$, and finds a certificate
for at least $cm/8$ of them using $R_0(f_\CSp)$
expected queries. \ It follows that given one
input from the hard distribution over $0$-inputs to $f$,
we can find a certificate for it by
generating $4cm-1$ additional inputs from this distribution,
mixing them together, and running the $R_0(f_\CSp)$
algorithm. \ This finds a certificate with probability at
least $1/8$, and uses $R_0(f_\CSp)/cm$ expected
queries. \ By repeating the algorithm if necessary,
we find a certificate using $8R_0(f_\CSp)/cm$
expected queries.

We construct a function $f^\prime$ is the same as $f$
except with \textsc{Not-And-Or} instead of $\ANDOR$. \
Repeating the argument provides a randomized algorithm
that finds a $1$-certificate for $\ANDOR$ using
$8R_0(f^\prime_{CS(\phi^\prime)}/cm$ expected queries. \
By running both algorithms in parallel, we get
\begin{equation}
R_0(\ANDOR)\leq 8R_0(f_\CSp)/cm +8R_0(f^\prime_{\CS(\phi^\prime)})/cm.
\end{equation}
Since $R_0(\ANDOR)=\Omega(m^2)$, it follows that either
$R_0(f_\CSp)=\Omega(m^3)$
or
$R_0(f^\prime_{\CS(\phi^\prime)})=\Omega(m^3)$. \
The desired result follows.
\end{proof}

\end{document}